\else\usepackage{QED}
\newcommand*{\pars}			[1]{\left(#1\right)}
\newcommand*{\brackets}	[1]{\left[#1\right]}
\newcommand*{\braces}		[1]{\left\{#1\right\}}
\newcommand*{\angles}	[1]{\left\langle#1\right\rangle}
\newcommand*{\funArrow}{\longrightarrow}
\newcommand*\funDec[4][\funArrow]{ #2 \mathrel : #3 \mathrel{#1} #4}
\newcommand*{\inter}{\cap}
\newcommand*{\union}{\cup}
\newcommand*{\Inter}{\bigcap}
\newcommand*{\Union}{\bigcup}
\newcommand{\fullEl}[2]{#1_1,\dotsc,#1_{#2}}
\newcommand{\family}[2]{\pars{#1}_{#2}}
\newcommand{\genFam}[3]{\family{#1_#2}{#2\in#3}}
\newcommand*{\binRest}[3]{\mathchoice
              {\setbox1\hbox{${\displaystyle #1}^{\scriptstyle #3}_{\scriptstyle #2}$}
              \binRestAux{#1}{#2}{#3}}
              {\setbox1\hbox{${\textstyle #1}^{\scriptstyle #3}_{\scriptstyle #2}$}
              \binRestAux{#1}{#2}{#3}}
              {\setbox1\hbox{${\scriptstyle #1}^{\scriptscriptstyle #3}_{\scriptscriptstyle #2}$}
              \binRestAux{#1}{#2}{#3}}
              {\setbox1\hbox{${\scriptscriptstyle #1}^{\scriptscriptstyle #3}_{\scriptscriptstyle #2}$}
              \binRestAux{#1}{#2}{#3}}}
\newcommand*{\binRestAux}[3]{{#1\,\smash{\vrule height 1.1\ht1 depth 1.2\dp1}}^{\,#3}_{\,#2}}
\newcommand*{\funRest}[2]{\binRest{#1}{#2}{}}
\newcommand{\eqdef}{\stackrel{\mathrm{def}}=}
\newcommand*\transpose[1]{\prescript{t}{}{#1}}
\newcommand\cA{\mathcal{A}}
\newcommand\cB{\mathcal{B}}
\newcommand\cC{\mathcal{C}}
\newcommand\cF{\mathcal{F}}
\newcommand\cI{\mathcal{I}}
\newcommand\cL{\mathcal{L}}
\newcommand\cN{\mathcal{N}}
\newcommand\cR{\mathcal{R}}
\newcommand\cS{\mathcal{S}}
\newcommand\cT{\mathcal{T}}
\newcommand\cU{\mathcal{U}}
\newcommand\cX{\mathcal{X}}
\newcommand\cY{\mathcal{Y}}
\newcommand\fA{\mathfrak{A}}
\newcommand\fB{\mathfrak{B}}
\newcommand\fC{\mathfrak{C}}
\newcommand\fF{\mathfrak{F}}
\newcommand\fM{\mathfrak{M}}
\newcommand\fP{\mathfrak{P}}
\newcommand\fT{\mathfrak{T}}
\newcommand{\ie}{{i.e.}}
\newcommand{\eg}{e.g.}
\newcommand{\st}{;\ }
\newcommand{\impl}{\mathbin{\Rightarrow}}
\newcommand{\limpl}{\mathbin{\multimap}}
\newcommand{\tensor}{\otimes}
\newcommand{\Tensor}{\bigotimes}
\newcommand{\Parr}{\bigparr}
\newcommand{\Oplus}{\bigoplus}
\newcommand{\With}{\bigwith}
\DeclareMathAlphabet{\mathLLrm}{T1}{cmr}{m}{n}
\newcommand\unit  {\mathLLrm u}
\newcommand\dere  {\mathLLrm d}
\newcommand\LLOne   {\mathbf 1}
\newcommand\LLZero  {\mathbf 0}
\newcommand{\abstraction}[3]{\mathord{\mathord{#1#2}\,\mathord{#3}}}
\newcommand{\application}[2]{\mathord{\mathord{#1}\,\mathord{#2}}}
\newcommand{\abs}[2]{\abstraction{\lambda}{#1}{#2}}
\newcommand{\appl}[2]{\application{\left( #1 \right)}{#2}}
\newcommand{\subst}[3]{%
	\application{#1}{%
		\left[ #2 \mathbin{{:}{=}} #3 \right]%
	}%
}
\newcommand{\parAppl}[2]{\appl{#1}{\pars{#2}}}
\newcommand{\labs}[2]{\abs{#1}{#2}}
\newcommand{\parLabs}[2]{\labs{#1}{\pars{#2}}}
\newcommand{\zero}{\mathbf 0}
\newcommand{\lift}[1]{\left\uparrow #1\right .}
\newcommand{\lpi}[1]{\application{\lpiSym}{#1}}
\newcommand{\rpi}[1]{\application{\rpiSym}{#1}}
\newcommand{\pair}[2]{\angles{#1,#2}}
\DeclareSymbolFont{stmry}{U}{stmry}{m}{n}
\DeclareMathDelimiter\semlbracket{\mathopen}{stmry}{"4A}{stmry}{"71}
\DeclareMathDelimiter\semrbracket{\mathclose}{stmry}{"4B}{stmry}{"79}
\newcommand{\sem}{\@ifnextchar[{\@semtwo}{\@semone}}
\def\@semone#1{\left\semlbracket#1\right\semrbracket}
\def\@semtwo[#1]#2{\relSem{#2}{#1}}
\newcommand{\relSem}[2]{\left\llbracket#1\right\rrbracket_{#2}}
\newcommand{\Rel}{\mathsf{Rel}}
\newcommand{\finSym}{\mathfrak{F}}
\newcommand{\fin}[1]{\finSym\left( #1 \right)}
\newcommand{\web}[1]{\ensuremath{\left|#1\right|}}
\renewcommand{\appl}[2]{\application{#1}{#2}}
\newcommand{\fsem}[1]{#1^*}
\newcommand{\parFsem}[1]{\fsem{\pars{#1}}}
\renewcommand{\eqdef}{=}
\renewcommand{\lpi}{\CCproj_1}
\renewcommand{\rpi}{\CCproj_2}
\newcommand*{\relAppl}[2]{#1\cdot #2}
\newcommand*{\relComp}[2]{#1\circ#2}
\newcommand*{\relRev}[1]{\transpose{#1}}
\newcommand*{\relDiv}[2]{#1\setminus#2}
\newcommand{\finExt}{\preceq}
\newcommand{\finExtSim}{\precsim}
\newcommand{\finInc}{\sqsubseteq}
\newcommand{\finSup}{\sqcup}
\newcommand{\FinSup}{\bigsqcup}
\newcommand{\finInf}{\sqcap}
\newcommand{\FinInf}{\bigsqcap}
\declareslashed{\mathrel}{\scriptscriptstyle\bot}{0}{.25}{\finInc}
\declareslashed{\mathbin}{\scriptscriptstyle\bot}{0}{0}{\finSup}
\declareslashed{\mathop}{\scriptscriptstyle\bot}{0}{0}{\FinSup}
\declareslashed{\mathbin}{\scriptscriptstyle\bot}{0}{0}{\finInf}
\declareslashed{\mathop}{\scriptscriptstyle\bot}{0}{0}{\FinInf}
\newcommand{\dualInc}{\slashed{\finInc}}
\newcommand*{\polarNoOpt}{\mathrel{\bot}}
\newcommand*{\polarOpt}[1][]{\mathrel{\bot_{#1}}}
\newcommand*{\dualNoOpt}[1]{#1^{\bot}}
\newcommand*{\dualOpt}[2][]{#2^{\bot_{#1}}}
\newcommand*{\bidualNoOpt}[1]{#1^{\bot\bot}}
\newcommand*{\bidualOpt}[2][]{#2^{\bot\bot_{#1}}}
\newcommand*{\tridualNoOpt}[1]{#1^{\bot\bot\bot}}
\newcommand*{\tridualOpt}[2][]{#2^{\bot\bot\bot_{#1}}}
\newcommand*{\parDualNoOpt}[1]{\dual{\pars{#1}}}
\newcommand*{\parDualOpt}[2][]{\dual[#1]{\pars{#2}}}
\newcommand*{\parBidualNoOpt}[1]{\bidual{\pars{#1}}}
\newcommand*{\parBidualOpt}[2][]{\bidual[#1]{\pars{#2}}}
\newcommand*{\parTridualNoOpt}[1]{\tridual{\pars{#1}}}
\newcommand*{\parTridualOpt}[2][]{\tridual[#1]{\pars{#2}}}
\newcommand\finSubscript{\mathrm{f}}
\newcommand{\finSubseteq}{\subseteq_{\finSubscript}}
\newcommand{\PFinSym}{\fP_{\finSubscript}}
\newcommand{\PFin}[1]{\PFinSym\left( #1 \right)}
\renewcommand{\MFinSym}{\fM_{\finSubscript}}
\renewcommand{\MFin}[1]{\MFinSym\left( #1 \right)}
\newcommand{\transport}[1]{\fF_{#1}}
\renewcommand{\lift}[2]{#1_{#2}}
\newcommand{\natInst}[2]{#1^{#2}}
\newcommand*{\addresses}{{\mathbf A}}
\newcommand{\thinSubsets}{\fT}
\newcommand{\proj}{\mathsf{proj}}
\newcommand{\CCproj}{\pi}
\newcommand{\rest}{\mathsf{rest}}
\newcommand{\indx}{\mathsf{indx}}
\newcommand*{\own}{{\mathsf{own}}}
\newcommand*{\supp}{{\mathsf{supp}}}
\newcommand*{\val}{{\mathsf{val}}}
\newcommand*{\shp}{{\mathsf{shp}}}
\newcommand*{\id}{{\mathsf{id}}}
\renewcommand*{\dere}{{\mathsf{der}}}
\newcommand{\inj}{{\mathsf{inj}}}
\newcommand{\lcase}{{\mathsf{case}}}
\newcommand{\len}{\mathsf{len}}
\newcommand{\node}{\mathsf{node}}
\newcommand{\leaf}{\mathsf{leaf}}
\newcommand{\match}{\mathsf{match}}
\newcommand{\iter}{\mathsf{iter}}
\newcommand{\rec}{\mathsf{rec}}
\newcommand{\natiter}{\mathsf{natiter}}
\newcommand{\fp}{\mathsf{fix}}
\newcommand{\size}{\mathsf{card}}
\newcommand{\true}{\mathsf{true}}
\newcommand{\false}{\mathsf{false}}
\newcommand{\bcase}{\mathsf{if}}
\renewcommand{\zero}{\mathsf{zero}}
\renewcommand{\succ}{\mathsf{succ}}
\newcommand{\fsEmpty}	  {\top}
\newcommand{\fsZero}	  {\LLZero}
\newcommand{\fsSgl}	    {\LLOne}
\newcommand{\fsBot}	    {\bot}
\newcommand{\fsFlatNat}	{\cN}
\newcommand{\fsLazyNat}	{\cN_l}
\newcommand{\mt}[1]{#1^>}
\newcommand{\plusOne}{^+}
\newcommand{\fsBool} {\cB ool}
\newcommand{\webTrue} {\doubleChar t}
\newcommand{\webFalse}{\doubleChar f}
\newcommand{\webVar}{\doubleChar x}
\newcommand{\fsBT}	{\cB\cT}
\newcommand{\nd}   {{\mathsf{N}}}
\newcommand{\lf}   {{\mathsf{F}}}
\newcommand{\lnode}{{\mathsf{G}}}
\newcommand{\rnode}{{\mathsf{D}}}
\newcommand{\lsucc}{{\mathsf{S}}}
\newcommand{\lzero}{{\mathsf{O}}}
\newcommand*{\setLazy}{{L}}
\newcommand{\fsLazy}    {\cL}
\newcommand{\fix}[1]{\mu\,#1}
\newcommand*{\typoCat}[1]{\underline{\mathrm{#1}}}
\renewcommand{\Rel}   {\typoCat{Rel}}
\renewcommand{\Fin}   {\typoCat{Fin}}
\newcommand{\RelCC}   {\Rel^\oc}
\newcommand{\FinCC}   {\Fin^\oc}
\newcommand{\loplus}  {\mathbin{\stackrel\bullet\oplus}}
\newcommand{\Loplus}  {\mathop{\stackrel\bullet\Oplus}}
\newcommand{\ocplus}{\oplus^\oc}
\newcommand*{\tuple}[1]{\angles{#1}}
\newcommand*{\cotuple}[1]{\braces{#1}}
\newcommand*{\indices}{ind}
\newcommand*{\values}{val}
\newcommand\subsize{\#\#}
\newcommand{\finPolar}{\polar[\finSubscript]}
\newcommand{\LambdaFin}   {\Lambda_{\Fin}}
\newcommand{\LambdaRel}   {\Lambda_{\Rel}}
\newcommand{\CCcomp}  {\circ^\oc}
\newcommand{\CCarrow}{\Rightarrow}
\newcommand{\doDoubleChar}[1]{\hbox{\hbox to 2pt{$#1$\hss}$#1$}}
\newcommand{\doubleChar}[1]{%
\mathchoice
{\doDoubleChar{\displaystyle#1}}
{\doDoubleChar{\textstyle#1}}
{\doDoubleChar{\scriptstyle#1}}
{\doDoubleChar{\scriptscriptstyle#1}}}
\newcommand{\prom}[1] {{#1}^\oc}
\newcommand{\bipr}    {\with}
\newcommand{\Rec}{\@ifnextchar[{\@Rec}{\@@Rec}}
\def\@@Rec{\cR}
\def\@Rec[#1]{\@@Rec_{#1}}
\newcommand{\It}{\@ifnextchar[{\@It}{\@@It}}
\def\@@It{\cI}
\def\@It[#1]{\@@It_{#1}}
\newcommand{\Step}{\@ifnextchar[{\@Step}{\@@Step}}
\def\@@Step{\cS tep}
\def\@Step[#1]{\@@Step_{#1}}
\newcommand{\ItStep}{\@ifnextchar[{\@ItStep}{\@@ItStep}}
\def\@@ItStep{\cI t\cS tep}
\def\@ItStep[#1]{\@@ItStep_{#1}}
\newcommand{\Case}{\@ifnextchar[{\@Case}{\@@Case}}
\def\@@Case{\cC ase}
\def\@Case[#1]{\@@Case_{#1}}
\newcommand{\RecType}{\@ifnextchar[{\@RecType}{\@@RecType}}
\def\@@RecType{\textsl{Rec}}
\def\@RecType[#1]{\@@RecType\brackets{#1}}
\newcommand{\ItType}{\@ifnextchar[{\@ItType}{\@@ItType}}
\def\@@ItType{\textsl{Iter}}
\def\@ItType[#1]{\@@ItType\brackets{#1}}
\newcommand{\Approx}{\@ifnextchar[{\@Approx}{\@@Approx}}
\def\@@Approx#1{\cR^{\pars{#1}}}
\def\@Approx[#1]#2{\@@Approx{#2}_{#1}}
\newcommand{\Fix}{\@ifnextchar[{\@Fix}{\@@Fix}}
\def\@@Fix{\cF ix}
\def\@Fix[#1]{\@@Fix_{#1}}
\newcommand{\ItApprox}{\@ifnextchar[{\@ItApprox}{\@@ItApprox}}
\def\@@ItApprox#1{\cI^{\pars{#1}}}
\def\@ItApprox[#1]#2{\@@ItApprox{#2}_{#1}}
\newcommand{\ItStage}{\@ifnextchar[{\@ItStage}{\@@ItStage}}
\def\@@ItStage#1{\cS tage^{\pars{#1}}}
\def\@ItStage[#1]#2{\@@ItStage{#2}_{#1}}
\newcommand{\supportSym}{supp}
\renewcommand{\support}[1]{\supportSym\pars{#1}}
\renewcommand{\card}{\mathop{\#}}
\newcommand{\ms}[1]{\overline{#1}}
\newcommand{\ag}[1]{\widetilde{#1}}
\renewcommand*{\fam}[1]{\overrightarrow{#1}}
\newcommand*{\maf}[1]{\overleftarrow{#1}}
\newcommand{\constants}{\fC}
\newcommand{\atoms}{\fA}
\renewcommand{\unit}{\angles{}}
\newcommand{\converts}{\simeq}
\renewcommand{\labs}{\@ifnextchar[{\@typedLabs}{\untypedLabs}}
\renewcommand{\parLabs}{\@ifnextchar[{\@parTypedLabs}{\parUntypedLabs}}
\def\@typedLabs[#1]{\typedLabs{#1}}
\def\@parTypedLabs[#1]{\parTypedLabs{#1}}
\newcommand{\typedLabs}[3]{\abstraction{\lambda}{#2^{#1}}{#3}}
\newcommand{\untypedLabs}[2]{\abstraction{\lambda}{#1}{#2}}
\newcommand{\parTypedLabs}[3]{\typedLabs{#1}{#2}{\pars{#3}}}
\newcommand{\parUntypedLabs}[2]{\untypedLabs{#1}{\pars{#2}}}
\newcommand*{\LBL}[1]{\RightLabel{\small(#1)}}
\newcommand*{\SLBL}[1]{\RightLabel{$\sem{\text{#1}}$}}
\newcommand*{\definitive}[1]{\emph{#1}}
\newcommand*{\type}[1]{type\pars{#1}}
\newtheorem{lemme}{Lemme}[section]
\newtheorem{lemma}[lemme]{Lemma}
\newtheorem{example}[lemme]{Example}
\newtheorem{theorem}[lemme]{Theorem}
\newtheorem{corollary}[lemme]{Corollary}
\newtheorem{proposition}[lemme]{Proposition}
\newtheorem{definition}[lemme]{Definition}
\newtheorem{counter}[lemme]{Counter-example}
\author{
        Christine Tasson${^1}$\thanks{%
        This work has been partially funded by the French ANR projet
        blanc ``Curry Howard pour la Concurrence'' CHOCO ANR-07-BLAN-0324.}
				\ and Lionel Vaux${^2}$\footnotemark[1]\\
				\normalsize
        $^1$ Preuves, Programmes et Systèmes,
        CNRS UMR 7126, Paris, France.
        \\
				\normalsize
        $^2$ Institut de Mathématiques de Luminy,
        CNRS UMR 6206, Marseille, France.
}
\newcommand{\thetitle}{%
Transport of finiteness structures and applications
}
\title{\thetitle}
\date{December 31, 2010}
\begin{document}

\maketitle

\begin{abstract}
  We describe a general construction of finiteness spaces which subsumes
  the interpretations of all positive connectors of linear logic. We then
  show how to apply this construction to prove the existence of least
  fixpoints for particular functors in the category of finiteness spaces:
  these include the functors involved in a relational interpretation of
  lazy recursive algebraic datatypes along the lines of the coherence
  semantics of system T.
\end{abstract}

\section{Introduction}

Finiteness spaces were introduced by \citet{ehrhard:fs}, refining the
purely relational model of linear logic. A finiteness space is a set equipped
with a finiteness structure, \ie{} a particular set of subsets which are said
to be finitary; and the model is such that the relational denotation of a proof
in linear logic is always a finitary subset of its conclusion. 
Applied to this finitary relational  model of linear logic,  the usual co-Kleisli
construction provides a cartesian closed category, hence a model of
the simply typed $\lambda$-calculus (see, \eg, \citet{bierman:catmodel}).
The crucial property of finiteness spaces is
that the intersection of two finitary subsets of dual types is always finite.
This feature allows to reformulate the quantitative semantics of
\citet{girard:quantitative} in a standard algebraic setting, where morphisms
interpreting typed $\lambda$-terms are analytic functions between the
topological vector spaces generated by vectors with finitary supports. This
provided the semantic foundations of the differential
$\lambda$-calculus of \citet{er:tdlc} and motivated the general study of a
differential extension of linear logic \citep[\emph{etc.}]{%
er:diffnets,er:bkt,el:pi,tranquilli:idn,vaux:poldiff,tasson:totality,pt:inverse-taylor}.

The fact that finiteness spaces form a model of linear logic can be understood
as a property of the relational interpretation: as we have already mentioned,
the relational semantics of a proof is always finitary. The present paper
studies the connexion between the category $\Rel$ of sets and relations and
the category $\Fin$ of finiteness spaces and finitary relations, while
maintaining a similar standpoint: we investigate whether and how some of the
most distinctive features of $\Rel$ can be given counterparts in $\Fin$. 

Our primary contribution is a very general construction of finiteness spaces:
given a relation from a set $A$ to a finiteness space such that the
relational image of every element is finitary, we can form a new finiteness
space on $A$ whose finitary subsets are exactly those with finitary image. We refer to
this result as the \definitive{transport lemma}. Although simple in its
formulation, the transport lemma subsumes many constructions in finiteness
spaces and in particular those interpreting the positive connectives of linear
logic (multiplicative ``$\tensor$'', additive ``$\oplus$'' and exponential
``$\oc$'') whose action on sets is given by the corresponding relational
interpretations. We moreover provide sufficient conditions for a functor in
$\Rel$ to give rise to a functor in $\Fin$ \emph{via} the transport lemma:
again, this generalizes the functoriality of the positive connectives of linear
logic.

The category $\Rel$, endowed with inclusion on sets and relations, is enriched
on complete partial orders (cpo). This structure was studied in a more general
2-categorical setting \citep{cks:spans-and-relations,ckw:change-of-base} and the
properties of monotonic functors allowed for an abstract description of
datatypes \citep{backhouse-et-al:polrel,hdm:contcat,bh:gpd}. In
such a setting, it is standard to define recursive datatypes, such as lists or
trees, as the least fixpoints of particular Scott-continuous functors
\citep{ps:recursive}. This
prompted us to consider two orders on finiteness spaces derived from set
inclusion: the most restrictive one, \definitive{finiteness extension}, was
used by \citet[unpublished preliminary version]{ehrhard:fs} to provide an
interpretation of second order linear logic, while the largest one,
\definitive{finiteness inclusion}, is a cpo on finiteness
spaces. We study various notions of continuity for functors in finiteness
spaces, and relate them with the existence of fixpoints. A striking feature of
this development is that we are led to consider the properties of functors
w.r.t. both orders simultaneously: continuity for finiteness inclusion, and
monotonicity for finiteness extension. We prove in particular that every
functor obtained by applying the transport lemma to a continuous relational
functor satisfies these properties, and admits a least fixpoint for finiteness
inclusion.

The remaining of the paper is dedicated to the application of these results to
the relational semantics of functional programming with recursive datatypes.
Indeed, the co-Kleisli construction applied to the relational model of linear
logic gives rise to the cartesian closed category $\RelCC$. The fact that the
already mentioned co-Keisli $\FinCC$ of $\Fin$ provides a model of the
$\lambda$-calculus can again be understood as a property of the interpretation
in $\RelCC$: the relational semantics of a \emph{simply typed} $\lambda$-term
is always finitary.  It is however worth noticing that, whereas the relational
model can accomodate untyped $\lambda$-calculi \citep{carvalho:exec,bem:enough},
finiteness spaces are essentially a model of termination. The whole point of
the finiteness construction is to reject infinite computations, ensuring that
the intermediate sets involved in the relational interpretation of a cut are
all finite. In particular, the relational semantics of fixpoint combinators is
finitary only on empty types: general recursion is ruled out from this
framework.  This is to be related with the fact that finitary relations are
\emph{not} closed under arbitrary unions: in contrast with the cpo defined on
objects by set inclusion,
the category $\FinCC$ (and thus $\Fin$) is not enriched on complete partial
orders. 

Despite this restrictive design, \citet[Section 3]{ehrhard:fs} was able to
define a finitary interpretation of tail-recursive iteration: this
indicates that the finiteness semantics can accomodate a form of typed
recursion. This interpretation, however, is not completely satisfactory: 
tail recursive iteration is essentially linear, thus it does not provide a type of
natural numbers \citep{thibault:prerecursive,ls:hocl} in the associated model of
the $\lambda$-calculus. This is essentially due to the fact that the
interpretation of natural numbers is \emph{flat} (in the sense of domains). In
fact, a similar effect was already noted by Girard in the design of his coherence
semantics of system $T$ \citep{girard:prot}: his solution was to propose a
\emph{lazy} interpretation of natural numbers, where laziness refers to the
possibility of pattern matching on non normal terms. The second author remarked
that the same solution could be adapted in the relational
model and provided a type of natural numbers with finitary recursor, hence a
model of system $T$~\citep{vaux:relT}.

Our previous developments allow us to generalize this construction: after
introducing a finitary relational interpretation of sum types, we consider the
fixpoints of particular functors and show that they provide a relational semantics
of the typed $\lambda$-calculus with lazy recursive algebraic datatypes by
exhibiting their constructors and destructors. Adapting the techniques already
employed by the  second author in the case of  system $T$, we moreover
show that these operators are finitary.

\subsubsection*{Related and future work.}
Our first interest in the semantics of datatypes in finiteness spaces was the
possibility of extending the quantitative semantics of the simply typed
$\lambda$-calculus in vectorial finiteness spaces to functional programming
with base datatypes. This would broaden the scope of the already well developped
proof theory of differential linear logic: the quantitative semantics provides
more precise information on cut elimination, and is thus a better guide in the
design of syntax than the plain relational interpretation. Earlier achievements
in this direction include the extension of the algebraic $\lambda$-calculus
\citep{vaux:alglam} with a type of booleans, for which the first author
established a semantic characterization of total terms: this is moreover proved to be
complete on boolean functions \citep{tasson:totality}. In previous unpublished work, we also
proposed a quantitative semantics of tail recursive iteration. As we mentioned
before, this did not provide a semantics of system $T$, which prompted us to
investigate the general structure of standard datatypes in finiteness spaces.
In this regard, our present contribution is an important step.

Notice that another standard approach to recursive datatypes is to consider the
impredicative encoding of inductive datatypes in system $F$. In an unpublished
preliminary version of his paper on finiteness spaces, Ehrhard proposed
an interpretation of second order linear logic. This is based on a class of functors
which, in particular, are monotonic for the finiteness extension order. Notice
however that this does not provide a denotational semantics \emph{stricto sensu}: 
in general, the interpretation decreases under cut elimination. Moreover, 
the possibility of a quantitative semantics in this setting is not clear.

Other accounts of type fixpoints in linear logic include the system of linear logic
proof nets with recursion boxes of \citet{gimenez:these}, which allows to
interpret, e.g., PCF. As such this system can be seen as a graphical
syntax for general recursion. Along similar lines, \citet{fmsw:rec} have
proposed a system of interaction nets which models iteration on
recursive datatypes. In both cases, no particular denotational
semantics is considered. Let us also mention \citeauthor{bm:mumall}'s $\mu$MALL
\citeyearpar{bm:mumall} which replaces the exponential modalities of linear logic
with least and greatest fixpoints: less close to our contribution, this work is
mainly oriented towards proof search. It however introduces the system $\mu$LJ
of intuitionistic logic with fixpoints, for which \citet{clairambault:these} later proposed a
cut elimination procedure allowing to encode system $T$, together with a game
semantics accounting for typed recursion.

The notion of transport functor we use to describe how functors in $\Fin$ can
be derived from functors in $\Rel$ is similar to the categorical
characterization of container types as relators with membership, by
\citet{hdm:contcat}: relators are functors in $\Rel$ which are monotonic for
inclusion of relations; membership relations are particular lax natural
transformations associated with these functors. The hypotheses we consider on
the functors in $\Rel$ underlying transport functors in $\Fin$
are weaker than those on relators with membership. On the other hand, in order
to ensure the functoriality of transport in $\Fin$, we are led to refer to a
\definitive{shape} relation: this side condition is essential for some
important instances, such as linear logic exponentials.

The relationship we establish between $\Rel$ and $\Fin$ might profitably be
recast in a more general setting. At least the transport lemma can be
reformulated for coherence spaces rather than finiteness spaces. Further
results of the paper might follow as well, up to some local tweaking of the
definitions (\eg, that of shape relations). It is still unclear to us whether the
approach we developped is limited to the restricted setting of finiteness
spaces, coherence spaces and maybe other web-based models
(\citeauthor{ehrhard:hypercoherences}'s hypercoherences
\citeyearpar{ehrhard:hypercoherences}, \citeauthor{loader:totality}'s totality
spaces \citeyearpar{loader:totality}), or if it can be generalized
in the spirit of the glueing and orthogonality techniques
studied by \citet{hs:glueing}.

\subsubsection*{Outline of the paper and main results.}
In section \ref{section:rel}, we review the structure and properties of $\Rel$.
We establish the transport lemma in section \ref{section:transport}, and derive
the interpretations of the positive connectives of linear logic in $\Fin$ from
those in $\Rel$. Section \ref{section:continuity} introduces two orders on
finiteness spaces and associated properties. In particular we provide
sufficient conditions for the existence of fixpoints of functors. We moreover prove
these conditions are automatically satisfied by transport functors. The last
two sections are dedicated to the finitary relational semantics of
$\lambda$-calculi: we first recall the semantics of the simply typed
$\lambda$-calculus in section \ref{section:lambda}, and then detail the semantics of
recursive algebraic datatypes in section \ref{section:rec}.

\section{Sets and relations}

\label{section:rel}

\subsection{Notations}

We write $\nat$ for the set of all natural numbers.  Let $A$ and $B$ be sets.
We write $A\subseteq B$ if $A$ is a subset of $B$ (not necessarily a strict one), 
and $A\finSubseteq B$ if moreover $A$ is finite.
We write $\card A$ for the cardinality of $A$, $\powerset A$ for the powerset
of $A$ and $\PFin A$ for the set of all finite subsets of $A$. We identify
multisets of elements of $A$ with functions $A\funArrow \nat$. If $\mu$ is such a
multiset, we write $\support\mu$ for its support set
$\set{\alpha\in A\st \mu(\alpha)\not=0}$. A finite multiset is a
multiset with a finite support. We write $\MFin A$ for the set of all finite
multisets of elements of $A$. Whenever $(\fullEl \alpha n)\in A^n$, we write
$\fullMulSet\alpha n$ for the corresponding finite multiset: $\alpha\in
A\mapsto\card{\set{i\st \alpha_i=\alpha}}$. We also write 
$\card\fullMulSet\alpha n=n$ for the cardinality of multisets.
The empty multiset is $\emptyMulSet$ and we use the additive notation for multiset union, \ie{} $\mu+\mu':\alpha\in A\mapsto \mu(\alpha)+\mu'(\alpha)$.

Since we will often consider numerous notions associated with a fixed set, we
introduce the following typographic conventions: we will in general use latin
majuscules for reference sets (e.g. $A$), greek minuscules for their elements
(e.g. $\alpha,\alpha'\in A$), latin minuscules for subsets (e.g. $a\subseteq
A$), gothic majuscules for sets of subsets (e.g. $\fA\subseteq\powerset A$),
and script majuscules for finiteness spaces (e.g. $\cA=(A,\fA)$).  In general,
if $T$ is an operation on sets we derive the notations for elements, subsets,
\emph{etc.} of $TA$ from those for elements, subsets, \emph{etc.} of $A$ by
the use of various overscripts (e.g. $\ag\alpha\in\ag a\subseteq TA$). We reserve
overlining for multisets (e.g. $\ms \alpha=\fullMulSet\alpha
n\in\MFin A$).

We will also consider families of objects (sets, elements, finiteness spaces,
\emph{etc.}) and thus introduce the following conventions. Unless stated
otherwise, all families considered in the same context are based on a common
set of indices, say $I$. We then write e.g. $\fam A$ for the family $\genFam AiI$.
We moreover use generic notations for componentwise operations on families: for
instance if $\fam A$ and $\fam B$ are two families of sets, we may write
$\fam{A\union B}$ for $\family{A_i\union B_i}{i\in I}$, and $\fam{\powerset A}$
for $\family{\powerset{A_i}}{i\in I}$. We may also write, e.g., $\fam{A\subseteq
B}$ for $A_i\subseteq B_i$ for all $i\in I$.

Assume $\fam A$ is a family of sets.  We write $\prod \fam A$ for the cartesian
product of the $A_i$'s and $\sum \fam A$ for their coproduct ($I$-indexed
disjoint union): $\prod\fam A=\set{\fam \alpha \st \forall i\in I,\
\alpha_i\in A_i}$ and $\sum\fam A=\set{(i,\alpha)\st i\in I \land \alpha\in
A_i}$.  We may of course denote finite products and coproducts of sets as
usual, e.g. $A\times B$ and $A+B$: in that case we assume indices are natural
numbers starting from $1$, e.g.  $A+B=\set{(1,\alpha)\st\alpha\in
A}\union\set{(2,\beta)\st\beta\in B}$.

\subsection{The category of sets and relations}\label{subsec:rel}

Let $A$ and $B$ be sets and $f$ be a relation from $A$ to $B$: $f\subseteq A\times
B$. We then write $\relRev f$ for the transpose relation $\set{(\beta,\alpha)\in
B\times A\st (\alpha,\beta)\in f}$. For all
subset $a\subseteq A$, we write $\relAppl fa$ for the \definitive{direct image}
of $a$ by $f$: $\relAppl fa \eqdef \set{\beta\in B \st\exists\alpha\in a,\
(\alpha,\beta)\in f}$.  If $\alpha\in A$, we will also
write $\relAppl f\alpha$ for $\relAppl f{\set \alpha}$.  
We say that a relation $f$ is \definitive{quasi-functional} if $\relAppl f\alpha$
is finite for all $\alpha$. 
If $b\subseteq B$, we define the \definitive{division} of $b$ by $f$ as
$\relDiv fb\eqdef\set{\alpha\in A\st \relAppl f\alpha\subseteq b}$. This is the
greatest subset of $A$ that $f$ maps to a subset of $b$: $\relDiv
fb=\Union\set{a\subseteq A\st\relAppl fa\subseteq b}$. Notice that in general
$\relAppl{f}{\pars{\relDiv fb}}$ may be a strict subset of $b$, and 
$\relDiv{f}{\pars{\relAppl fa}}$ may be a strict superset of $a$.

When $f\subseteq A\times B$ and $g\subseteq B\times C$, we denote by
$\relComp{g}{f}$ their composite: $(\alpha,\gamma)\in\relComp{g}{f}$ iff there
exists $\beta\in B$ such that $(\alpha,\beta)\in f$ and $(\beta,\gamma)\in g$.
Notice that this definition does not actually depend on the \definitive{types} of
$f$ and $g$ (namely the pairs of sets $(A,B)$ and $(B,C)$, respectively).
The identity relation on $A$ is the diagonal:
$\natInst{\id}A=\set{(\alpha,\alpha)\st \alpha\in A}\subseteq A\times A$. 
\begin{proposition}
  \label{prop:adjoint_relations}
  Let $f\subseteq A\times B$ be a relation. Then:
  \begin{itemize}
    \item $f=\relComp v{\relRev u}$ where $u$ and $v$ are (graphs of) functions,
      namely $u=\set{((\alpha,\beta),\alpha)\st (\alpha,\beta)\in f}$ and
      $v=\set{((\alpha,\beta),\beta)\st (\alpha,\beta)\in f}$, \ie{} the
      projections from $f$ onto its domain and image respectively;
    \item $f$ is a function 
      from $A$ to $B$ iff there exists $g\subseteq B\times A$ such that
      $\relComp{f}{g}\subseteq \natInst{\id}B$ and
      $\natInst{\id}A\subseteq\relComp gf$, and we then have $g=\relRev f$.
  \end{itemize}
\end{proposition}

Equipped with the above relational composition, relations form a category $\Rel$
whose objects are sets. 
More precisely, morphisms in $\in\Rel(A,B)$ are triples $(A,B,f)$ such
that $f\subseteq A\times B$: we use this trick only to ensure that homsets in
$\Rel$ are pairwise disjoint, which is part of the definition of a category
(see, \eg, \citet{maclane:cwm}). Then the identity morphism on set $A$ is 
$(A,A,\natInst{\id}A)$ and, if $(A,B,f)\in\Rel(A,B)$ and $(B,C,g)\in\Rel(B,C)$
then their composite in $\Rel$ is $(A,C,\relComp gf)\in\Rel(A,C)$. 
Most of the time, we will abuse notations and identify morphisms in $\Rel$ with
the underlying relations, especially when types are irrelevant for the
discussion or clear from the context: we may then simply write, \eg, $f$ for
$(A,B,f)$.
It is however important to notice that the action of functors on relations may
in general depend on their types:
\begin{definition} 
  A \definitive{functor} $T$ in $\Rel$ is the data of a set $TA$ for all set
  $A$ and a relation $\natInst T{A,B}f\subseteq TA\times TB$ for all $f\subseteq
  A\times B$, so that $T(A,B,f)\eqdef(TA,TB,\natInst T{A,B}f)\in\Rel(TA,TB)$,
  preserving identities and composition:
  $\natInst T{A,A}\natInst{\id}A=\natInst{\id}{TA}$ for all set $A$, and
  $\natInst T{A,C}(\relComp gf) =\relComp{\natInst T{B,C}g}{\natInst T{A,B}f}$ for all relations
  $f\subseteq A\times B$ and $g\subseteq B\times C$.
  \definitive{Cofunctors} are defined similarly, except for being contravariant, 
  \ie{} $\natInst T{A,B}f\subseteq TB\times TA$ 
  and $\natInst T{A,C}(\relComp gf) =\relComp{\natInst T{A,B}f}{\natInst T{B,C}g}$ for all relations
  $f\subseteq A\times B$ and $g\subseteq B\times C$.
\end{definition}
The simplest example of a functor (resp. cofunctor) is the identity functor
(resp. the transpose functor $(\relRev\cdot)$), which is the identity on sets and maps every
morphism $(A,B,f)$ to itself (resp. to its transpose $(B,A,\relRev f)$).
For all the care we took in making this definition precise, most of the time we
will leave out the superscripts and simply write $Tf$ both for $\natInst
T{A,B}f$ and the associated morphism. Moreover, we will sometimes restrict our
study to classes of functors for which such scripts are not actually relevant:
we say a functor $T$ is \definitive{type blind} if $\natInst T{A,B}f=\natInst
T{A',B'}f$ for all $f\subseteq (A\times B)\inter(A'\times B')$.

A functor $T$ is \definitive{monotonic on sets} if $TA\subseteq TB$ for
all sets $A\subseteq B$. If moreover $\natInst T{A,B}\natInst\id
A=\natInst\id{TA}$ (resp. $\natInst T{B,A}\natInst\id A=\natInst\id{TA}$) for
all sets $A\subseteq B$, we say $T$ \definitive{preserves inclusions} (resp.
\definitive{reverse inclusions}).
\begin{lemma}
  \label{lemm:inclusion-preserving-functors}
  Let $T$ be a functor. Then $T$ preserves inclusions (resp. reverse
  inclusions) iff $T$ is monotonic on sets and, for all $A\subseteq A'$,
  $B\subseteq B'$ and $f\subseteq A\times B$, we have 
  $\natInst T{A,B}f=\relComp{\natInst T{A',B'}f}{\natInst\id{TA}}$ (resp.
  $\natInst T{A,B}f=\relComp{\natInst\id{TB}}{\natInst T{A',B'}f}$).
  Moreover $T$ preserves both inclusions and reverse inclusions iff 
  $T$ is type blind.
\end{lemma}
\begin{proof}
  Assume $T$ preserves inclusions, $A\subseteq A'$, $B\subseteq B'$ and
  $f\subseteq A\times B$. Then $\relComp{\natInst T{A',B'}f}{\natInst\id{TA}}
  =\relComp{\natInst T{A',B'}(\relComp{\natInst \id B}f)}{\natInst T{A,A'}\natInst \id A}
  =\relComp{\natInst T{B,B'}\natInst\id B}{\natInst T{A,B}(\relComp f\natInst\id A)}
  =\relComp{\natInst \id{TB}}{\natInst T{A,B}f}=\natInst T{A,B}f$.
  For the converse, take $A=A'=B\subseteq B'$ and $f=\natInst \id A$.
  The case of reverse inclusion preserving functors is similar. 
  We conclude since type blindness is just the conjunction of both characterizations. 
\end{proof}

A functor $T$ is called a \definitive{relator} \citep{backhouse-et-al:polrel}
if it is monotonic for
relation inclusion: $\natInst T{A,B} f\subseteq \natInst T{A,B}f'$ as soon as
$f\subseteq f'\subseteq A\times B$.
We say a functor $T$ is \definitive{symmetric} if $T\left(\relRev
f\right)=\relRev{\left(Tf\right)}$ for all $f$.
\begin{lemma}
  \label{lemm:implications-of-functor-properties}
  A type blind functor is always a relator.
  Moreover, every relator is a symmetric functor.
\end{lemma}
\begin{proof}
  Assume $T$ is type blind and $f\subseteq f'\subseteq A\times B$. By the first item of
  Proposition~\ref{prop:adjoint_relations}, we can write 
  $f=\relComp v{\relRev u}$ and $f'=\relComp {v'}{\relRev {u'}}$
  where $u$, $v$, $u'$, $v'$ are the graphs of functions 
  $\funDec uCA$, $\funDec vCB$, $\funDec {u'}{C'}A$ and 
  $\funDec {v'}{C'}B$ with $C\subseteq C'$ (in fact
  $C=f$ and $C'=f'$) and such that $u=\funRest {u'}C$ and $v=\funRest {v'}C$, 
  or equivalently $u=\relComp{u'}{\natInst\id C}$ and $v=\relComp{v'}{\natInst\id C}$.
  Since $T$ is type blind, $TC\subseteq TC'$ and we can write: $Tf
  =T(\relComp v{\relRev u})
  =\relComp{Tv}{T\relRev u}
  =\relComp{T(\relComp{v'}{\natInst\id C})}{T(\relComp{\natInst\id C}{\relRev{u'}})}
  =\relComp{T{v'}}{\relComp{T\natInst\id C}{\relComp{T\natInst\id C}{T\relRev{u'}}}}
  \stackrel{(*)}{=}\relComp{T{v'}}{\relComp{\natInst\id{TC}}{T\relRev{u'}}}
  \subseteq\relComp{T{v'}}{\relComp{\natInst\id{TC'}}{T\relRev{u'}}}
  =\relComp{T{v'}}{T\relRev{u'}}
  =T f'$.
  The crucial step is $(*)$, which refers to
  Lemma~\ref{lemm:inclusion-preserving-functors}:
  $\natInst T{C,C'}\natInst\id C=\natInst T{C',C}\natInst\id C=\natInst\id{TC}$.

  Now assume $T$ is any relator. Then by the second item of 
  Proposition~\ref{prop:adjoint_relations}, $T\left(\relRev
  f\right)=\relRev{\left(Tf\right)}$ as soon as $f$ is the graph of a function.
  This extends to all relations by the first item of 
  Proposition~\ref{prop:adjoint_relations}.
\end{proof}
In particular, every inclusion preserving symmetric functor is a
relator since it is type blind.\footnote{This fixes a flawed result by
\citet[Lemma 5.1]{bdm:aop}, which implicitly relies on every functor preserving
inclusions.}
Of course, not all functors are type blind (resp. relators, symmetric):
\begin{counter}
  Let $P$ denote the functor of powersets and direct images: 
  $PA=\powerset A$ and $\natInst P{A,B}f=\set{(a,\relAppl fa)\st a\subseteq A}$.
  Notice that $\natInst P{A,B}f$ is the graph of a function, which 
  is not necessarily injective, hence $P$ is not symmetric. 
  By the previous lemma, $P$ is neither type blind nor a relator.
\end{counter}

A functor is said to be \definitive{continuous on sets} if it
preserves directed unions of sets: $T\Union\fam A=\Union
\fam{TA}$ for all family of sets $\fam A$ which is directed
for inclusion. Notice the use of our convention for denoting families:
$\fam{TA}=\family{TA_i}{i\in I}$.
Similarly, we say $T$ is \definitive{continuous on relations} if it preserves
directed unions of parallel relations: $T\Union\fam f=\Union
\fam{Tf}\in\Rel(TA,TB)$ for all family of relations $\fam f\in\Rel(A,B)^I$
which is directed for inclusion.
\begin{lemma}
  \label{lemm:blind-continuous}
  If $T$ is continuous on sets (resp. on relations) then it is monotonic on
  sets (resp. it is a relator). Moreover, if $T$ is type blind and continuous
  on relations then it is also continuous on sets.
\end{lemma}
\begin{proof}
  That continuity implies monotonicity is standard.
  Assume $T$ is type blind and continuous on relations, 
  and let $\fam A$ be a directed family of sets. 
  Then $\natInst\id{T\Union\fam A}
  =T\natInst\id{\Union\fam A}
  =T\Union\fam{\natInst\id A}
  =\Union\fam{T\natInst\id A}
  =\Union\fam{\natInst\id{TA}}
  =\natInst\id{\Union\fam{TA}}$
  hence $T\Union\fam A=\Union\fam{TA}$.
\end{proof}
Notice again our use of the arrow notation for families. This allows to keep
our developments concise while remaining self-explanatory and
unambiguous: here $\fam A$ is a family of sets, while $T$ is just one functor,
hence, \eg, $\fam{\natInst\id{TA}}$ can only mean
$\family{\natInst\id{TA_i}}{i\in I}$. Although this needs some overhead effort
to parse the first times, we are confident the reader will quickly become
familiar with this convention: we will rely on its conciseness extensively in
the remaining of the paper, always taking care that it does not introduce any
ambiguity.

If $T$ is continuous on both sets and relations, then we simply say it is
\definitive{continuous}. Of course, the identity functor is a type blind
continuous functor.
Another standard example is the multiset functor given by $\oc A=\MFin
A$ and, for all relation $f$, \[\oc f=\set{(\fullMulSet\alpha
n,\fullMulSet\beta n)\st n\in\nat\land \forall k,\ (\alpha_k,\beta_k)\in f}.\]
When $a\subseteq A$, we write $\prom a=\MFin a\subseteq \oc A$ (rather that
$\oc a\subseteq\oc A$) in order to avoid confusion with the corresponding operation
on relations.

Let $T$ and $U$ be two functors from $\Rel$ to $\Rel$, and let $f$ be the data
of a relation $\natInst f{A}$ from $TA$ to $UA$ for all set $A$: we say $f$ is
a \definitive{lax natural transformation} from $T$ to $U$ if, for all relation
$g$ from $A$ to $B$, $\relComp{\natInst
fB}{\pars{Tg}}\subseteq\relComp{\pars{Ug}}{\natInst fA}$. 
We say $f$ is a \definitive{natural transformation} if this inclusion is
always an equality.  In general we omit
the annotation and simply write $f$ for $\natInst fA$ when $A$ is clear from
the context.  Of course, the identities $\natInst{\id}{TA}$ constitute a
natural transformation from each $T$ to itself.
For all set $A$, consider the only relation $\supp$ from
$\oc{A}$ to $A$ such that $\relAppl{\supp}{\ms\alpha}=\support{\ms\alpha}$ for
all $\ms\alpha\in\oc{A}$. This defines a lax natural transformation from $\oc$ to
the identity functor: notice that in that case, the inclusion 
$\relComp{\supp}{\oc g}\subseteq \relComp g{\supp}$ may be strict.
Lax natural transformations between type blind functors enjoy some kind of
stability property:
\begin{lemma}
  \label{lemma:natural:restriction}
  Let $T$ and $U$ be type blind functors and 
  let $f$ be a lax natural transformation from $T$ to $U$
  Then, if $A\subseteq B$:
  \begin{itemize}
    \item $\natInst fA=\relComp{\natInst fB}{\natInst{\id}{TA}}$;
    \item for all $\ag a\subseteq TA$, 
      $\relAppl{\natInst fA}{\ag a}=\relAppl{\natInst fB}{\ag a}$;
    \item for all $\widehat b\subseteq UB$, 
      $\relDiv{\natInst fA}{\widehat b}=\pars{\relDiv{\natInst
fB}{\widehat b}}\, \inter TA$.
  \end{itemize}
\end{lemma}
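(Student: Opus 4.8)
The plan is to reduce both items to a single identity relating the two instances of the transformation: when $A\subseteq B$,
\[ \natInst fA = \natInst fB \inter (TA \times UB). \]
Once this is available, both bullets are routine set manipulations, so the real content is this identity.

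To prove it I would exploit the remark following the definition of functors, namely that the action of a functor on a morphism is insensitive to the choice of source and target. Since $A\subseteq B$, the diagonal $\id_A$ is simultaneously a relation from $A$ to $B$ and a relation from $B$ to $A$, and in both readings $T\id_A=\id_{TA}$ and $U\id_A=\id_{UA}$; recall moreover that $TA\subseteq TB$ and $UA\subseteq UB$ since functors are monotonic for inclusion. Feeding $\id_A\colon A\to B$ into the lax naturality condition for $f$ gives $\relComp{\natInst fB}{\id_{TA}}\subseteq\relComp{\id_{UA}}{\natInst fA}$; here the left-hand side is $\natInst fB$ restricted on the source to $TA$, that is $\natInst fB\inter(TA\times UB)$, and the right-hand side is just $\natInst fA$ because $\natInst fA\subseteq TA\times UA$. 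This yields one inclusion. Feeding $\id_A\colon B\to A$ into the same condition gives $\relComp{\natInst fA}{\id_{TA}}\subseteq\relComp{\id_{UA}}{\natInst fB}$, i.e. $\natInst fA\subseteq\natInst fB\inter(TB\times UA)$, hence in particular $\natInst fA\subseteq\natInst fB$; combined with $\natInst fA\subseteq TA\times UA\subseteq TA\times UB$, this is the reverse inclusion, and the identity follows.

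With the identity at hand, the first item is immediate: for $\ag a\subseteq TA$ every $\ag\alpha\in\ag a$ already lies in $TA$, so a pair $(\ag\alpha,\widehat\beta)$ lies in $\natInst fA$ exactly when it lies in $\natInst fB$, whence $\relAppl{\natInst fA}{\ag a}=\relAppl{\natInst fB}{\ag a}$. The second item then reduces, applying the first item to singletons, to the observation that for $\ag\alpha\in TA$ the condition $\relAppl{\natInst fA}{\ag\alpha}\subseteq\widehat b$ is equivalent to $\relAppl{\natInst fB}{\ag\alpha}\subseteq\widehat b$; thus $\relDiv{\natInst fA}{\widehat b}$ is carved out of $\relDiv{\natInst fB}{\widehat b}$ precisely by the extra requirement $\ag\alpha\in TA$, which is the claimed equality $\relDiv{\natInst fA}{\widehat b}=TA\inter\relDiv{\natInst fB}{\widehat b}$.

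The only delicate point — the ``main obstacle'', such as it is — is the bookkeeping around re-typing $\id_A$: one must check that the lax naturality condition, as stated, may legitimately be instantiated with $\id_A$ regarded as a morphism $B\to A$ and not merely $A\to B$, and that the functor equations $T\id_A=\id_{TA}$, $U\id_A=\id_{UA}$ survive this re-typing. This is exactly what the paper's strengthened notion of functor (image independent of type) provides; without it the statement would require additional hypotheses.
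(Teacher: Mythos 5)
Your proof is correct and follows essentially the same route as the paper's: both instantiate the lax naturality condition with $\id_A$ read as a morphism $A\to B$ and as a morphism $B\to A$ to obtain $\natInst fA=\natInst fB\inter(TA\times UB)$, from which the two items follow by direct set manipulation. Your write-up is merely more explicit about the re-typing of $\id_A$ and the derivation of the two bullets.
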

\begin{proof}
  By applying the naturality condition to the identity $\natInst{\id}A$
  both as a relation from $A$ to $B$ and as a relation from $B$ to $A$, 
  we obtain
  $\relComp{\natInst fB}{\natInst{\id}{TA}}\subseteq\relComp{\natInst{\id}{UA}}{\natInst fA}$
  and 
  $\relComp{\natInst fA}{\natInst{\id}{TA}}\subseteq\relComp{\natInst{\id}{UA}}{\natInst fB}$, 
  hence $\natInst fA=\relComp{\natInst fB}{\natInst{\id}{TA}}=\natInst fB\inter(TA\times UB)$, 
  from which both other properties follow.
\end{proof}

We shall not restrict our study to unary functors, hence we need to generalize
the above notions to families of relations indexed by a fixed set $I$. 
If $\fam A$ and $\fam B$ are families of sets, we call \definitive{relation
from $\fam A$ to $\fam B$} any family $\fam f$ of componentwise relations:
$\fam{f\subseteq A\times B}$, \ie{} for all $i\in I$, $f_i\subseteq A_i\times
B_i$.
We denote by $\Rel^I$ the category of $I$-indexed families of sets and
relations, with componentwise identities and composition: $\natInst\id{\fam
A}=\fam{\natInst\id A}$ and $\relComp{\fam g}{\fam f}=\fam{\relComp gf}$.
Again, we must precise that morphisms in $\Rel^I(\fam A,\fam B)$ are triples
$(\fam A,\fam B,\fam f)$ such that $\fam f$ is a relation from $\fam A$ to
$\fam B$, although we simply write $\fam f$ for $(\fam A,\fam B,\fam f)$
whenever $\fam A$ and $\fam B$ are clear from the context.

An $I$-ary functor in $\Rel$ is a functor from $\Rel^I$ to $\Rel$, \ie{} the
data of a set $T\fam A$ for all $I$-indexed family $\fam A$ of sets, and of a
relation $\natInst T{\fam A,\fam B}\fam f$ from $T\fam A$ to $T\fam B$ for all
relation $\fam f$ from $\fam A$ to $\fam B$, preserving identities and
composition: 
$\natInst T{\fam A,\fam A}\natInst\id{\fam
A}=\natInst\id{T\fam A}$ and $\natInst T{\fam A,\fam C}\fam{\relComp gf}
=\relComp{\natInst T{\fam B,\fam C}\fam g}{\natInst T{\fam A,\fam B}\fam f}$.
Whenever $\fam A$ and $\fam B$ are clear from the context, we just write
$T\fam f$ for both $\natInst T{\fam A,\fam B}\fam f$ and $T(\fam A,\fam B,\fam
f)=(T\fam A,T\fam B,\natInst T{\fam A,\fam B}\fam f)$.

Let $T$ be an $I$-ary functor. We say:
\begin{itemize}
  \item $T$ is \definitive{type blind} if $\natInst T{\fam A,\fam B}\fam f=\natInst
    T{\fam{A'},\fam{B'}}\fam f$ whenever both sides of the equation are defined;
  \item $T$ is \definitive{monotonic on sets} if $T\fam A\subseteq T\fam B$ for all families
    $\fam A$ and $\fam B$ of sets such that $\fam{A\subseteq B}$;
  \item $T$ is an \definitive{$I$-ary relator} if $T\fam f\subseteq T\fam g$ for all
    relations $\fam f$ and $\fam g$ from $\fam A$ to $\fam B$ such that
    $\fam{f\subseteq g}$;
  \item $T$ is \definitive{symmetric} if $T\fam{\relRev f}=\relRev{\pars{T\fam f}}$ for all
    family of relations $\fam f$.
\end{itemize}
Lemmas \ref{lemm:inclusion-preserving-functors} and
\ref{lemm:implications-of-functor-properties} extend to $I$-ary functors:
$T$ is type blind iff $T$ preserves $I$-ary inclusions $\natInst\id{\fam A}$,
both from $\fam A$ to $\fam B$ and from $\fam B$ to $\fam A$, for all
$\fam{A\subseteq B}$; every type blind functor is monotonic on sets and is a
relator; every relator is symmetric.

In order to define the continuity of $I$-ary functors, we have to consider
families of families. We thus introduce the following conventions: 
by $\fam{\maf{A}}$, we denote an
$I$-indexed family $\genFam{\maf{A}}iI$ of families of sets,
where each $\maf{A}_i=\family{A_{i,j}}{j\in J_i}$ takes indices in some
variable set $J_i$. If $\fam {j\in J}$ (i.e. $j_i\in J_i$ for all $i\in I$), 
we also write $\fam A_{\fam j}$ for the $I$-indexed family $\family{A_{i,j_i}}{i\in I}$.
We use leftwards arrows to distinguish families indexed by some variable set
from $I$-indexed families.
When the order of application of arrows is reversed, as in $\maf{\fam A}$, the
leftwards arrow stands for quantifying over all families $\fam j\in\prod{\fam
J}$ of indices, \ie{} $\maf{\fam A}=\family{\fam A_{\fam j}}{\fam{j\in
J}}$.

We say $\fam{\maf{A}}$ is \definitive{directed} if each
$\maf{A}_i$ is directed for inclusion. The family $\fam{\Union\maf{A}}=
\family{\Union\maf{A}_i}{i\in I}$ is 
the componentwise union of $\fam{\maf{A}}$.
Then we say $T$ is \definitive{continuous on sets} if it commutes to directed unions: 
$T\fam{\Union\maf A}=\Union\maf{T\fam A}$ as soon as $\fam{\maf A}$ is directed.
Similarly, we say $T$ is \definitive{continuous on relations} if, for all
directed family $\fam{\maf f}$, with $f_{i,j}\subseteq A_i\times B_i$ for all
$i\in I$ and $j\in J_i$, we have $T\fam{\Union\maf f}=\Union\maf{T\fam f}$.
If both properties hold, we simply say $T$ is continuous.
Again, Lemma \ref{lemm:blind-continuous} extends to $I$-ary functors: every
type blind functor which is continuous on relations is continuous.

We denote by $\Pi_i$ the $i$-th \definitive{projection functor} from $\Rel^I$
to $\Rel$: for all family of sets $\fam A$, $\Pi_i\fam A=A_i$ and, for all
relation $\fam f$ from $\fam A$ to $\fam B$, $\Pi_i\fam f=f_i$.
Projection functors are continuous type blind relators. Other standard examples include:
the \definitive{cartesian product} functor, given by $\Tensor\fam A=\prod \fam
A$ and $\Tensor\fam f=\set{\pars{\fam\alpha,\fam\beta}\st\fam{(\alpha,\beta)\in f}}$; 
and the \definitive{disjoint union} functor,
given by $\Oplus\fam A=\sum\fam A$ and $\Oplus\fam
f=\set{\pars{(i,\alpha),(i,\beta)}\st i\in I\land (\alpha,\beta)\in f_i}$.
Notice that $\Oplus$ defines both products and coproducts in $\Rel$:
we may also write it $\With$ when we refer to it as the functor of 
products.

Let $T$ and $U$ be two functors from $\Rel^I$ to $\Rel$, and let $f$ be the data of a
relation $\natInst f{\fam A}$ from $T\fam A$ to $U\fam A$ for all $\fam A$:
we say $f$ is a \definitive{lax natural transformation} from $T$ to $U$, 
if, for all relation $\fam g$ from $\fam A$ to $\fam B$, 
$\relComp{\natInst f{\fam B}}{\pars{T\fam g}}\subseteq\relComp{\pars{U\fam
g}}{\natInst f{\fam A}}$. We say $f$ is a \definitive{natural transformation}
if moreover this inclusion is always an equality. Again, the identities $\natInst\id{T\fam A}$
define a natural transformation from $T$ to itself. Other basic
examples of natural transformations are the following
\definitive{projection}, \definitive{restriction} and 
\definitive{index} relations:
\begin{itemize}
  \item for all $i\in I$, the projection from $\Tensor\fam A$ to $A_i$ is
    $\proj_i=\set{(\fam\alpha,\alpha_i)\st\fam\alpha\in\Tensor\fam A}$;
  \item for all $i\in I$, the restriction from $\Oplus\fam A$ to $A_i$ is 
    $\rest_i=\set{((i,\alpha),\alpha)\st\alpha\in A_i}$;
  \item the index relation $\indx$ from $\Oplus\fam A$ to $I$ is given by
    $\indx=\set{((i,\alpha),i)\st i\in I\land \alpha\in A_i}$. 
\end{itemize}
Then: each $\proj_i$ is a natural transformation from $\Tensor$ to $\Pi_i$;
each $\rest_i$ is a natural transformation from $\Oplus$ to $\Pi_i$;
and $\indx$ is a natural transformation from $\Oplus$ to $E_I$,
which is the constant functor $E_I\fam A=I$ and $E_I\fam f=\natInst{\id}I$.
Again, Lemma~\ref{lemma:natural:restriction} extends to $I$-ary type blind
functors and lax natural transformations between them.

\section{On the transport of finiteness structures}

\label{section:transport}

\subsection{Finiteness spaces}

Let $A$ and $B$ be sets, we write $A \finPolar B$ if $A\inter B$ is finite.  If
$\fA\subseteq\powerset A$, we define the \definitive{predual} of $\fA$ on $A$ as
$\dual[A]{\fA}=\set{a'\subseteq A\st \forall a\in \fA,\ a\finPolar a'}$.  By 
standard arguments on closure operators and orthogonality constructions, we have the
following properties:
\begin{itemize}
\item $\PFin A\subseteq\dual[A]\fA$; 
\item $\fA\subseteq\bidual[A]{\fA}$; 
\item if $\fA\subseteq\fA'$, then $\dual[A]{\fA'}\subseteq\dual[A]{\fA}$ and
  $\bidual[A]{\fA}\subseteq\bidual[A]{\fA'}$;
\item by the previous two items, $\dual[A]{\fA}= \tridual[A]{\fA}$;
\item $\dual[A]\fA$ is downwards closed for inclusion, \ie{} 
  $a\subseteq a'\in\dual[A]\fA$ implies $a\in\dual[A]\fA$;
\item $\dual[A]\fA$ is closed under finite unions, \ie{} 
  $a,a'\in\dual[A]\fA$ implies $a\union a'\in\dual[A]\fA$;
\item if $\bidual[A]\fA_i=\fA_i$ for all $i\in I$, then
  $\bidual[A]{\left(\Inter\fam\fA\right)}=\Inter\fam\fA$.
 \end{itemize}
A \definitive{finiteness structure} on $A$
is a set $\fA$ of subsets of $A$ such that $\bidual[A]{\fA} = \fA$.  Then a
\definitive{finiteness space} is a pair $\cA=\pars{\web\cA,\fin\cA}$
where $\web\cA$ is the underlying set, called the \definitive{web} of $\cA$,
and $\fin \cA$ is a finiteness structure on $\web{\cA}$.  We write $\dual{\cA}$
for the \definitive{dual} finiteness space: $\web{\dual{\cA}}=\web{\cA}$ and
$\fin{\dual{\cA}}=\dual[\web{\cA}]{\fin{\cA}}$.  The elements of $\fin{\cA}$
are called the \definitive{finitary subsets} of $\cA$.

For every set $A$, $(A,\PFin A)$ is a finiteness space
and $\dual{(A,\PFin A)}=(A,\powerset A)$.
In particular, each finite set $A$ is the web of exactly one
finiteness space: $(A,\PFin A)=(A,\powerset A)$. We introduce
the empty finiteness space $\fsEmpty$ with web $\emptyset$ and
the singleton finiteness space $\fsSgl$ with web $\set\emptyset$.
Having finite webs, $\fsEmpty$ and $\fsSgl$ are
identified with their respective duals:
$\fsZero=\dual\fsEmpty=\fsEmpty$ and $\fsBot=\dual\fsSgl=\fsSgl$.
We moreover introduce the space of \emph{flat natural numbers} $\fsFlatNat
=\pars{\nat,\PFin{\nat}}$.

The following reformulation of bidual closure is given by \citet{ehrhard:fs}:
\begin{lemma}
If $\fA\subseteq\powerset A$ is downwards closed for inclusion, 
then $a\in\bidual[A]\fA$ iff, for all infinite subset $a'\subseteq a$,
there is an infinite subset $a''\subseteq a'$  such that $a''\in\fA$.
\end{lemma}
In particular, the following \emph{does not} define a finiteness structure:
\begin{counter}[Communicated to us by Laurent Regnier]
  We say $t\subseteq\nat$ is \definitive{thin} if the sequence
  $\family{\frac{\card{t\inter\zeroTo {n-1}}}n}{n\in\nat}$ converges to $0$.
  Let $\thinSubsets$ be the set of all thin subsets of $\nat$. Examples of
  infinite thin subsets are $\set{n^2\st n\in\nat}$ and $\set{n^n\st n\in\nat}$.
  Of course, $\nat$ itself is not thin. Notice that every infinite subset
  $a\subseteq\nat$ contains an infinite thin subset: let $\genFam{\alpha}n\nat$ be the 
  ordered sequence of the elements of $a$; then, for instance,
  $\set{\alpha_{n^2}\st  n\in\nat}\in\thinSubsets$. Notice moreover that
  $\PFin{\nat}\subseteq\thinSubsets$, and that $\thinSubsets$ is downwards closed
  for inclusion and closed under finite unions. By the previous lemma, 
  $\nat\in\bidual[A]{\thinSubsets}$ and then
  $\bidual[\nat]{\thinSubsets}=\powerset{\nat}\not=\thinSubsets$. 
\end{counter}

All along the text, we provide relevant counter-examples in order to
motivate the various notions we introduce, and also to emphasize the complex
structure of finiteness spaces. These will often refer to a situation like the
above one: we say $\fA\subseteq\powerset A$ is a \definitive{fake
finiteness structure} on $A$ if $\fA$ is downwards closed for inclusion, 
closed under finite unions, and contains $\PFin{A}$, but
$\fA\not=\bidual[A]\fA$. Below we present another fake finiteness
structure, the properties of which will be useful in some of our arguments.

\begin{counter}
  \label{counter:codaggers}
  For all $n\in\nat$, write $\dagger_n=\set{(p,q)\st p=n\lor q=n}$.
  Then, for all $n\in\nat$, write $\fC_n=\dual[\nat\times\nat]{
  \set{\dagger_p\st p\ge n}}$. Being a dual set, each $\fC_n$ is a finiteness
  structure on $\nat\times\nat$. Moreover, $\fC_n\subseteq\fC_{n'}$ as soon as
  $n\le n'$. As a consequence, $\fC=\Union\fam\fC$ is downwards
  closed for inclusion, closed under finite unions and contains all finite
  subsets, but not every subset. However,
  $\dual[\nat\times\nat]{\fC}=\PFin{\nat\times\nat}$ whose dual is $\powerset{\nat\times\nat}$.
\end{counter}

\subsection{Transport of finiteness structures}
The following lemma will be used throughout the paper. It allows to transport a
finiteness structure on set $B$, along any relation $f$ from $A$ to $B$, provided
$f$ maps finite subsets of $A$ to finitary subsets of $B$.
\begin{lemma}[Transport]\label{lemma:transport}
  Let $A$ be a set, $\cB$ a finiteness space and 
  $f$ a relation from $A$ to $\web \cB$ such that
  $\relAppl f\alpha\in\fin\cB$ for all $\alpha\in A$.\footnote{
  Following the terminology of \citet{hs:glueing},
  this condition can be rephrased as $f$ being negative from 
  $(A,\powerset A,\PFin A)$ to $(\web\cB,\fin\cB,\fin{\dual\cB})$, 
  \ie{} for all $a\subseteq A$ and $b'\in\fin{\dual\cB}$, 
  $a\finPolar \relAppl{\relRev f} b'$ implies $\relAppl f a\polar b'$.
  It is however unclear, at the time of writing, under which hypotheses the transport lemma could be
  recast in this more general setting.
  }
  Then  $\transport{\cB,f}\eqdef \set{a \subseteq A
	\st\relAppl fa\in\fin\cB}$
  is a finiteness structure on $A$ and, more precisely, 
  $\transport{\cB,f}=\bidual[A]{\set{\relDiv fb\st b\in\fin{\cB}}}$.
\end{lemma}
\begin{proof}
  Write $\fA=\set{\relDiv fb\st b\in\fin\cB}$.
  The first inclusion is easy: $\transport{\cB,f}\subseteq\bidual[A]{\fA}$
  because, for all $a\in\transport{\cB,f}$ and $a'\in \dual[A]{\fA}$,
  $a\inter a'$ is finite. Indeed, $\relAppl fa\in\fin\cB$
  hence $a'\inter\pars{\relDiv f{\pars{\relAppl fa}}}$ is finite; 
  moreover $a\subseteq\relDiv f{\pars{\relAppl fa}}$.

  We now prove the reverse inclusion: let $a\in \bidual[A]{\fA}$, we establish
  that $a\in \transport{\cB,f}$, \ie{} $\relAppl fa\in\fin\cB$. It is sufficient to show
  that, for all $b'\in\fin{\dual{\cB}}$, $b''=\pars{\relAppl fa}\inter b'$ is
  finite. Since $b''\subseteq \relAppl fa$, for all $\beta\in b''$ there is
  $\alpha\in a$ such that $\beta\in\relAppl f\alpha$: by the axiom of choice,
  we obtain a function $\funDec{\phi}{b''}{a}$ such that $\beta\in\relAppl
  f{\phi(\beta)}$ for all $\beta\in b''$, which entails $b''\subseteq\relAppl
  f{\phi(b'')}$. Now it is sufficient to show that $\phi(b'')$ is finite.
  Indeed, in that case, $\relAppl
  f{\phi(b'')}=\Union_{\alpha\in\phi(b'')}\relAppl f\alpha$ is a finite union
  of finitary subsets of $\cB$: recall that by our hypothesis on $f$, $\relAppl
  f\alpha\in\fin{\cB}$ for all $\alpha\in A$. Hence $b''\in\fin{\cB}$ and,
  since we also have $b''\subseteq b'\in\fin{\dual{\cB}}$, $b''$ is finite.

  Since $\phi(b'')\subseteq a\in\bidual[A]{\fA}$, it will be sufficient to prove
  that $\phi(b'')\in \dual[A]{\fA}$ also. For that purpose, we consider
  $b\in\fin\cB$ and prove that $a''=\phi(b'')\inter\relDiv fb$ is finite.  If
  $\alpha\in a''$, there exists $\beta\in b''$ such that $\alpha=\phi(\beta)$
  and moreover $\relAppl f\alpha\subseteq b$; since $\beta\in\relAppl
  f{\phi(\beta)}=\relAppl f\alpha$, we obtain that $\beta\in b''\inter b$.
  Hence $a''\subseteq \phi(b''\inter b)$, which is finite because $\phi$ is a function and 
  $b''\inter b\subseteq b'\inter b$ is finite as $b'\in\fin{\dual{\cB}}$ and $b\in\fin{\cB}$.
\end{proof}
The reader should remark that the structure of this proof is very similar
to that of the characterization of the exponential modality, given by
\citet[Lemma 4]{ehrhard:fs}. Actually, we obtain this characterization as 
a straightforward application of transport:
\begin{example}\label{ex:oc}
  Let $\cA=(A,\fA)$ be a finiteness space, and recall that $\natInst\supp A$ is the only relation
  from $\oc{A}$ to $A$ such that $\relAppl{\natInst\supp A}{\ms
  \alpha}= \support{\ms\alpha}$ for all $\ms\alpha\in\oc{A}$. Notice in
  particular that $\support{\ms\alpha}\in\PFin{A}\subseteq \fA$.
  By the transport lemma, $(\oc A, \transport{\cA,\natInst\supp A})$ is a finiteness space that
  we denote by $\oc\cA$. We moreover have that $\relDiv{\natInst\supp A}a=\MFin{a}=\prom a$, and
  we obtain:
  \begin{equation*}
    \fin{\oc\cA}
    =\set{\ms a\subseteq\oc{\web\cA}\st\relAppl{\natInst{\supp}{\web{\cA}}}{\ms a}\in\fin\cA}
    =\bidual[\web{\cA}]{\set{\prom{a}\st{}a\in\fin{\cA}}}.
  \end{equation*}
\end{example}

The transport lemma is easily generalized to families of finiteness structures.
If we write $\relDiv{\fam f}{\fam b}$ for $\Inter\fam{\relDiv fb}=\Inter_{i\in
I}\pars{\relDiv{f_i}{b_i}}$, we obtain:
\begin{corollary}
  \label{coro:transport:family}
  Let $A$ be a set, $\fam{\cB}$ a family of finiteness spaces
  and $\fam f$ a family of relations such that, for all $\alpha\in
  A$ and all $i\in I$, $\relAppl {f_i}\alpha\in\fin{\cB_i}$.
  Then  $\transport{\fam{\cB},\fam f} \eqdef \set{a \subseteq A
  \st \forall i\in I,\ \relAppl{f_i}a\in\fin{\cB_i}}$
  is a finiteness structure on $A$ and, more precisely,
  $\transport{\fam{\cB},\fam f} = \bidual[A]{\set{\relDiv{\fam f}{\fam b}
  \st \fam{b\in\fin{\cB}}}}$.
\end{corollary}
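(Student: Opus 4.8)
The plan is to reduce the corollary to the single-relation Transport Lemma. The starting observation is that, directly from the definitions, $\transport{\fam\cB,\fam f}=\Inter_{i\in I}\transport{\cB_i,f_i}$, and by Lemma~\ref{lemma:transport} each $\transport{\cB_i,f_i}$ is a well-defined finiteness structure on $A$ (its hypothesis, $\relAppl{f_i}\alpha\in\fin{\cB_i}$ for all $\alpha$, is precisely what the corollary assumes). So the first assertion follows as soon as we know that an arbitrary intersection of finiteness structures on a fixed set is again a finiteness structure. That is immediate from the basic properties of the predual recalled above: writing the structures as $\fA_i$, from $\Inter_i\fA_i\subseteq\fA_j$ we get $\bidual[A]{\Inter_i\fA_i}\subseteq\bidual[A]{\fA_j}=\fA_j$ for every $j$, hence $\bidual[A]{\Inter_i\fA_i}\subseteq\Inter_i\fA_i$, the reverse inclusion being automatic.

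For the explicit formula I would set $\fA=\set{\relDiv{\fam f}{\fam b}\st\fam{b\in\fin\cB}}$ and establish $\transport{\fam\cB,\fam f}=\bidual[A]{\fA}$ by transcribing the two inclusions of the proof of Lemma~\ref{lemma:transport}, carrying the index along. For ``$\subseteq$'': if $a\in\transport{\fam\cB,\fam f}$, then the family with components $b_i=\relAppl{f_i}a$ satisfies $\fam{b\in\fin\cB}$, and $a\subseteq\relDiv{\fam f}{\fam b}\in\fA$, so for every $a'\in\dual[A]{\fA}$ the set $a\inter a'$ is finite; thus $a\in\bidual[A]{\fA}$. For ``$\supseteq$'': given $a\in\bidual[A]{\fA}$, fix $i\in I$ and $b'\in\fin{\dual{\cB_i}}$; it suffices to show $b''=(\relAppl{f_i}a)\inter b'$ is finite. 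Exactly as in Lemma~\ref{lemma:transport}, the axiom of choice yields $\funDec{\phi}{b''}{a}$ with $\beta\in\relAppl{f_i}{\phi(\beta)}$ for all $\beta\in b''$, so $b''\subseteq\relAppl{f_i}{\phi(b'')}$, and it is enough to prove $\phi(b'')$ finite; since $\phi(b'')\subseteq a\in\bidual[A]{\fA}$, it is enough to prove $\phi(b'')\in\dual[A]{\fA}$.

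The one point genuinely specific to the family case appears here. Given $\fam{b\in\fin\cB}$, we must bound $\phi(b'')\inter\relDiv{\fam f}{\fam b}$; since $\relDiv{\fam f}{\fam b}=\Inter_{j\in I}\relDiv{f_j}{b_j}\subseteq\relDiv{f_i}{b_i}$, this set is contained in $\phi(b'')\inter\relDiv{f_i}{b_i}$, which the computation of Lemma~\ref{lemma:transport} shows lies inside $\phi(b''\inter b_i)$ — finite because $\phi$ is a function and $b''\inter b_i\subseteq b'\inter b_i$ is finite ($b'\in\fin{\dual{\cB_i}}$, $b_i\in\fin{\cB_i}$). Hence $\phi(b'')\in\dual[A]{\fA}$ and $\phi(b'')$ is finite, so $\relAppl{f_i}{\phi(b'')}=\Union_{\alpha\in\phi(b'')}\relAppl{f_i}\alpha$ is a finite union of finitary subsets of $\cB_i$, hence finitary; combined with $b''\subseteq b'\in\fin{\dual{\cB_i}}$ this gives that $b''$ is finite, as required.

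I do not expect a real obstacle: the argument is a faithful replay of the proof of Lemma~\ref{lemma:transport}, the only new ingredient being the trivial remark $\relDiv{\fam f}{\fam b}\subseteq\relDiv{f_i}{b_i}$, which lets the per-index argument go through verbatim. The only thing to watch is that the index set $I$ may be infinite, so the first assertion must be argued through the predual operation as above rather than through closure under finite unions.
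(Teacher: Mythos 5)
Your proof is correct, and its first half (reduce to Lemma~\ref{lemma:transport} componentwise, then observe that an arbitrary intersection of finiteness structures is a finiteness structure via monotonicity of the bidual) is exactly the paper's argument. Where you diverge is on the inclusion $\bidual[A]{\fA}\subseteq\transport{\fam\cB,\fam f}$, which you prove by replaying the choice-function argument of Lemma~\ref{lemma:transport} with the index $i$ carried along. That works, but the paper gets it almost for free from what is already established: since each $\relAppl{f_j}{\pars{\relDiv{f_j}{b_j}}}\subseteq b_j\in\fin{\cB_j}$ and finiteness structures are downwards closed, every $\relDiv{\fam f}{\fam b}=\Inter_{i\in I}\pars{\relDiv{f_i}{b_i}}\subseteq\relDiv{f_j}{b_j}$ lies in $\transport{\cB_j,f_j}$ for all $j$, hence $\fA\subseteq\transport{\fam\cB,\fam f}$; and because $\transport{\fam\cB,\fam f}$ has just been shown to be bidual-closed, monotonicity and idempotence of the bidual give $\bidual[A]{\fA}\subseteq\transport{\fam\cB,\fam f}$ in one line. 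So the corollary genuinely is a corollary: the single-relation lemma plus formal properties of the closure operator suffice, and no second appeal to the axiom of choice is needed. Your version costs more work but loses nothing in correctness, and your closing remark about infinite $I$ forcing the intersection (rather than finite-union) argument for the first assertion is exactly the right thing to watch.
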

\begin{proof}
  By Lemma~\ref{lemma:transport}, each $\transport{\cB_i,f_i}$ is a finiteness
  structure on $A$. 
  As bidual closure commutes to intersections of finiteness structures,
  $\transport{\fam{\cB},\fam f}=\Inter_{i\in I}\transport{\cB_i,f_i}$
  is a finiteness structure.
  Let us prove that $\transport{\fam{\cB},\fam f} = \bidual[A]{\set{\Inter_{i\in
  I}\pars{\relDiv{f_i}{b_i}}  \st \fam{b\in\fin{\cB}}}}$.
  Let $a\in\transport{\fam{\cB},\fam f}$: for all $i\in I$, 
  $a\in\transport{\cB_i,f_i}$, hence setting $b_i=\relAppl{f_i}{a}$ we obtain 
  $b_i\in\fin{\cA_i}$ and $a\subseteq\relDiv{f_i}{b_i}$. We have thus found 
  $\fam{b\in\fin{\cB}}$ such that $a\subseteq\Inter_{i\in I}\pars{\relDiv{f_i}{b_i}}$, 
  which proves one inclusion.
  For the reverse, let $\fam {b\in \fin{\cB}}$: for all $j\in I$, 
  $\Inter_{i\in I}(\relDiv{f_i}{b_i})\subseteq\relDiv{f_j}{b_j}$. Now, observe 
  that $\relDiv{f_j}{b_j}\in \transport{\cB_j,f_j}$ which is downwards closed for inclusion,
  hence $\Inter_{i\in I}(\relDiv{f_i}{b_i})\in\transport{\cB_j,f_j}$.
  We have just proved that $\set{\relDiv{\fam f}{\fam b}
  \st \fam{b\in\fin{\cB}}}\subseteq \transport{\fam{\cB},\fam f}$, 
  and we conclude since
  bidual closure is monotonic and idempotent.
\end{proof}

\begin{example}
  For all family $\fam{\cA}$ of finiteness spaces, 
  we denote by $\Tensor\fam{\cA}$ the finiteness space 
  $\pars{\prod\fam{\web{\cA}},\transport{\fam{\cA},\fam\proj}}$:
  for all $\ag a\subseteq\prod\fam{\web{\cA}}$, 
  $\ag a\in\fin{\Tensor\fam{\cA}}$ iff 
  $\relAppl{\proj_i}{\ag a}\in\fin{\cA_i}$ for all $i\in I$.
  We moreover obtain $\fin{\Tensor\fam{\cA}}=\bidual[\prod\fam{\web{\cA}}]{
  \set{\prod \fam a\st \fam{a\in\fin{\cA}}}}$.

  Similarly, let $\With\fam{\cA}$ be the finiteness space
  $\pars{\sum\fam{\web{\cA}},\transport{\fam{\cA},\fam\rest}}$: for all $\ag
  a\subseteq\sum\fam{\web{\cA}}$, $\ag a\in\fin{\With\fam{\cA}}$ iff
  $\relAppl{\rest_i}{\ag a}\in\fin{\cA_i}$ for all $i\in I$.  Notice that this implies
  $\fin{\With\fam{\cA}}={\set{\sum\fam a\st \fam{a\in\fin{\cA}}}}$, hence the
  bidual closure is optional in that case.

  Finally, we define the finiteness space $\Oplus\fam{\cA}=
  \pars{\sum\fam{\web{\cA}},\transport{(\fam{\cA},\cI),(\fam\rest,\indx)}}$
	where $\cI=(I,\PFin{I})$:
  $\ag a\in\fin{\Oplus\fam{\cA}}$ iff $\relAppl{\indx}{\ag a}$ is finite and 
  $\relAppl{\rest_i}{\ag a}\in\fin{\cA_i}$ for all $i\in I$.
  We obtain $\fin{\Oplus\fam{\cA}}=
  \set{\sum_{i\in J}a_i\st J\finSubseteq I\land \forall i\in J,\ a\in\fin{\cA_i}}$, 
  the bidual closure being optional.
  We have $\parDual{\Oplus\fam{\cA}}=\With{\fam{\dual{\cA}}}$,
  and moreover $\Oplus\fam{\cA}=\With\fam{\cA}$ when $I$ is finite.
\end{example}

Finally we introduce two other constructions on finiteness spaces which are
not directly obtained by transport. If $\fam\cA$ is a family of finiteness
spaces, we set $\Parr\fam\cA=\parDual{\Tensor\fam{\dual\cA}}$. From this, 
we derive $\cA\limpl\cB=\dual\cA\parr\cB=\parDual{\cA\tensor\dual\cB}$
for all finiteness spaces $\cA$ and $\cB$.

\subsection{Finitary relations}

Let $\cA$ and $\cB$ be two finiteness spaces: we say a relation $f$ from
$\web{\cA}$ to $\web{\cB}$ is \definitive{finitary} from $\cA$ to $\cB$ if: for all
$a\in\fin{\cA}$, $\relAppl fa\in\fin{\cB}$, and for all
$b'\in\fin{\dual{\cB}}$, $\relAppl{\relRev f}{b'}\in\fin{\dual{\cA}}$.
The following characterization of finitary relations is given by
\citet[Section 1.1]{ehrhard:fs}:

\begin{lemma}\label{lemma:finitary:relation}
  Let $f\subseteq\web{\cA}\times\web{\cB}$. The following propositions 
  are equivalent:
  \begin{enumerate}[(a)]
    \item \label{fin:def} $f$ is finitary from $\cA$ to $\cB$;
    \item \label{fin:rev} $\relRev f$ is finitary from $\dual{\cB}$ to $\dual{\cA}$;
    \item \label{fin:point} for all $a\in\fin{\cA}$, 
      $\relAppl fa\in\fin{\cB}$ and, for all $\beta\in\web{\cB}$, 
      $\relAppl{\relRev f}\beta\in\fin{\dual\cA}$;
    \item \label{fin:limpl} $f\in\fin{\cA\limpl\cB}$. 
  \end{enumerate}
\end{lemma}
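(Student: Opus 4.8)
The plan is to unfold the definition of a finitary relation and reduce each equivalence to elementary facts about finite intersections, using the biduality of finiteness structures. I will repeatedly invoke the following bookkeeping facts: since $\fin\cA$ and $\fin\cB$ are finiteness structures, $\dual{\dual\cA}=\cA$ and $\dual{\dual\cB}=\cB$; a subset $x\subseteq\web\cA$ lies in $\fin{\dual\cA}=\dual[\web\cA]{\fin\cA}$ iff $x\inter a$ is finite for every $a\in\fin\cA$, and symmetrically $x\subseteq\web\cB$ lies in $\fin\cB=\bidual[\web\cB]{\fin\cB}$ iff $x\inter b'$ is finite for every $b'\in\fin{\dual\cB}$; and $\PFin{\web\cB}\subseteq\fin{\dual\cB}$, so $\set{\beta}\in\fin{\dual\cB}$ for every $\beta\in\web\cB$. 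Granting these, the equivalence of \ref{fin:def} and \ref{fin:rev} is immediate: spelling out ``$\relRev f$ is finitary from $\dual\cB$ to $\dual\cA$'' and using $\relRev{\relRev f}=f$, $\dual{\dual\cA}=\cA$, $\dual{\dual\cB}=\cB$ yields exactly the two clauses defining ``$f$ finitary from $\cA$ to $\cB$'', merely with the clauses exchanged.

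For \ref{fin:def} $\Leftrightarrow$ \ref{fin:limpl}, I would first note, from the construction of $\Tensor$ recalled above, that $\fin{\cA\Tensor\dual\cB}$ is the bidual of $\set{a\times b'\st a\in\fin\cA,\ b'\in\fin{\dual\cB}}$; hence $\fin{\cA\limpl\cB}=\fin{\parDual{\cA\Tensor\dual\cB}}$ is the predual of that same set (the predual of a set of subsets coincides with the predual of its bidual), so that \ref{fin:limpl} holds iff $f\inter\pars{a\times b'}$ is finite for all $a\in\fin\cA$ and $b'\in\fin{\dual\cB}$. The bridge to \ref{fin:def} is the elementary remark that, for any relation $f$ and any subsets $a$, $b'$, the set $f\inter\pars{a\times b'}$ is finite iff both $\relAppl fa\inter b'$ and $\relAppl{\relRev f}{b'}\inter a$ are finite: these are exactly the second and first projections of $f\inter\pars{a\times b'}$, and a set of pairs is finite precisely when both its projections are. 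By the bookkeeping facts, \ref{fin:def} unfolds to ``$\relAppl fa\inter b'$ and $\relAppl{\relRev f}{b'}\inter a$ are finite for all $a\in\fin\cA$, $b'\in\fin{\dual\cB}$'', which is the same statement.

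Finally, \ref{fin:def} $\Rightarrow$ \ref{fin:point} is trivial: the first clause of \ref{fin:def} is the first clause of \ref{fin:point}, and applying the second clause of \ref{fin:def} to $b'=\set{\beta}$ gives $\relAppl{\relRev f}{\beta}\in\fin{\dual\cA}$. The step deserving care — and the main obstacle — is \ref{fin:point} $\Rightarrow$ \ref{fin:def}: one must obtain $\relAppl{\relRev f}{b'}\in\fin{\dual\cA}$ for an arbitrary $b'\in\fin{\dual\cB}$, and since finiteness structures are not closed under infinite unions, one cannot simply write $\relAppl{\relRev f}{b'}=\Union_{\beta\in b'}\relAppl{\relRev f}{\beta}$. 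Instead I would fix $a\in\fin\cA$ and show $\relAppl{\relRev f}{b'}\inter a$ is finite. By \ref{fin:point}, $\relAppl fa\in\fin\cB$, so $c=\relAppl fa\inter b'$ is finite; and every $\alpha\in\relAppl{\relRev f}{b'}\inter a$ is related by $f$ to some $\beta\in b'$, which then lies in $\relAppl fa\inter b'=c$, whence $\relAppl{\relRev f}{b'}\inter a\subseteq\Union_{\beta\in c}\pars{\relAppl{\relRev f}{\beta}\inter a}$. The right-hand side is a finite union of finite sets — each $\relAppl{\relRev f}{\beta}$ lies in $\fin{\dual\cA}$ by \ref{fin:point} and $a\in\fin\cA$ — hence finite, so $\relAppl{\relRev f}{b'}\in\fin{\dual\cA}$, which is the remaining clause of \ref{fin:def}.
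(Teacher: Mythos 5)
Your proof is correct. Note that the paper does not actually prove this lemma: it simply declares the characterization ``easily established'' and cites Section~1.1 of Ehrhard's paper, so there is no in-text argument to compare against. Your write-up supplies the missing details accurately: the reduction of (a) and (d) to the common statement ``$\relAppl fa\inter b'$ and $\relAppl{\relRev f}{b'}\inter a$ are finite for all $a\in\fin\cA$, $b'\in\fin{\dual\cB}$'' correctly uses $\dual[A]\fA=\tridual[A]\fA$ and the bidual description of $\fin{\cA\Tensor\dual\cB}$ given in the paper, and the projection argument for the finiteness of $f\inter\pars{a\times b'}$ is sound. You also correctly identify the only step with real content, (c)~$\Rightarrow$~(a), and handle it properly: one cannot take the infinite union $\Union_{\beta\in b'}\relAppl{\relRev f}\beta$ inside $\fin{\dual\cA}$, but intersecting with a fixed $a\in\fin\cA$ reduces the union to one over the finite set $\relAppl fa\inter b'$. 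This is the same ``finite union of finitary sets'' device the paper uses in its proof of the transport lemma (Lemma~\ref{lemma:transport}), so your argument is entirely in the spirit of the surrounding text.
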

Notice that the identity relation $\natInst\id{\web\cA}$ is finitary from
$\cA$ to itself, and that finitary relations compose: we thus
introduce the category $\Fin$ whose objects are finiteness spaces and morphisms
are finitary relations. Again, although we should precise that a morphism in
$\Fin(\cA,\cB)$ is a triple $(\cA,\cB,f)$ such that $f$ is a finitary relation
from $\cA$ to $\cB$, we will in general identify $\Fin(\cA,\cB)$ with
$\fin{\cA\limpl\cB}$.
Functors in $\Fin$ are defined similarly to those in $\Rel$: a functor $\cT$ is
the data of a finiteness space $\cT\cA$ for all finiteness space $\cA$ and of a
finitary relation $\natInst\cT{\cA,\cB}f$ from $\cT\cA$ to $\cT\cB$ for all
$f\in\Fin\pars{\cA,\cB}$, preserving identities and composition.

Some functors in $\Fin$ give rise to functors in $\Rel$: we say a
functor $\cT$ in $\Fin$ \definitive{has a web} if there exists a
functor $T$ in $\Rel$, such that $\web{\cT\cA}=T{\web\cA}$ for all
finiteness space $\cA$, and $\natInst\cT{\cA,\cB}
f=\natInst T{\web\cA,\web\cB}f$ for all $f\in\Fin\pars{\cA,\cB}$. 
We then say $T$ is the web of $\cT$ and
write $T=\web\cT$. Notice that in that case, $\natInst T{A,B} f$ must be
finitary from $\cT{(A,\fA)}$ to $\cT{(B,\fB)}$ whenever $f$ is
finitary from ${(A,\fA)}$ to ${(B,\fB)}$.
We say $\cT$ is type blind if $\natInst\cT{\cA,\cB}f=\natInst\cT{\cA',\cB'}f$
whenever both sides of the equation are defined, \ie{}
$f\in\fin{\cA\limpl\cB}\inter\fin{\cA'\limpl\cB'}$.
Clearly, if $\cT$ has a web, then $\cT$ is type blind iff $\web\cT$ is type blind.
Of course, not all functors in $\Fin$ have a web: 
\begin{counter}
  Let $\cF$ denote the functor of finiteness structures and direct images: 
  $\cF\cA=\pars{\fin\cA,\PFin{\fin\cA}}$ and 
  $\natInst\cF{\cA,\cB}f=\set{(a,\relAppl fa)\st a\in\fin\cA}$.
  The functoriality of $\cF$ is clear as soon as we show 
  that $\natInst\cF{\cA,\cB}f\in\fin{\cF\cA\limpl\cF\cB}$
  when $f\in\fin{\cA\limpl\cB}$. First,
  $\natInst\cF{\cA,\cB}f\subseteq\fin\cA\times\fin\cB$, since $f$ is finitary.
  Moreover, $\natInst\cF{\cA,\cB}f$ is the graph of a function, hence it sends
  $\PFin{\fin\cA}$ to $\PFin{\fin\cB}$. That
  $\relRev{\pars{\natInst\cF{\cA,\cB}f}}$ sends
  $\dual{\fin{\cF\cB}}$ to $\dual{\fin{\cF\cA}}$
  is automatic since $\dual{\fin{\cF\cA}}=\powerset{\fin\cA}$.
\end{counter}

The definition of $\Fin^I$ and of $I$-ary functors in $\Fin$ is
straightforward, and matches exactly that of $\Rel^I$ from $\Rel$.
Order relations on finiteness spaces, together with associated notions of
monotonicity and continuity, will be discussed thoroughly in Section~\ref{section:continuity}.
Let us just remark that $\Fin$ is not cpo-enriched. Indeed,
if $f\subseteq \web\cA\times\web\cB$ is not finitary from $\cA$ to $\cB$, then
$\PFin f$ is a set of finitary relations but it has no finitary upper bound.

On a side note, remark that the construction of a finiteness space by the
transport lemma is not initial, in the sense that the relation $f$ from $A$ to
$\cB$ through which we transport the finiteness structure of $\cB$ is not
finitary from $(A,\transport{\cB,f})$ to $\cB$ in general: although the
condition --- $f$ sends every element of $A$ to a finitary subset of
$\cB$ --- is
necessary for $f$ to be finitary, it is not sufficient.
\begin{counter}
  \label{counter:not:initial}
  The relation $\natInst\supp{\web\cA}$ from $\web{\oc\cA}$ to $\web{\cA}$ is
  not finitary from $\oc\cA$ to $\cA$ whenever $\web{\cA}$ is non-empty:
  let $\alpha\in\web{\cA}$, then
  $\relAppl{\relRev{\natInst\supp{\web\cA}}}\alpha
  \supseteq\prom{\set\alpha}\setminus\set{\emptyMulSet}
  \in\fin{\oc\cA}$ which is an infinite finitary subset, hence 
  $\relAppl{\relRev{\natInst\supp{\web\cA}}}\alpha\not\in\dual[\web{\oc\cA}]{\fin{\oc\cA}}$;
  we conclude by Lemma~\ref{lemma:finitary:relation}.
\end{counter}

The following section explains how functors in $\Fin$ may be derived 
from functors in $\Rel$ \emph{via} the transport lemma.

\subsection{Transport functors}\label{subsec:transport:functor}

Let $I$ be a fixed set of indexes.
Let $T$ be a functor from $\Rel^I$ to $\Rel$. We call \definitive{ownership
relation} on $T$ the data of a quasi-functional lax natural transformation
$\own_i$ from $T$ to the projection functor $\Pi_i$, for all $i\in I$. Notice
that any ownership relation on $T$ satisfies the hypotheses of
Corollary~\ref{coro:transport:family}.    Indeed,   for   any   family
$\fam{\cA}$ of finiteness spaces, we have $\natInst{\own_i}{\fam{\web \cA}} \in
\Rel(T\fam{\web \cA},\web{\cA_i})$ and, since $\natInst{\own_i}{\fam{\web \cA}}$ is
quasi-functional, $\relAppl{\natInst{\own_i}{\fam{\web{\cA}}}}{\ag \alpha}$ is finite for all 
$\ag\alpha\in T\fam{\web{\cA}}$, hence it is finitary in $\cA_i$.
Therefore, $\transport{\fam{\cA},\fam\own}$ is always a
finiteness structure on $T\fam{\web{\cA}}$. We call \definitive{transport
situation} the data of a functor $T$ and an ownership relation $\fam\own$ on
$T$. In such a situation, for all family $\fam{\cA}$ of finiteness spaces, we
write $\lift T{\fam\own}{\fam\cA}$ for the finiteness space
$\pars{T\fam{\web{\cA}},\transport{\fam{\cA},\fam{\own}}}$ and, for all
finitary relation $\fam f$ from $\fam{\cA}$ to $\fam{\cB}$, we write $\lift T{\fam\own}\fam
f=T\fam f$. Notice that $\lift T{\fam\own}$ defines a functor from $\Fin^I$ to $\Fin$ iff $T\fam f$ is
finitary from $\lift T{\fam\own}\fam{\cA}$ to $\lift T{\fam\own}\fam{\cB}$ as soon as $\fam f$ is
finitary from $\fam{\cA}$ to $\fam{\cB}$. In that case, we say $\lift T{\fam\own}$ is the
\definitive{transport functor} deduced from the transport situation $(T,\fam{\own})$.

We now provide sufficient conditions for a transport situation to give rise to
a transport functor. A \definitive{shape relation} on $(T,\fam\own)$ is the data of a
fixed set $S$ of \definitive{shapes} and a quasi-functional lax natural
transformation $\shp$ from $T$ to the constant functor $E_S$ which sends
every set to $S$ and every relation to $\id^S$, subject to the
following additional condition: for all $\ag a\subseteq T\fam A$, if
$\relAppl{\shp}{\ag a}$ is finite and, for all $i\in I$, $\relAppl{\own_i}{\ag
a}$ is finite, then $\ag a$ is itself finite.  

In other words, with every
$T$-element $\ag\alpha\in T\fam A$ is associated a set of shapes
$\relAppl{\shp}{\ag\alpha}$, which is finite (because $\shp$ is
quasi-functional). Moreover shapes are preserved by
$T$-relations; more precisely, if $(\ag\alpha,\ag\beta)\in T\fam f$ then every
shape of $\ag\beta$ is a shape of $\ag\alpha$ (because $\shp$ is a lax natural
transformation). Notice that when $T$ is symmetric, $\relRev{T\fam
f}=T\fam{\relRev f}$, and we actually obtain
$\relAppl{\shp}{\ag\alpha}=\relAppl{\shp}{\ag\beta}$. The additional condition
states that any $T$-subset $\ag a\subseteq T\fam A$ which involves finitely many
shapes and has a finite support in each component is itself
finite.
\begin{lemma}\label{lemma:transport:situation}
  A transport situation on a symmetric functor defines a transport functor as
  soon as it admits a shape relation.
\end{lemma}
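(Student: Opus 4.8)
The plan is to verify directly that if $\fam f$ is finitary from $\fam\cA$ to $\fam\cB$, then $T\fam f$ is finitary from $\lift T{\fam\own}\fam\cA=(T\fam{\web\cA},\transport{\fam\cA,\fam\own})$ to $\lift T{\fam\own}\fam\cB$. Rather than checking both clauses of the definition of a finitary relation head‑on, I would invoke the pointwise criterion of Lemma~\ref{lemma:finitary:relation}: it suffices to show (i) that $\relAppl{T\fam f}{\ag a}\in\fin{\lift T{\fam\own}\fam\cB}$ for every $\ag a\in\fin{\lift T{\fam\own}\fam\cA}$, and (ii) that $\relAppl{\relRev{T\fam f}}{\ag\beta}\in\fin{\parDual{\lift T{\fam\own}\fam\cA}}$ for every single $\ag\beta\in T\fam{\web\cB}$. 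Throughout I would unfold these memberships via Corollary~\ref{coro:transport:family}: on the one hand $\ag a\in\transport{\fam\cA,\fam\own}$ iff $\relAppl{\own_i}{\ag a}\in\fin{\cA_i}$ for all $i\in I$; on the other hand, since $\dual{\transport{\fam\cA,\fam\own}}=\tridual{\set{\relDiv{\fam\own}{\fam b}\st\fam{b\in\fin\cA}}}=\dual{\set{\relDiv{\fam\own}{\fam b}\st\fam{b\in\fin\cA}}}$, a subset $\ag a'\subseteq T\fam{\web\cA}$ lies in $\fin{\parDual{\lift T{\fam\own}\fam\cA}}$ iff $\ag a'\inter\Inter_{i\in I}\relDiv{\own_i}{b_i}$ is finite for every family $\fam{b\in\fin\cA}$.

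Clause (i) is the easy half and uses neither symmetry nor the shape relation. Given $\ag a$ with $\relAppl{\own_i}{\ag a}\in\fin{\cA_i}$, lax naturality of $\own_i$ from $T$ to $\Pi_i$ gives $\relComp{\own_i}{\pars{T\fam f}}\subseteq\relComp{f_i}{\own_i}$, whence $\relAppl{\own_i}{\pars{\relAppl{T\fam f}{\ag a}}}\subseteq\relAppl{f_i}{\pars{\relAppl{\own_i}{\ag a}}}$. The right‑hand side lies in $\fin{\cB_i}$ because $f_i$ is finitary, and finiteness structures are downwards closed, so $\relAppl{T\fam f}{\ag a}\in\transport{\fam\cB,\fam\own}$.

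Clause (ii) is where the real work lies, and where symmetry and the shape relation enter. Fix $\ag\beta\in T\fam{\web\cB}$ and a family $\fam{b\in\fin\cA}$, and put
\[
  \ag a=\relAppl{\relRev{T\fam f}}{\ag\beta}\inter\Inter_{i\in I}\relDiv{\own_i}{b_i}=\set{\ag\alpha\in T\fam{\web\cA}\st\pars{\ag\alpha,\ag\beta}\in T\fam f\land\forall i\in I,\ \relAppl{\own_i}{\ag\alpha}\subseteq b_i}.
\]
I must prove $\ag a$ finite, and the only available tool is the extra clause defining a shape relation, so it suffices to show that $\relAppl\shp{\ag a}$ is finite and that each $\relAppl{\own_i}{\ag a}$ is finite. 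Since $T$ is symmetric, $\pars{\ag\beta,\ag\alpha}\in T\fam{\relRev f}$ for every $\ag\alpha\in\ag a$. Applying lax naturality of $\shp$ to the relation $\fam{\relRev f}$ and using $E_S\fam{\relRev f}=\id^S$ yields $\relComp{\shp}{\pars{T\fam{\relRev f}}}\subseteq\shp$, hence $\relAppl\shp{\ag\alpha}\subseteq\relAppl\shp{\ag\beta}$ for each such $\ag\alpha$, so $\relAppl\shp{\ag a}\subseteq\relAppl\shp{\ag\beta}$, which is finite because $\shp$ is quasi‑functional. Applying lax naturality of $\own_i$ to $\fam{\relRev f}$ gives $\relComp{\own_i}{\pars{T\fam{\relRev f}}}\subseteq\relComp{\relRev{f_i}}{\own_i}$, so $\relAppl{\own_i}{\ag\alpha}\subseteq\relAppl{\relRev{f_i}}{\pars{\relAppl{\own_i}{\ag\beta}}}$ for each $\ag\alpha\in\ag a$; combining with $\relAppl{\own_i}{\ag\alpha}\subseteq b_i$ and taking unions over $\ag\alpha\in\ag a$ gives $\relAppl{\own_i}{\ag a}\subseteq b_i\inter\relAppl{\relRev{f_i}}{\pars{\relAppl{\own_i}{\ag\beta}}}$. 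Here $\relAppl{\own_i}{\ag\beta}$ is finite ($\own_i$ quasi‑functional), hence lies in $\fin{\dual{\cB_i}}$, and $\relRev{f_i}$ is finitary from $\dual{\cB_i}$ to $\dual{\cA_i}$ (Lemma~\ref{lemma:finitary:relation} applied to $f_i$), so $\relAppl{\relRev{f_i}}{\pars{\relAppl{\own_i}{\ag\beta}}}\in\fin{\dual{\cA_i}}$. Since $b_i\in\fin{\cA_i}$, the defining property of finiteness spaces gives $b_i\finPolar\relAppl{\relRev{f_i}}{\pars{\relAppl{\own_i}{\ag\beta}}}$, so $\relAppl{\own_i}{\ag a}$ is finite. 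The shape clause then yields that $\ag a$ is finite, which establishes (ii) and hence the lemma.

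The step I expect to be the main obstacle is (ii): one has to isolate exactly this set $\ag a$ and observe that each $\relAppl{\own_i}{\ag a}$ gets squeezed between a finitary subset of $\cA_i$ (from dividing by $b_i$) and a finitary subset of $\dual{\cA_i}$ (from transporting $\ag\beta$ backwards through $\relRev f$, which is where both symmetry and the quasi‑functionality of $\own_i$ are used), so that the intersection property of finiteness spaces forces it to be finite; feeding this, together with the analogous bound on $\relAppl\shp{\ag a}$, into the shape clause then closes the argument. Everything else is routine chasing with lax‑naturality squares and the transport/bidual identities.
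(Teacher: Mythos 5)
Your proof is correct and follows essentially the same route as the paper's: clause (i) by lax naturality of $\own_i$, and clause (ii) by using symmetry to write $\relRev{T\fam f}=T\fam{\relRev f}$, bounding the shapes by $\relAppl{\shp}{\ag\beta}$ and each $\relAppl{\own_i}{-}$ by a polar pair of finitary sets, then invoking the shape clause. The only (harmless) difference is that you test polarity against the generators $\relDiv{\fam\own}{\fam b}$ and appeal to $\dual{\fA}=\tridual{\fA}$, where the paper intersects directly with an arbitrary $\ag a\in\fin{\lift T{\fam\own}\fam\cA}$ and uses $\relAppl{\own_i}{\ag a}\in\fin{\cA_i}$.
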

\begin{proof}
  Let $(T,\fam\own)$ be a transport situation with $T$ a symmetric functor, 
  and let $\shp$ be a shape relation for this situation.
  By the above discussion on transport situations, we only have to prove
  that $\ag f\eqdef T\fam f$ is a finitary relation from $\lift T{\fam\own}{\fam{\cA}}$
  to $\lift T{\fam\own}{\fam{\cB}}$ as soon as, for all $i\in I$, $f_i$ is a finitary
  relation from $\cA_i$ to $\cB_i$. 
  
  First, let us show that if $\ag a\in\fin{\lift T{\fam\own}{\fam{\cA}}}$, then
  $\relAppl{\ag f}{\ag a}\in
  \fin{\lift T{\fam\own}{\fam{\cB}}} $. Indeed, for all $i\in I$,
  $\relAppl{\own_i}{\relAppl{\ag f}{\ag a}}\subseteq
  \relAppl{f_i}{\relAppl{\own_i}{\ag a}}$
  because $\own_i$ is a lax natural transformation from $T$ to $\Pi_i$.
  Moreover, by the definition of $\fin{\lift T{\fam\own}\fam{\cA}}$,
  $\relAppl{\own_i}{\ag a}\in\fin{\cA_i}$ and then
  $\relAppl{f_i}{\relAppl{\own_i}{\ag a}}\in\fin{\cB_i}$, because $f_i$ is a finitary
  relation.
  
  We are left to prove that for all $\ag\beta\in\web{{\lift T{\fam\own}\fam{\cB}}}$,
  $\ag a'=\relAppl{\relRev{\ag f}}{\ag\beta}\in\fin{\dual{\pars{\lift T{\fam\own}\fam{\cA}}}}$,
  \ie{} for all $\ag a\in\fin{\lift T{\fam\own}\fam \cA}$, $\ag a\inter\ag a'$
  is finite. By the properties of shape relations, it is sufficient to prove
  that $\relAppl{\shp}{\pars{\ag a\inter\ag a'}}$ is
  finite and, for all $i\in I$, $\relAppl{\own_i}\pars{\ag a\inter\ag a'}$ is finite.
  Notice that $T$ being symmetric, we have  $\relRev{\ag f}=T\fam {\relRev f}$.
  Then, since $\shp$ is a lax natural transformation, 
  $\relComp{\shp}{\relRev{\ag f}}\subseteq \shp$.
  We obtain that $\relAppl{\shp}{\pars{\ag a\inter\ag
  a'}}\subseteq\relAppl{\shp}{\ag a'}=\relAppl{\shp}{\relAppl{\relRev{\ag
  f}}{\ag\beta}}\subseteq\relAppl{\shp}{\ag\beta}$ which is finite, 
  since $\shp$ is quasi-functional. Similarly, for all $i\in I$, $\own_i$ is a
  lax natural transformation from $T$ to $\Pi_i$, hence
  $\relComp{\own_i}{\relRev{\ag f}}\subseteq\relComp{\relRev{f_i}}{\own_i}$: we obtain 
  $\relAppl{\own_i}{\ag a'}\subseteq
  \relAppl{\relRev{f_i}}{\relAppl{\own_i}{\ag\beta}}$.
  Since $\own_i$ is quasi-functional $\relAppl{\own_i}{\ag\beta}$ is finite
  and in particular $\relAppl{\own_i}{\ag\beta}\in\fin{\dual{\cB_i}}$: $f_i$
  being a finitary relation, we obtain that
  $\relAppl{\relRev{f_i}}{\relAppl{\own_i}{\ag\beta}}\in\fin{\dual{\cA_i}}$, 
  and thus $\relAppl{\own_i}{\ag a'}\in \fin{\dual{\cA_i}}$.
  By the definition of $\fin{\lift T{\fam\own}\fam \cA}$, we
  also have $\relAppl{\own_i}{\ag a}\in \fin{\cA_i}$, and we conclude that
  $\relAppl{\own_i}{\pars{\ag a\inter
  \ag a'}}\subseteq\pars{\relAppl{\own_i}{\ag a}}\inter\pars{\relAppl{\own_i}{\ag a'}}$ is finite.
\end{proof}
We do not claim the hypotheses of Lemma \ref{lemma:transport:situation} are
minimal. Notice however that the symmetry of $T$ is essential in the proof,
since it allows $\fam\own$ to control the behaviour of $\relRev{T\fam f}$ as
well as of $T\fam f$. Moreover, the existence of a shape relation is crucial,
since some transport situations with symmetric functor do not preserve finitary
relations:
\begin{counter}
  Consider the symmetric functor $S$ of $\nat$-indexed sequences: for all set $A$,
  $SA=A^{\nat}$ and, for all relation $f\subseteq A\times B$, $Sf=\set{(\fam
  \alpha,\fam \beta)\st\forall n\in\nat,\ (\alpha_n,\beta_n)\in f}$. The
  projections $\proj_n=\set{(\fam\alpha,\alpha_n)\st \fam\alpha\in A^\nat}$
  define an ownership relation $\fam\proj$ on $S$.
  Now consider the unique finiteness space $2$ with web $\set{0,1}$.
	Then $S\web{2}=\set{0,1}^{\nat}$ and \[\transport{2,\fam s}=\set{\ag a\st
	\forall n\in\nat,\relAppl{\proj_n}{\ag\alpha}\in\fin{2}}=\powerset{S\web 2};\]
  in particular $\dual{\transport{2,s}}=\PFin{S\web 2}$.
  Now let $f=\set{(0,0),(1,0)}$ which is a finitary relation from $2$ to $2$:
  $Sf$ is not finitary because $\relAppl{\relRev{Sf}}{\family{0}{n\in\nat}}=S\web 2$ 
  which is infinite.
\end{counter}

\begin{example}
  The transport functor $\oc$ in $\Fin$ is derived from the transport situation
  $(\oc,\supp)$, with shape relation $\natInst{\size}{A}=\set{(\ms\alpha,\card\ms\alpha)\st
  \ms\alpha\in\oc A}$.
  The $I$-ary transport functor $\With$ (resp. $\Oplus$) in $\Fin$ is derived
  from the transport situation $\pars{\Oplus,\fam{\rest}}$ (resp. $\pars{\Oplus,\fam{\rest},\indx}$)
  with shape relation $\indx$ (resp. $\emptyset$).
  Finally, we only consider finite tensor products: 
  the binary functor $\tensor$ in $\Fin$ is derived from the transport situation
  $\pars{\tensor,\proj_1,\proj_2}$ with empty shape relation.
  Indeed, infinitary tensor products do not define functors: 
  the functor $S$ in the above counter-example is an instance of
  $\Tensor$ with $I=\nat$.
\end{example}

\section{Continuity and fixpoints}
\label{section:continuity}

In the classical setting of Scott domains and more precisely of
complete partial orders, continuity is the key
property for a function to have a fixpoint, see for
instance~\cite{amadio-curien:domains}. 
It is well known that $\Rel$ endowed with the inclusion order is a complete partial order, 
and that $\Rel$ is cpo-enriched.
Then the continuity of an endofunctor on $\Rel$ boils down to the
commutation of the functor with directed unions of both sets and relations.
Moreover the fixpoint of any $(n+1)$-ary continuous functor exists and is an $n$-ary
functor in $\Rel$.

The situation in $\Fin$ is more complex, if only because the order relations we
consider on finiteness spaces must have something to do with finiteness
structures, whose behaviour w.r.t. the inclusion order on webs is not trivial.
Our first task is thus to describe the different orders derived from set
inclusion that can naturally endow $\Fin$. We put forward two of them: the
largest one, \definitive{finiteness inclusion}, is a cpo on finiteness spaces;
the most restrictive one, \definitive{finiteness extension}, reflects more
closely the inclusion order on webs. We then show the interest of studying both
orders simultaneously: if a sequence $\fam\cA$ of finiteness spaces is
increasing for finiteness extension, then its supremum for finiteness
inclusion is \definitive{exact}, \ie{} it finiteness structure is obtained as the
union of the finiteness structures in the sequence. This property prompts us to 
introduce various notions of continuity for finiteness inclusion, depending on
the exactness of the suprema we consider. We then discuss the continuity of 
transport functors: type blindness is an essential property, in that it
ensures ownership relations are stable under inclusions of webs (Lemma
\ref{lemma:natural:restriction}).
 
Finally, recall $\Fin$ is not cpo-enriched: the least fixpoint of an
$(n+1)$-ary functor in one of its variables might not be functorial in the
others. In the next section, we will however exhibit a restricted class of
$(n+1)$-ary transport functors, the fixpoints of which are $n$-ary functors:
the transport technique is again essential in that development.
At the time of writing, we do not know whether this could be generalized to a
larger class of transport functors.

\subsection{Three order relations on finiteness spaces}

\label{section:continuity:orders}

We can consider two natural orders on finiteness spaces, both based on the
inclusion of webs:
\begin{itemize}
  \item \definitive{finiteness inclusion}: write $\cA \finInc \cB$ if 
    $\web{\cA}\subseteq\web{\cB}$ and $\fin{\cA}\subseteq\fin{\cB}$ ;
  \item \definitive{finiteness extension}: write $\cA \finExt \cB$ if 
    $\web{\cA}\subseteq\web{\cB}$ and $\fin{\cA}=\fin{\cB}\inter\powerset{\web{\cA}}$.
\end{itemize}
Notice  that the  dual construction  is increasing  for  the extension
order: $\cA\finExt\cB$  iff $\dual{\cA}\finExt\dual{\cB}$. In general,
this   does  not   hold  for   finiteness  inclusion:   we   may  have
$\cA\finInc\cB$     and     $\dual{\cA}\not\finInc\dual{\cB}$.    When
$\web{\cA}=\web{\cB}$    we    even    obtain   $\cA\finInc\cB$    iff
$\dual{\cB}\finInc\dual{\cA}$ (whereas, in that case, $\cA\finExt \cB$
iff $\cA=\cB$). Thus we could equivalently consider the order given by
the dual inclusion, $\cA\dualInc\cB$ if $\dual{\cA}\finInc\dual{\cB}$,
in place of $\finInc$.  On  a side note, observe that $\cA \finExt\cB$
iff     we    have     $\cA\finInc    \cB$     and    $\cA\dualInc\cB$
simultaneously. Moreover  $\fsZero$ is the  minimum of each  of these
orders (recall that $\fsZero$ is the empty finiteness space). From
now on,  we consider only  $\finInc$ and $\finExt$: the  properties of
$\dualInc$ are exactly those of $\finInc$ up to finiteness duality.

\begin{lemma}
  Every family $\fam{\cA}$ of finiteness spaces admits a least upper bound
  $\FinSup\fam{\cA}$ (their \definitive{finiteness supremum}) and a
greatest lower bound
  $\FinInf\fam{\cA}$ (their \definitive{finiteness infimum}) for the finiteness
  inclusion order.  They are given by
  $\web{\FinSup{\fam{\cA}}}=\Union\fam{\web{\cA}}$, 
  $\web{\FinInf{\fam{\cA}}}=\Inter\fam{\web{\cA}}$, 
  $\fin{\FinSup{\fam{\cA}}}=\parBidual[\Union\fam{\web{\cA}}]{\Union\fam{\fin{\cA}}}$ and
  $\fin{\FinInf{\fam{\cA}}}=\Inter\fam{\fin{\cA}}$.
  In particular, $\finInc$ is a complete partial order on finiteness spaces.
\end{lemma}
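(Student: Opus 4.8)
The plan is to treat the supremum and the infimum separately, taking in each case exactly the object described in the statement as the candidate. For each, two things must be checked: that the pair (web, collection of subsets) really is a finiteness space, and that it enjoys the expected universal property. Recalling that $\cC\finInc\cD$ unfolds to $\web\cC\subseteq\web\cD$ \emph{and} $\fin\cC\subseteq\fin\cD$, the universal properties will be almost immediate once the first point is settled; the single step needing care is the one where biduals are computed over two different ambient sets.

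For the infimum, set $A=\Inter\fam{\web\cA}$ and let $\iota_i$ denote the inclusion relation of $A$ into $\web{\cA_i}$, for $i\in I$. For every $\alpha\in A$ we have $\relAppl{\iota_i}\alpha=\set\alpha\in\PFin{\web{\cA_i}}\subseteq\fin{\cA_i}$, so Corollary~\ref{coro:transport:family} applies to the pair $(\fam\cA,\fam\iota)$ and yields that
\[
  \transport{\fam\cA,\fam\iota}=\set{a\subseteq A\st a\in\fin{\cA_i}\text{ for all }i\in I}=\Inter\fam{\fin\cA}
\]
is a finiteness structure on $A$; this defines $\FinInf\fam\cA$. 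It is a lower bound since $A\subseteq\web{\cA_j}$ and $\Inter\fam{\fin\cA}\subseteq\fin{\cA_j}$ for each $j$, and it is the greatest one: if $\cC\finInc\cA_j$ for all $j$, then $\web\cC\subseteq\Inter\fam{\web\cA}=A$ and $\fin\cC\subseteq\Inter\fam{\fin\cA}$, that is, $\cC\finInc\FinInf\fam\cA$.

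For the supremum, set $B=\Union\fam{\web\cA}$, note $\Union\fam{\fin\cA}\subseteq\powerset B$, and let $\fB=\bidual[B]{\Union\fam{\fin\cA}}$. Since $\dual[B]{(\cdot)}=\tridual[B]{(\cdot)}$, every set of the form $\bidual[B]{(\cdot)}$ is a finiteness structure on $B$, so $\FinSup\fam\cA=(B,\fB)$ is a finiteness space; it is an upper bound because $\web{\cA_j}\subseteq B$ and $\fin{\cA_j}\subseteq\Union\fam{\fin\cA}\subseteq\fB$ for all $j$. To see it is the least upper bound, let $\cC$ be any upper bound, so $B\subseteq\web\cC$ and $\Union\fam{\fin\cA}\subseteq\fin\cC$; the point is to deduce $\fB\subseteq\fin\cC$, which is not immediate since $\fB$ is a bidual taken over $B$ whereas $\web\cC$ may be strictly larger. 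To bridge this, apply Lemma~\ref{lemma:transport} to the inclusion of $B$ into $\web\cC$: it gives that $\fin\cC\inter\powerset B$ is a finiteness structure on $B$; this structure contains $\Union\fam{\fin\cA}$ (each $\fin{\cA_i}\subseteq\fin\cC$ and $\fin{\cA_i}\subseteq\powerset B$), hence contains $\bidual[B]{\Union\fam{\fin\cA}}=\fB$ by minimality of the bidual closure among finiteness structures containing a given set, and it is contained in $\fin\cC$. Therefore $\fB\subseteq\fin\cC$, i.e.\ $\FinSup\fam\cA\finInc\cC$.

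Finally, a poset in which every family admits a least upper bound is a complete lattice (the infimum of a family being the supremum of the set of its lower bounds), so the closing assertion follows, with $\fsEmpty=\FinSup\emptyset$ as the least element; alternatively both bounds have been exhibited explicitly above. The only genuine obstacle is the ambient-set mismatch for biduals in the leastness half of the supremum, which is resolved by the restriction-of-a-finiteness-structure consequence of Lemma~\ref{lemma:transport}; everything else is bookkeeping with the definition of $\finInc$.
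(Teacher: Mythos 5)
Your proof is correct. It differs from the paper's in that the paper dispatches the whole lemma with a one-line appeal to the general theory of closure operators (finiteness structures on a fixed web are the fixpoints of the closure operator $\fA\mapsto\bidual[A]{\fA}$, and such fixpoints always form a complete lattice), whereas you carry out the verification explicitly. The substantive point you add — and which the paper's citation quietly glosses over — is that the webs of the $\cA_i$ and of a competing upper bound $\cC$ need not coincide, so the bidual defining $\fin{\FinSup\fam\cA}$ is computed over $\Union\fam{\web\cA}$ while $\fin\cC$ lives over a possibly larger set; your resolution, restricting $\fin\cC$ to $B=\Union\fam{\web\cA}$ via the transport lemma applied to the inclusion relation and then invoking minimality of the bidual closure, is exactly the right bridge (one could also verify directly that $\fin\cC\inter\powerset B$ is bidually closed on $B$, but reusing Lemma~\ref{lemma:transport} is clean and in the spirit of the paper). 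Your use of Corollary~\ref{coro:transport:family} for the infimum is likewise a harmless detour: the intersection of finiteness structures is closed simply because an intersection of dual (hence bidually closed) sets is bidually closed, but the transport route is valid. In short, what your approach buys is a self-contained argument that actually addresses the varying-web subtlety; what the paper's buys is brevity, at the cost of leaving that subtlety to the reader.
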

\begin{proof}
  This is a general fact for bidual closure operators.
\end{proof}
In the following, unless otherwise stated, suprema and infima are always
relative to the inclusion order $\finInc$, as described in the previous lemma.

Notice that, in general, $\Union\fam{\fin{\cA}}$ is not a finiteness
structure on $\web{\FinSup\cA}$ by itself, hence the bidual closure in 
$\fin{\FinSup\fam{\cA}}$:
\begin{counter}
  \label{counter:union:not:closed}
  Let $\fA$ be any fake finiteness structure on some set $A$, that is such that $\fA\subsetneq \bidual[A]\fA$.
  For all $f\in\fA$, let $\cA_f=(f,\powerset f)$. Then
  $\Union_{f\in\fA}\fin{\cA_f}=\fA$, but
  $\fin{\FinSup_{f\in\fA}\cA_f}=\bidual[A]\fA$.
\end{counter}
When however $\Union\fam{\fin{\cA}}$  is a  finiteness  structure, we
have: $\fin{\FinSup{\fam\cA}}=\parBidual{\Union\fam{\fin{\cA}}}=\Union\fam{\fin{\cA}}$ 
and we say $\FinSup\fam{\cA}$ is an \definitive{exact} supremum.

Suprema and infima for $\finExt$ do not exist in general, 
even considering the variant up to bijections: $\cA\finExtSim\cB$ if there is 
$\cA'\cong\cA$ such that $\cA'\finExt\cB$.\footnote{This preorder is 
considered by \citet[unpublished preliminary version]{ehrhard:fs} in order
to describe the interpretation of second order quantification of linear logic.} 
\begin{counter}
  Let $\fam{\fF}=\genFam{\fF}n{\nat}$ be the unique sequence of finiteness
  spaces such that, for all $n\in\nat$, $\web{\fF_n}=\zeroTo{n-1}$: then any
  finiteness space of web $\nat$ is a $\finExt$-upper bound of all the
  $\fF_n$'s, hence a $\finExtSim$-upper bound; but, e.g., $\fsFlatNat=(\nat,\PFin{\nat})$
  and $\dual{\fsFlatNat}$ have no common
  $\finExtSim$-lower bound.
\end{counter}
Notice however that in that case $\FinSup\fam\fF=\fsFlatNat$ is an exact supremum.
This remark is actually an instance of a more general fact. 
Indeed:
\begin{lemma}
  \label{lemma:extension:sequence}
  If $\fam \cA$ is an $\finExt$-increasing sequence, then $\FinSup\fam \cA$ is exact.
\end{lemma}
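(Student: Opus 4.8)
The plan is to prove directly that $\fF\eqdef\Union\fam{\fin\cA}$ is already a finiteness structure on $A\eqdef\Union\fam{\web\cA}$. Since the web of $\FinSup\fam A$ is $A$ and its finitary subsets are $\bidual[A]{\fF}$, while $\fF\subseteq\bidual[A]{\fF}$ holds for free, this gives $\fin{\FinSup\fam A}=\fF$, which is precisely what it means for $\FinSup\fam A$ to be exact. So the whole task reduces to the inclusion $\bidual[A]{\fF}\subseteq\fF$, which I would establish by contraposition: fixing $a\subseteq A$ with $a\notin\fF$, I exhibit $a'\in\dual[A]{\fF}$ such that $a\inter a'$ is infinite.

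Two preliminary observations set up the argument. First, since $\fam A$ is $\finExt$-increasing and $\finExt$ is transitive, $\fin{\cA_m}=\fin{\cA_n}\inter\powerset{\web{\cA_m}}$ whenever $m\le n$; in particular, as soon as $a\subseteq\web{\cA_N}$ for some $N$, membership $a\in\fF$ is equivalent to $a\in\fin{\cA_N}$, and any $b\in\fF$ satisfies $b\subseteq\web{\cA_m}$ for some $m$. Second, the dual construction is increasing for $\finExt$, so $\dual{\fam A}$ is again an extension sequence; hence $\fin{\dual{\cA_N}}=\fin{\dual{\cA_m}}\inter\powerset{\web{\cA_N}}$ for $N\le m$.

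Now the case split. \emph{Case 1: $a\not\subseteq\web{\cA_n}$ for every $n$.} Using that the webs form an increasing chain whose union is $A\supseteq a$, I build inductively distinct elements $\alpha_k\in a$ and strictly increasing indices $n_k$ with $\alpha_{k+1}\in a\setminus\web{\cA_{n_k}}$; the infinite set $a'\eqdef\set{\alpha_k\st k\in\nat}\subseteq a$ then meets each $\web{\cA_n}$ in a finite set, hence $a'\in\dual[A]{\fF}$ by the first observation, while $a\inter a'=a'$ is infinite. \emph{Case 2: $a\subseteq\web{\cA_N}$ for some $N$.} Then $a\notin\fin{\cA_N}=\bidual[\web{\cA_N}]{\fin{\cA_N}}$, so there is $a'\in\dual[\web{\cA_N}]{\fin{\cA_N}}=\fin{\dual{\cA_N}}$ with $a\inter a'$ infinite. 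To see $a'\in\dual[A]{\fF}$, take any $b\in\fin{\cA_m}$: if $m\le N$ then $b\in\fin{\cA_N}$ by the first observation, so $b\inter a'$ is finite; if $m>N$ then $a'\in\fin{\dual{\cA_m}}$ by the second observation, so again $b\inter a'$ is finite.

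The only mildly delicate point is the diagonal extraction in Case 1 — checking that the $\alpha_k$ are distinct and that $a'$ has finite intersection with every web — but this is routine once one exploits that the webs are a chain covering $A$. Everything else is bookkeeping with the definitions of $\finExt$ and of the predual, together with the fact (already recorded in the excerpt) that duality is $\finExt$-monotone.
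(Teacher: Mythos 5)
Your proof is correct, but it takes a genuinely different route from the paper's. The paper disposes of the lemma as a one-line application of the transport lemma (Corollary~\ref{coro:transport:family}): it transports along the family consisting of $f_n=\id_{\web{\cA_n}}$ into each $\cA_n$ together with $f_*=\set{(\alpha,n)\st\alpha\notin\web{\cA_n}}$ into $\fsFlatNat$, and checks that $\transport{\fam f}$ coincides with $\Union\fam{\fin{\cA}}$ --- the relation $f_*$ has finite images precisely because the webs form an increasing chain, and the identification of the transported structure with the union uses the extension property. You instead verify $\bidual[A]{\Union\fam{\fin{\cA}}}\subseteq\Union\fam{\fin{\cA}}$ by hand, splitting on whether the candidate set lies inside some $\web{\cA_N}$ (where you produce a separating set inside $\dual{\cA_N}$ and use the $\finExt$-monotonicity of duality to show it lies in $\dual[A]{\Union\fam{\fin{\cA}}}$) or not (where you extract an infinite subset meeting each web finitely). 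I checked both cases and the two preliminary observations; they are sound, and your argument visibly uses exactly the two hypotheses the paper flags as essential right after the lemma: the linear ordering of the family (your Case 1 extraction) and the extension order (your Case 2 duality step). The paper's proof is shorter and exhibits $\FinSup\fam{\cA}$ as itself a transported structure, which fits the general theme; yours is self-contained, avoids the transport machinery entirely, and makes the role of each hypothesis more explicit.
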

\begin{proof}
  Apply    the    transport    lemma    in    the    form    of    the
  Corollary~\ref{coro:transport:family}                              to
  $A=\Union_{n\in\nat}\web{\cA_n}$   and  to   the   following  $(\set
  *\union\nat)$-indexed families of finiteness spaces and relations:
  \begin{itemize}
  \item $\cB_*=\cN$ and $\forall
n\in\nat$, $\cB_n=\cA_n$
  \item $f_*=\set{(\alpha,n)\st  \alpha\not\in\web{\cA_n}}$  and
   $\forall n\in\nat$, $f_n=\natInst\id{\web{\cA_n}}=\set{(\alpha,\alpha)\st
  \alpha\in\web{\cA_n}}$.
\end{itemize}
Then, using  that $\fam\cA$  is increasing,
  the reader can easily check that:
  \begin{equation*}
    \transport{(\cB_*,\fam\cB),(f_*,\fam f)}=\set{a\subseteq
      \Union\fam{\web{\cA}}\st\exists p\in\nat,\,a\subseteq\web{\cA_p}\land
      \forall n\in\nat,\ a\inter \web{\cA_n}\in\fin{\cA_n}}.
  \end{equation*}
  We conclude that $\Union\fam{\fin{\cA}}=\transport{(\cB_*,\fam\cB),(f_*,\fam f)}$,
  hence $\Union\fam{\fin{\cA}}$ is a finiteness structure.
\end{proof}
Notice that this relies heavily on both the linear ordering of the family and the
extension order as is shown by the following counter-examples.
\begin{counter}[A directed family for finiteness extension]
  Notice that the family of finiteness spaces in
  Counter-example~\ref{counter:union:not:closed}, 
  whose supremum is not exact,
  is however directed for $\finExt$.
\end{counter}
\begin{counter}[An increasing sequence for finiteness inclusion]
  Recall that the sequence of finiteness structures $\genFam{\fC}n{\nat}$ of
  Counter-example~\ref{counter:codaggers} is increasing for inclusion.
  We then form the sequence $\genFam{\cC}n{\nat}$ where 
  $\cC_n=(\nat\times\nat,\fC_n)$, which is increasing for $\finInc$.
  Then $\FinSup\fam\cC$ is not exact, because $\Union\fam\fC$ is a fake finiteness structure.
\end{counter}

Lemma~\ref{lemma:extension:sequence} emphasizes the fact that we should not
focus on finiteness inclusion or finiteness extension separately, but rather
investigate how they can interact. Notice for instance that, as a corollary
of Lemma~\ref{lemma:extension:sequence}, for all $\finExt$-increasing functor
$\cT$ from $\Fin$ to $\Fin$, $\fix \cT=\FinSup_{n\in\nat} \cT^n\fsZero$ is exact.
In the following we show that this defines the least fixpoint of $\cT$ up to some
hypotheses on $\cT$ w.r.t. both finiteness inclusion and finiteness extension.

\subsection{Exact continuity and direct continuity}
\label{section:exact:direct}

A \definitive{directed supremum} is the $\finInc$-supremum of a 
$\finInc$-directed family.
We say a $\finInc$-monotonic functor $\cT$ in $\Fin$ is:
\begin{itemize}
  \item \definitive{weakly continuous} if $\cT$ commutes to directed suprema when they are exact; 
  \item \definitive{exactly continuous} if $\cT$ commutes to exact directed suprema;
  \item \definitive{directly continuous} if $\cT$ commutes to all directed suprema.
\end{itemize}
Let us precise the second case: $\cT$ is exactly continuous iff it is weakly continuous
and $\cT\FinSup\fam{\cA}=\FinSup\fam{\cT\cA}$ is exact for all exact directed supremum
$\FinSup\fam{\cA}$. In particular, both direct continuity and exact
continuity imply weak continuity, but there is no \emph{a priori} implication
between direct continuity and exact continuity: a directly continuous functor may not
preserve exactness; an exactly continuous functor may not preserve non-exact
suprema.  Moreover, notice that exactly continuous (resp. directly continuous)
functors compose, but weakly continuous ones may not: if $\cT$ and $\cU$ are weakly
continous functors and $\FinSup\fam{\cA}$ is an exact directed supremum, we do
not know whether $\FinSup\fam{\cT\cA}$ is exact, hence we can not
deduce that $\cU$ commutes to this supremum.

The main property we shall use about weakly continuous functors (and \emph{a fortiori}
exactly continuous or directly continous ones) is that they admit least fixpoints, 
as soon as they preserve finiteness extensions.
\begin{lemma}\label{lemma:continuity:fixpoint}
  If $\cT$ is a weakly continuous functor, which is moreover $\finExt$-increasing, 
  then $\fix \cT=\FinSup_{n\in\nat} \cT^n\fsZero$ is the ($\finInc$-)least
  fixpoint of $\cT$.
\end{lemma}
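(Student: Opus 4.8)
The plan is to run the usual Kleene iteration, taking care to track which of the two orders each step uses. First I would check that the sequence $\pars{T^n\fsEmpty}_{n\in\nat}$ is simultaneously $\finInc$- and $\finExt$-increasing. Since $\fsEmpty$ is the minimum for both orders, $\fsEmpty\finInc T\fsEmpty$ and $\fsEmpty\finExt T\fsEmpty$; the inductive steps follow by applying $T$, which is $\finInc$-monotonic (being a weakly continuous functor in $\Fin$) and $\finExt$-increasing by hypothesis. In particular $\pars{T^n\fsEmpty}_{n\in\nat}$ is an extension sequence, so by Lemma~\ref{lemma:extension:sequence} its supremum $\fix T=\FinSup_{n\in\nat}T^n\fsEmpty$ is exact; and being a $\finInc$-chain, the family is $\finInc$-directed.

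Next I would invoke weak continuity: the directed supremum $\fix T$ being exact, $T$ commutes with it, so $T(\fix T)=\FinSup_{n\in\nat}T(T^n\fsEmpty)=\FinSup_{n\in\nat}T^{n+1}\fsEmpty$. The family $\pars{T^{n+1}\fsEmpty}_{n\in\nat}$ differs from $\pars{T^n\fsEmpty}_{n\in\nat}$ only by the extra term $T^0\fsEmpty=\fsEmpty$, which lies $\finInc$-below every other term, so the two families have the same least upper bound. Hence $T(\fix T)=\fix T$, i.e. $\fix T$ is a fixpoint of $T$.

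Finally, for leastness, let $\cB$ be any finiteness space with $T\cB=\cB$. A straightforward induction, using that $\fsEmpty$ is the $\finInc$-minimum and that $T$ is $\finInc$-monotonic, gives $T^n\fsEmpty\finInc\cB$ for all $n\in\nat$; thus $\cB$ is a $\finInc$-upper bound of $\pars{T^n\fsEmpty}_{n\in\nat}$, whence $\fix T=\FinSup_{n\in\nat}T^n\fsEmpty\finInc\cB$. Therefore $\fix T$ is the $\finInc$-least fixpoint of $T$.

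The one delicate point is the appeal to weak continuity: $T$ is only assumed to commute with \emph{exact} directed suprema, so the argument genuinely needs $\fix T$ to be exact — which is precisely where the $\finExt$-increasing hypothesis enters, through Lemma~\ref{lemma:extension:sequence}. This is the concrete manifestation of the earlier remark that finiteness inclusion and finiteness extension must be handled together rather than separately; everything else is routine lattice-theoretic bookkeeping.
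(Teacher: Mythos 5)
Your proof is correct and follows essentially the same route as the paper: establish that $\pars{T^n\fsEmpty}_{n\in\nat}$ is an extension sequence so that $\fix T$ is an exact directed supremum (via Lemma~\ref{lemma:extension:sequence}), apply weak continuity to get the fixpoint equation, and conclude leastness by the usual induction. The only (immaterial) divergence is in the leastness step, where the paper iterates the inequation $\fsEmpty\finExt\cY$ and converts to $\finInc$ at the end, whereas you iterate $\fsEmpty\finInc\cY$ directly using the $\finInc$-monotonicity built into the definition of weak continuity; both are valid.
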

\begin{proof}
  We have already remarked in section \ref{section:continuity:orders} that
  $\fix \cT$ is an exact directed supremum: hence $\cT\fix \cT=\FinSup_{n\in\nat}
  \cT^{n+1}\fsZero= \fix \cT$ because $\fsZero$ is minimum.
  Now let $\cY$ be any fixpoint of $\cT$: 
  by iterating the application of $\cT$ to the inequation $\fsZero\finExt\cY$, 
  we obtain $\cT^n\fsZero\finExt \cT^n\cY=\cY$, hence $\cT^n\fsZero\finInc \cY$ 
  for all $n\in\nat$, and finally $\fix \cT\finInc\cY$.
\end{proof}

In order to generalize the definitions of continuity to $I$-ary functors, we adapt 
the same conventions as in the relational setting.
By $\fam{\maf{\cA}}$, we denote an
$I$-indexed family $\genFam{\maf{\cA}}iI$ of families of finiteness spaces,
where each $\maf{\cA}_i=\family{\cA_{i,j}}{j\in J_i}$ takes indices in some
variable set $J_i$. We say $\fam{\maf{\cA}}$ is directed if each
$\maf{\cA}_i$ is directed. We write $\fam{\FinSup\maf{\cA}}$ for
$\family{\FinSup\maf{\cA}_i}{i\in I}=\family{\FinSup_{j\in
J_i}\cA_{i,j}}{i\in I}$ and call this family 
the supremum of $\fam{\maf{\cA}}$: we say this supremum is exact
if each $\FinSup\maf{\cA_i}$ is. Finally, if $\cT$ is a functor from $\Fin^I$ to
$\Fin$, we write $\maf{\cT\fam{\cA}}$ for $\family{\cT\family{\cA_{i,j_i}}{i\in I}}{\fam{j\in
J}}$.

\begin{definition}
Let $\cT$ be a $\finInc$-monotonic functor from $\Fin^I$ to $\Fin$. We say $\cT$ is:
\begin{itemize}
  \item \definitive{exactly continuous} if it commutes to all exact 
    directed suprema, \ie{} $\FinSup\maf{\cT\fam{\cA}}$ is exact 
    and $\cT\pars{\fam{\FinSup\maf{\cA}}}=\FinSup\maf{\cT\fam{\cA}}$ and as soon as 
    $\fam{\maf{\cA}}$ is directed and $\fam{\FinSup\maf{\cA}}$ is exact;
  \item \definitive{directly continuous} if it commutes to all directed suprema,
    \ie{} $\cT\pars{\fam{\FinSup\maf{\cA}}}=\FinSup\maf{\cT\fam{\cA}}$ 
    as soon as $\fam{\maf{\cA}}$ is directed.
\end{itemize}
\end{definition}

The following result follows from the associativity of suprema:
\begin{lemma}\label{lemma:composition:continuity}
  Directly continuous (resp. exactly continuous) functors compose:
  if $\cT$ is a directly continuous (resp. exactly continuous) functor from $\Fin^I$ to
  $\Fin$ and, for all $i\in I$, $U_i$ is a directly continuous (resp. exactly
  continuous) functor from $\Fin^{J_i}$ to $\Fin$ then 
  $\cT\circ \fam U$ is a directly continuous (resp. exactly continuous) functor from 
  $\Fin^{\sum\fam J}$ to $\Fin$.
\end{lemma}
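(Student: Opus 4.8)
The plan is to prove Lemma~\ref{lemma:composition:continuity} by reducing the commutation of $T\circ\fam U$ with componentwise directed suprema to two successive applications of the continuity hypotheses, once for each layer, exploiting the associativity of least upper bounds. First I would fix an $\sum\fam J$-indexed family $\fam{\fam{\fam{\cA}}}$ of families of finiteness spaces which is componentwise directed, so that for each $i\in I$ and each $j\in J_i$ we have a directed family $\fam{\cA}_{i,j}=\family{\cA_{i,j,k}}{k\in K_{i,j}}$. For a choice function $\fam{\fam{k}}$ (i.e.\ $k_{i,j}\in K_{i,j}$ for all $i,j$), I will write $U_i\fam{\cA}_{i,\fam k_i}$ for the value of $U_i$ on the family $\family{\cA_{i,j,k_{i,j}}}{j\in J_i}$, and then $\pars{T\circ\fam U}\fam{\fam{\cA}}_{\fam{\fam k}}=T\family{U_i\fam{\cA}_{i,\fam k_i}}{i\in I}$.

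The key computation is then: by the assumption on each $U_i$ (directly or exactly continuous from $\Fin^{J_i}$ to $\Fin$), we get $U_i\pars{\fam{\FinSup\fam{\cA}_i}}=\FinSup\fam{U_i\fam{\cA}_i}$ (and, in the exact case, this supremum is exact provided each $\FinSup\fam{\cA}_{i,j}$ is). Then the family $\family{\FinSup\fam{U_i\fam{\cA}_i}}{i\in I}$ is componentwise directed --- each $\fam{U_i\fam{\cA}_i}$ is directed because $U_i$ is $\finInc$-monotonic and the index set $\prod_{j\in J_i}K_{i,j}$ is directed when each $K_{i,j}$ is --- so applying the continuity of $T$ gives $T\pars{\family{\FinSup\fam{U_i\fam{\cA}_i}}{i\in I}}=\FinSup\fam{T\family{U_i\fam{\cA}_i}{i\in I}}$, with exactness preserved in the exact case. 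Composing these two equalities yields $\pars{T\circ\fam U}\pars{\fam{\FinSup\fam{\FinSup\fam{\cA}}}}=\FinSup\fam{\pars{T\circ\fam U}\fam{\fam{\cA}}}$, which is exactly Equation~\ref{eq:continuity} for $T\circ\fam U$; the monotonicity of $T\circ\fam U$ is immediate since composites of $\finInc$-monotonic functors are $\finInc$-monotonic.

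The main obstacle --- more bookkeeping than genuine difficulty --- is matching the indexing conventions: on the left of the desired equation the suprema range over the product $\prod_{(i,j)\in\sum\fam J}K_{i,j}$, whereas the two-step argument first takes suprema over each $\prod_{j\in J_i}K_{i,j}$ and then over $\prod_{i\in I}$, and one must invoke the associativity of $\FinSup$ (``a sup of sups over a partitioned index set is the sup over the whole set'', which holds since $\FinSup$ is a genuine least upper bound in the complete lattice $(\Fin,\finInc)$) to see these agree. One also has to check that directedness is inherited correctly through both layers, and, for the exact case, that exactness of the inner suprema $\FinSup\fam{\cA}_{i,j}$ is what licenses exactness of $U_i\pars{\fam{\FinSup\fam{\cA}_i}}$, whose exactness in turn licenses the application of exact continuity of $T$. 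I expect these verifications to be routine given the definitions, so the proof in the paper is likely just the one sentence ``This follows from the associativity of suprema'' --- which I would expand only as far as recording the two-step chain of equalities above.
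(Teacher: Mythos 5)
Your proof is correct and follows exactly the route the paper intends: the paper itself gives no detailed argument, stating only that the lemma ``follows from the associativity of suprema'' and that writing it out amounts to index-juggling with families of families of families. Your two-step chain (inner continuity of the $U_i$, directedness and exactness passed up to $T$, then reindexing $\prod_{i\in I}\prod_{j\in J_i}K_{i,j}$ as $\prod_{(i,j)\in\sum\fam J}K_{i,j}$) is precisely the omitted verification, correctly carried out.
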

We do not detail the proof as it amounts to a futile exercise in formality: 
we have to consider families of families of families of finiteness spaces, then
simply check that the above definitions apply, up to some juggling with indices.

\subsection{Continuity of transport functors}

\label{section:continuity:transport}

In this section, we establish the properties of type blind transport
functors w.r.t. the order relations on finiteness spaces. 
Notice that the same results would actually hold for arbitrary transport functors, 
provided the properties established in Lemma \ref{lemma:natural:restriction} hold.

\begin{lemma}
  All type blind transport functors are monotonic for both $\finExt$ and
  $\finInc$.
\end{lemma}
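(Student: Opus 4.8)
The plan is to unwind the definition of a transport functor on objects and verify, clause by clause, the two conditions defining $\finInc$ (finiteness inclusion) and $\finExt$ (finiteness extension); the only ingredients needed are already available, namely that functors from $\Rel^I$ to $\Rel$ are monotonic for set inclusion (Section~\ref{section:rel}) and the restriction property of natural transformations from Lemma~\ref{lemma:natural:restriction}. So I would fix a transport situation $(T,\fam\own)$ with transport functor $\lift T{\fam\own}$ and two families $\fam\cA$, $\fam\cB$ of finiteness spaces such that $\cA_i\finInc\cB_i$ (resp. $\cA_i\finExt\cB_i$) for every $i\in I$.

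First I would dispatch the webs, which is common to both orders: in either case $\web{\cA_i}\subseteq\web{\cB_i}$ for all $i$, so $T\fam{\web\cA}\subseteq T\fam{\web\cB}$ by monotonicity of $T$ for set inclusion, i.e. $\web{\lift T{\fam\own}\fam\cA}\subseteq\web{\lift T{\fam\own}\fam\cB}$. Next I would record the key remark about the ownership relations: since each $\own_i$ is a lax natural transformation from $T$ to $\Pi_i$, Lemma~\ref{lemma:natural:restriction} (in its evident variant for lax natural transformations between $I$-ary functors) applied to the inclusion $\web{\cA_i}\subseteq\web{\cB_i}$ gives, for every $\ag a\subseteq T\fam{\web\cA}$, that the direct image of $\ag a$ under $\own_i$ taken at $\fam{\web\cA}$ equals its direct image under $\own_i$ taken at $\fam{\web\cB}$; moreover this common image is contained in $\web{\cA_i}$, simply because $\own_i$ is a transformation to $\Pi_i$. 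Write $\relAppl{\own_i}{\ag a}$ for it.

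Then the two order comparisons are short computations. For $\finInc$: if $\ag a\in\transport{\fam\cA,\fam\own}$ then $\ag a\subseteq T\fam{\web\cA}\subseteq T\fam{\web\cB}$ and, for each $i$, $\relAppl{\own_i}{\ag a}\in\fin{\cA_i}\subseteq\fin{\cB_i}$, so $\ag a\in\transport{\fam\cB,\fam\own}$; hence $\transport{\fam\cA,\fam\own}\subseteq\transport{\fam\cB,\fam\own}$ and $\lift T{\fam\own}\fam\cA\finInc\lift T{\fam\own}\fam\cB$. For $\finExt$: the same computation, now using $\fin{\cA_i}=\fin{\cB_i}\inter\powerset{\web{\cA_i}}\subseteq\fin{\cB_i}$, gives $\transport{\fam\cA,\fam\own}\subseteq\transport{\fam\cB,\fam\own}$, and obviously $\transport{\fam\cA,\fam\own}\subseteq\powerset{T\fam{\web\cA}}$; conversely, if $\ag a\in\transport{\fam\cB,\fam\own}$ with $\ag a\subseteq T\fam{\web\cA}$ then for each $i$ we have $\relAppl{\own_i}{\ag a}\in\fin{\cB_i}$ and $\relAppl{\own_i}{\ag a}\subseteq\web{\cA_i}$, hence $\relAppl{\own_i}{\ag a}\in\fin{\cB_i}\inter\powerset{\web{\cA_i}}=\fin{\cA_i}$, so $\ag a\in\transport{\fam\cA,\fam\own}$. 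This yields $\transport{\fam\cA,\fam\own}=\transport{\fam\cB,\fam\own}\inter\powerset{T\fam{\web\cA}}$ and $\lift T{\fam\own}\fam\cA\finExt\lift T{\fam\own}\fam\cB$.

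I do not expect a real obstacle here; the one point deserving a word of justification is that Lemma~\ref{lemma:natural:restriction} still applies, since the $\own_i$ are merely \emph{lax} natural transformations and relate $I$-ary functors. This follows by the very argument of that lemma: instantiating the laxity inequality for $\own_i$ at the componentwise partial identity on $\fam{\web\cA}$, regarded in turn as a morphism from $\fam{\web\cA}$ to $\fam{\web\cB}$ and from $\fam{\web\cB}$ to $\fam{\web\cA}$ in $\Rel^I$, and combining the two resulting inclusions, one gets $\natInst{\own_i}{\fam{\web\cA}}=\natInst{\own_i}{\fam{\web\cB}}\inter\pars{T\fam{\web\cA}\times\web{\cA_i}}$, whence the coincidence of images used above.
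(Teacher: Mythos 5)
Your proof is correct and follows essentially the same route as the paper's: inclusion of webs by monotonicity of relational functors, the restriction property of lax natural transformations (the $I$-ary, lax variant of Lemma~\ref{lemma:natural:restriction}) to identify $\relAppl{\own_i^{\fam{\web\cA}}}{\ag a}$ with $\relAppl{\own_i^{\fam{\web\cB}}}{\ag a}$, and then the two short computations on finitary subsets. Your extra care in checking that the restriction lemma applies to \emph{lax} transformations between $I$-ary functors, and in noting that the image lands in $\web{\cA_i}$ (which is what makes the $\finExt$ case go through), only makes explicit what the paper's terser proof uses implicitly.
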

\begin{proof}
  Let $\cT$ be a type blind transport functor with ownership relation
  $\fam{\own}$ and assume that $\fam{\cA\finInc\cB}$. Since $\web\cT$ is type blind, 
  Lemma~\ref{lemm:inclusion-preserving-functors} entails that
  $\web{\cT\fam{\cA}}\subseteq\web{\cT\fam{\cB}}$. 
  Then let $\ag a\subseteq\web{\cT\fam{\cA}}$. We have $\ag
  a\in\fin{\cT\fam{\cA}}$ iff for
  all $i\in I$, $\relAppl{\own^{\fam{\web{\cA}}}_i}{\ag a}\in\fin{\cA_i}$.
  Since $\fam{\web{\cA}\subseteq\web{\cB}}$ and $\own_i$ is a lax natural
  transformation, Lemma~\ref{lemma:natural:restriction} implies
  $\relAppl{\own^{\fam{\web{\cA}}}_i}{\ag a}=
  \relAppl{\own^{\fam{\web{\cB}}}_i}{\ag a}$.  Moreover,
  $\fin{\cA_i}\subseteq\fin{\cB_i}$, hence
  $\relAppl{\own^{\fam{\web{\cB}}}_i}{\ag a}\in\fin{\cB_i}$.  We thus obtain
  $\ag a\in\fin{\cT\fam{\cB}}$. We conclude that $\cT\fam{\cA}\finInc\cT\fam{\cB}$.

  If we moreover assume that $\fam{\cA\finExt\cB}$ then 
    $\relAppl{\own^{\fam{\web{\cA}}}_i}{\ag a}\in\fin{\cA_i}$
  iff $\relAppl{\own^{\fam{\web{\cA}}}_i}{\ag a}\in\fin{\cB_i}$ and we obtain 
  $\ag a\in\fin{\cT\fam{\cA}}$ iff $\ag a\in\fin{\cT\fam{\cB}}$.
  We conclude that $\cT\fam{\cA}\finExt\cT\fam{\cB}$.
\end{proof}

\begin{lemma}
  \label{lemma:continuity:transport:exact}
        A type blind transport functor is exactly continuous as soon as its
        underlying web functor is continuous on sets.
\end{lemma}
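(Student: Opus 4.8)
The plan is to establish, for every componentwise directed $\fam{\fam\cA}$ whose componentwise supremum $\fam{\FinSup\fam\cA}$ is exact, the single identity $\fin{\cT\pars{\fam{\FinSup\fam\cA}}}=\Union_{\fam j}\fin{\cT\fam\cA_{\fam j}}$, the union being over $\fam j\in\prod\fam J$. Since the right-hand side is then a finiteness structure on $\web{\FinSup\fam{\cT\fam\cA}}$, this identity says at once that $\cT$ commutes to the supremum and that $\FinSup\fam{\cT\fam\cA}$ is exact, which is exactly exact continuity. Throughout, $\cT$ is a transport functor, $T=\web\cT$ its (continuous) web functor and $\fam\own$ its ownership relation, so that $\cT\fam\cB=\pars{T\fam{\web\cB},\transport{\fam\cB,\fam\own}}$ for every family $\fam\cB$ of finiteness spaces.

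First I would settle the webs. Because $\finInc$ refines inclusion of webs, $\fam{\fam{\web\cA}}$ is componentwise directed, so continuity of $T$ yields $\web{\cT\pars{\fam{\FinSup\fam\cA}}}=T\fam{\Union\fam{\web\cA}}=\Union\fam{T\fam{\web\cA}}=\Union_{\fam j}\web{\cT\fam\cA_{\fam j}}=\web{\FinSup\fam{\cT\fam\cA}}$. For the easy half of the structure I would invoke the previous lemma, that transport functors are $\finInc$-monotonic: since $\fam\cA_{\fam j}\finInc\fam{\FinSup\fam\cA}$ for each $\fam j$, we get $\cT\fam\cA_{\fam j}\finInc\cT\pars{\fam{\FinSup\fam\cA}}$, hence $\Union_{\fam j}\fin{\cT\fam\cA_{\fam j}}\subseteq\fin{\cT\pars{\fam{\FinSup\fam\cA}}}$ and, the latter being a finiteness structure, $\bidual{\Union_{\fam j}\fin{\cT\fam\cA_{\fam j}}}\subseteq\fin{\cT\pars{\fam{\FinSup\fam\cA}}}$; together with the web equality, this is $\FinSup\fam{\cT\fam\cA}\finInc\cT\pars{\fam{\FinSup\fam\cA}}$.

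The real work is the reverse inclusion $\fin{\cT\pars{\fam{\FinSup\fam\cA}}}\subseteq\Union_{\fam j}\fin{\cT\fam\cA_{\fam j}}$. Given $\ag a\in\fin{\cT\pars{\fam{\FinSup\fam\cA}}}=\transport{\fam{\FinSup\fam\cA},\fam\own}$, we have $\ag a\subseteq T\fam{\web{\FinSup\fam\cA}}$ and $\relAppl{\own_i}{\ag a}\in\fin{\FinSup\fam{\cA_i}}$ for every $i\in I$; exactness of each $\FinSup\fam{\cA_i}$ lets us choose $j_i\in J_i$ with $\relAppl{\own_i}{\ag a}\in\fin{\cA_{i,j_i}}$, and we set $\fam j=\family{j_i}{i\in I}$. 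It then remains to check $\ag a\in\transport{\fam\cA_{\fam j},\fam\own}$, \ie{} $\ag a\subseteq T\fam{\web{\cA_{\fam j}}}$ and $\relAppl{\own_i}{\ag a}\in\fin{\cA_{i,j_i}}$ for all $i$. The second is precisely our choice of $\fam j$; and once $\ag a\subseteq T\fam{\web{\cA_{\fam j}}}$ is known, the restriction property following Lemma~\ref{lemma:natural:restriction}, applied to the lax natural transformation $\own_i$ along $\fam{\web{\cA_{\fam j}}}\subseteq\fam{\web{\FinSup\fam\cA}}$, shows that the $\own_i$-image of $\ag a$ is unchanged when recomputed over the smaller web, so nothing further is needed.

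The crux — and the only place where continuity of $T$ is used essentially — is the containment $\ag a\subseteq T\fam{\web{\cA_{\fam j}}}$. What we have is that every $\ag\alpha\in\ag a$ satisfies $\relAppl{\own_i}{\ag\alpha}\subseteq\relAppl{\own_i}{\ag a}\subseteq\web{\cA_{i,j_i}}$ for all $i$. The argument I would run: by continuity $T\fam{\web{\FinSup\fam\cA}}=\Union_{\fam k}T\fam{\web{\cA_{\fam k}}}$ is a directed union (directedness of each $\fam\cA_i$ and monotonicity of $T$ for set inclusion), so $\ag\alpha\in T\fam{\web{\cA_{\fam k}}}$ for some $\fam k$; picking $\fam l$ with $\cA_{i,j_i}\finInc\cA_{i,l_i}$ and $\cA_{i,k_i}\finInc\cA_{i,l_i}$ for all $i$ puts $\ag\alpha$ in $T\fam{\web{\cA_{\fam l}}}$ with all its $\own_i$-images already inside $\web{\cA_{i,j_i}}$; the remaining point is to ``restrict'' $\ag\alpha$ down to $T\fam{\web{\cA_{\fam j}}}$, \ie{} the fact that in a transport situation the $\fam\own$-images of a $T$-element pin down the sub-web family it belongs to, concretely $\Inter_i\relDiv{\own_i}{\web{\cA_{i,j_i}}}\subseteq T\fam{\web{\cA_{\fam j}}}$. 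This web-pinning property is transparent for each transport functor actually used — for $\oc$, for $\Oplus$ (hence $\With$), and for $\tensor$ one computes $\Inter_i\relDiv{\own_i}{\fam W}=T\fam W$ outright — and stating and checking it cleanly at the level of an arbitrary transport situation is the step I expect to require the most care; once it is in place, the whole statement, and by the same skeleton its weak and direct variants, follow by bookkeeping.
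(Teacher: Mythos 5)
Your proposal follows the same route as the paper's own proof: equality of webs from continuity of $T$, the inclusion $\FinSup\fam{\cT\fam\cA}\finInc\cT\pars{\fam{\FinSup\fam\cA}}$ from $\finInc$-monotonicity, and the hard inclusion by using exactness of each $\FinSup\fam{\cA}_i$ to choose $\fam j$ and the restriction property of lax natural transformations to recompute $\relAppl{\own_i}{\ag a}$ over the smaller web. Your packaging is in fact tidier: the paper first proves the equality of the duals of the two finiteness structures and then runs a second, separate computation to establish exactness of $\FinSup\fam{\cT\fam\cA}$, whereas your single identity $\fin{\cT\pars{\fam{\FinSup\fam\cA}}}=\Union_{\fam j}\fin{\cT\fam\cA_{\fam j}}$ delivers both at once --- and your argument for it is essentially the paper's exactness paragraph.

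The step you leave open is, however, a genuine gap in the proposal as it stands: nothing in the definition of a transport situation yields $\ag a\subseteq T\fam{\web{\cA_{\fam j}}}$ (equivalently your ``web-pinning'' property $\Inter_i\relDiv{\own_i}{\web{\cA_{i,j_i}}}\subseteq T\fam{\web{\cA_{\fam j}}}$) from the mere fact that $\relAppl{\own_i}{\ag a}\subseteq\web{\cA_{i,j_i}}$ for all $i$. The ownership relation is only required to be a quasi-functional lax natural transformation, and one can write down transport situations where the property fails (take $TA=A\times A$ with the single ownership $\own_1=\proj_1$: then $\relDiv{\own_1}{a}=a\times B\not\subseteq TA$ for $a\subseteq A\subsetneq B$) --- though such situations also fail to define functors in $\Fin$, which is why the property holds in every example you checked. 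You should be aware that the paper's proof needs exactly the same fact and elides it: in the direction from (b) to (a) it asserts $\relDiv{\fam\own}{\fam a}\in\fin{\cT\fam{\cA}_{\fam j}}$, and in the final exactness paragraph it asserts $\ag a\in\fin{\cT\fam{\cA}_{\fam j}}$, both of which presuppose membership in $\powerset{T\fam{\web{\cA_{\fam j}}}}$ without justification. So you have correctly isolated the only non-routine point, but neither you nor the paper closes it at the stated level of generality; to make the lemma airtight one should either add web-pinning as an explicit hypothesis on transport situations (it is immediate for $\oc$, $\With$, $\Oplus$, $\Loplus$ and finite $\tensor$) or show that it follows from the requirement that the transport situation actually defines a functor on $\Fin$.
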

\begin{proof}
	Let $\cT$ be a type blind transport functor with ownership
	$\fam{\own}$ and assume its web functor $T$ is continous on sets. Let $\fam{\maf\cA}$ be
	directed and such that each $\FinSup \maf{\cA}_i$ is
	exact. We prove that $\cT{\fam{\FinSup\maf\cA}}=\FinSup\maf{\cT{\fam\cA}}$.  First
	notice that the webs $\web{\cT{\fam{\FinSup\maf\cA}}}=T\fam{\Union\maf{\web{\cA}}}$
	and $\web{\FinSup\maf{\cT{\fam\cA}}}=\Union\maf{T\fam{\web{\cA}}}$ are equal
	because $T$ is continuous on sets. We are left to prove the equality of
        finiteness structures, that is $\fin{\cT{\fam{\FinSup\maf\cA}}}=
	\fin{\FinSup\maf{\cT{\fam\cA}}}$ or equivalently 
	$\dual{\fin{\cT{\fam{\FinSup\maf\cA}}}}=
	\dual{\fin{\FinSup\maf{\cT{\fam\cA}}}}$.
        Let's make explicit that by Corollary~\ref{coro:transport:family} and the definition 
        of $\FinSup$:
	\begin{enumerate}[(a) ]
		\item $\ag a'\in\dual{\fin{\cT{\fam{\FinSup\maf\cA}}}}$ iff
			for all $\fam a$ with $a_i\in\fin{\FinSup\maf\cA_i}$ for all $i\in I$, 
			$\ag a'\finPolar\relDiv{\natInst{\fam\own}{\fam{\Union\maf{\web{\cA}}}}}{\fam a}$;
		\item $\ag a'\in\dual{\fin{\FinSup\maf{\cT{\fam\cA}}}}$ iff
			for all $\fam j\in \prod\fam J$ and all 
                        $\ag a\in\fin{\cT{\fam{\cA}_{\fam j}}}$,
			$\ag a'\finPolar\ag a$.
	\end{enumerate}
	We prove both characterizations are equivalent. 
	
	Assume the condition in (a)
	holds and let $\fam j\in \prod\fam J$ and $\ag a\in\fin{\cT{\fam{\cA}_{\fam j}}}$.
  For all $i\in I$, let $a_i=\relAppl{\natInst\own{\fam{\web\cA_{\fam j}}}_i}{\ag
	a}$:
	$a_i\in\fin{\cA_{i,j_i}}\subseteq\Union\maf{\fin{\cA_i}}\subseteq\fin{\FinSup\maf\cA_i}$.
  Then $\ag a\subseteq\relDiv{\natInst{\fam{\own}}{\fam{\web\cA_{\fam j}}}}{\fam a}= 
  \relDiv{\natInst{\fam{\own}}{\fam{\Union\maf{\web\cA}}}}{\fam a}$ by Lemma
  \ref{lemma:natural:restriction}; by condition (a), we deduce that 
  $\ag a'\inter\ag a$ is finite.

	Now assume the condition in (b) holds and let $\fam a$ be such that 
	$a_i\in\fin{\FinSup\maf{\cA}_i}$ for all $i\in I$. Since each of these suprema is exact, \ie{} 
	$\fin{\FinSup\maf{\cA_i}}=\Union\maf{\fin{\cA_{i}}}$, 
	there exists $\fam{j\in J}$ such that $a_i\in\fin{\cA_{i,j_i}}$ for all $i\in I$.
	Hence $\relDiv{\natInst{\fam{\own}}{\fam{\web\cA_{\fam j}}}}{\fam a}
  \in\fin{\cT{\fam{\cA}_{\fam j}}}$ and we conclude by Lemma \ref{lemma:natural:restriction}.

  It remains only to prove that $\FinSup\maf{\cT{\fam\cA}}$ is exact.
  Let $\ag a\subseteq T\fam{\Union\maf{\web{\cA}}}$. We have just proved that 
  $\ag a\in\fin{\FinSup\maf{\cT{\fam\cA}}}$ iff
  $\ag a\in\fin{\cT\fam{\FinSup{\maf\cA}}}$ iff
  for all $i\in I$, $\relAppl{\natInst{\own}{\fam{\Union\maf{\web{\cA}}}}_i}{\ag a}\in
  \fin{\FinSup\maf{\cA_i}}$. Now, because 
  $\FinSup\maf{\cA_i}$ is exact,  $\fin{\FinSup\maf{\cA_i}}=\Union\maf{\fin{\cA_i}}$.
  Thus $\ag a\in\fin{\FinSup\maf{\cT{\fam\cA}}}$ iff for all $i\in I$, 
  there exists $j_i\in J_i$ such that 
  $\relAppl{\natInst{\own}{\fam{\Union\maf{\web{\cA}}}}_i}{\ag a}\in\fin{\cA_{i,j_i}}$.
  Then $\ag a\in\fin{\cT\fam{\cA}_{\fam j}}$, since
  $\relAppl{\natInst{\own}{\fam{\web{\cA_{\fam j}}}}_i}{\ag a}\subseteq
  \relAppl{\natInst{\own}{\fam{\Union\maf{\web{\cA}}}}_i}{\ag a}$ for all $i\in I$
  (again by Lemma \ref{lemma:natural:restriction}).
\end{proof}

\begin{example}
  The web functors of sums, finite multisets and products are all continuous,
  hence $\With$, $\Oplus$, $\oc$ and binary $\tensor$ are exactly continuous.
\end{example}

We say the ownership relation $\fam\own$ is \definitive{local} if, for all
family $\fam A$ and all $i\in I$:
\begin{itemize}
  \item $\relAppl{\own_i}{\pars{\relDiv{\own_i}{a_i}}}=a_i$ 
    for all $a_i\subseteq A_i$;
  \item $\relAppl{\own_{j}}{\pars{\relDiv{\own_i}{a_i}}}=A_j$ 
    for all $a_i\subseteq A_i$ and all $j\not=i$;
  \item $\own_i$ preserves intersections, \ie{}
    $\relAppl{\own_i}{\Inter \maf{\ag a}}=\Inter\maf{\relAppl{\own_i}{\ag a}}$
    for all $\maf{\ag a\in T\fam A}$.
\end{itemize}
Intuitively, an ownership relation is local if its components do not interact with each other.
In particular, if $\fam\own$ is local then
$\relAppl{\own_i}{\pars{\relDiv{\fam\own}{\fam a}}}=a_i$ for all $i\in I$.

\begin{lemma}
  \label{lemma:continuity:transport:direct}
	A type blind transport functor is directly continuous as soon as its underlying web
	functor is continuous and its ownership relation is local.
\end{lemma}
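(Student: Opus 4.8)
The plan is to follow the proof of Lemma~\ref{lemma:continuity:transport:exact}, adding one step that uses locality. Let $\cT$ be the transport functor, with local ownership $\fam\own$ and continuous underlying web functor $T$, fix a componentwise directed $\fam{\fam\cA}$, and write $\cX_i=\FinSup\fam{\cA}_i$. Exactly as for exact continuity the webs coincide, $\web{\cT\fam{\FinSup\fam\cA}}=T\fam{\Union\fam{\web\cA}}=\Union\fam{T\fam{\web\cA}}=\web{\FinSup\fam{\cT\fam\cA}}$, by continuity of $T$. One inclusion of finiteness structures is also immediate, since transport functors are $\finInc$-monotonic: $\fin{\cT\fam{\cA}_{\fam j}}\subseteq\fin{\cT\fam{\FinSup\fam\cA}}$ for all $\fam j$, so $\fin{\FinSup\fam{\cT\fam\cA}}=\bidual{\Union\fam{\fin{\cT\fam\cA}}}\subseteq\fin{\cT\fam{\FinSup\fam\cA}}$. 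Passing to preduals (using $\dual{\bidual{-}}=\dual{-}$), the converse inclusion reduces to showing that every $\ag a'\in\dual{\Union\fam{\fin{\cT\fam\cA}}}$ satisfies $\ag a'\finPolar\ag a$ for all $\ag a\in\fin{\cT\fam{\FinSup\fam\cA}}$.

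The one genuinely new ingredient I would isolate as a claim: \emph{if $\ag a'\in\dual{\Union\fam{\fin{\cT\fam\cA}}}$ then $\relAppl{\own_i}{\ag a'}\in\dual{\fin{\cX_i}}$ for each $i\in I$}. Granting this, the conclusion is bookkeeping: for $\ag a\in\fin{\cT\fam{\FinSup\fam\cA}}$ put $a_i=\relAppl{\own_i}{\ag a}\in\fin{\cX_i}$ and $c_i=a_i\inter\relAppl{\own_i}{\ag a'}$; by the claim $c_i$ is finite, and since the webs $\fam{\web{\cA_i}}$ form a directed family we may choose $j_i\in J_i$ with $c_i\finSubseteq\web{\cA_{i,j_i}}$, so $c_i\in\fin{\cA_{i,j_i}}$. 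Because $\fam\own$ is local, $\relAppl{\own_i}{\pars{\relDiv{\fam\own}{\fam c}}}=c_i$, hence $\relDiv{\fam\own}{\fam c}\in\fin{\cT\fam{\cA}_{\fam j}}\subseteq\Union\fam{\fin{\cT\fam\cA}}$ and therefore $\ag a'\finPolar\relDiv{\fam\own}{\fam c}$. Finally $\relAppl{\own_i}{\ag a\inter\ag a'}\subseteq a_i\inter\relAppl{\own_i}{\ag a'}=c_i$ for all $i$, so $\ag a\inter\ag a'\subseteq\relDiv{\fam\own}{\fam c}$ and thus $\ag a\inter\ag a'\subseteq\ag a'\inter\relDiv{\fam\own}{\fam c}$, which is finite.

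To prove the claim, fix $i$, $j_i\in J_i$ and $b_i\in\fin{\cA_{i,j_i}}$; as $\cX_i$ is a finiteness supremum, $\dual{\fin{\cX_i}}=\dual{\Union\fam{\fin{\cA_i}}}$, so it suffices to show $\relAppl{\own_i}{\ag a'}\inter b_i$ is finite. The device is to isolate the $i$-th component: choose arbitrary $j_k\in J_k$ for $k\ne i$, let $\fam b$ have $i$-th component $b_i$ and all other components $\emptyset$, and set $D=\relDiv{\fam\own}{\fam b}$. A short computation from the three clauses of locality shows $\relAppl{\own_i}{D}=b_i$ and $\relAppl{\own_k}{D}=\emptyset$ for $k\ne i$; hence every ownership image of $D$ is finitary in the corresponding $\cA_{k,j_k}$, i.e. $D\in\fin{\cT\fam{\cA}_{\fam j}}\subseteq\Union\fam{\fin{\cT\fam\cA}}$. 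So $\ag a'\inter D$ is finite, and then so is $\relAppl{\own_i}{\ag a'\inter D}$ since $\own_i$ is quasi-functional; but $\relAppl{\own_i}{\ag a'\inter D}=\relAppl{\own_i}{\ag a'}\inter\relAppl{\own_i}{D}=\relAppl{\own_i}{\ag a'}\inter b_i$ by commutation with intersections (Lemma~\ref{lemma:natural:restriction} being used, as usual, to compare ownership relations over $\fam{\web{\cA_{\fam j}}}$ and over $\fam{\web\cX}$), which proves the claim.

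I expect the claim to be the crux. In the exactly-continuous setting one rewrites $a_i\in\fin{\cX_i}$ directly as $a_i\in\fin{\cA_{i,j_i}}$ for a common $\fam j$ using exactness; here that is unavailable, so one is forced to cut down to a genuinely small — hence transportable — set before one can use membership in some $\fin{\cT\fam{\cA}_{\fam j}}$. The trick of filling every component but the $i$-th of a $\relDiv{\fam\own}{\fam b}$ with $\emptyset$, which makes those components collapse harmlessly precisely because $\fam\own$ is local, is what supplies such a set; everything else is manipulation of preduals, directedness of webs, and the monotonicity already established.
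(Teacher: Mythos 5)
Your proof is correct and follows essentially the same route as the paper's: the paper likewise reduces everything to the intermediate claim that $\relAppl{\own_i}{\ag a'}\in\dual{\fin{\FinSup\fam{\cA}_i}}$, proves it by intersecting $\ag a'$ with a division $\relDiv{\fam\own}{\fam a}$ and pushing through $\own_i$ using locality (preservation of intersections) and quasi-functionality, and then concludes by fitting the resulting finite ownership images into some $\fin{\cA_{i,j_i}}$ by directedness of the webs. Your two small deviations --- taking $\fam b$ empty outside the $i$-th component in the claim, and bounding $\ag a\inter\ag a'$ by $\relDiv{\fam\own}{\fam c}$ rather than showing $\ag a'\inter\relDiv{\fam\own}{\fam a}$ itself lies in some $\fin{\cT\fam{\cA}_{\fam j}}$ --- are cosmetic rearrangements of the same argument.
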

\begin{proof}
	The proof differs from the previous one only in the direction 
	(b) to (a), where we used the exactness condition, which is no longer
	available. So, assuming (b) and in order to establish (a),
  we first prove the following intermediate result:
  $\relAppl{\natInst{\own}{\fam{\Union\maf{\web{\cA}}}}_i}{\ag
	a'}\in\dual{\fin{\FinSup\maf{\cA}_i}}=\parDual{\Union\maf{\fin{\cA_i}}}$ for all
  $i\in I$. Indeed, let $\fam j\in\prod\fam J$ and $\fam{a\in\fin{\cA_{\fam j}}}$
  (in particular we chose $j_i$ to be any index in $J_i$ and 
  $a_i$ to be any finitary subset of $\cA_{i,j_i}$) and write 
  $\ag a''=\relDiv{\natInst{\fam\own}{\fam{\Union\maf{\web{\cA}}}}}{\fam a}$:
  by Lemma \ref{lemma:natural:restriction},
  $\ag a''=\relDiv{\natInst{\fam\own}{\fam{\web{\cA_{\fam j}}}}}{\fam a}
  \in\fin{\cT{\fam{\cA}_{\fam j}}}$ and thus 
  $\ag a=\ag a'\inter\ag a''$ is finite.  
  Moreover, for all $i\in I$, 
  $\relAppl{\natInst{\own}{\fam{\Union\maf{\web{\cA}}}}_i}{\ag a''}=a_i$, 
  because $\fam\own$ is local. 
  Hence  $\pars{\relAppl{\natInst{\own}{\fam{\Union\maf{\web{\cA}}}}_i}{\ag a'}}\inter a_i
  =\pars{\relAppl{\natInst{\own}{\fam{\Union\maf{\web{\cA}}}}_i}{\ag a'}}\inter
  \pars{\relAppl{\natInst{\own}{\fam{\Union\maf{\web{\cA}}}}_i}{\ag a''}}=
  \relAppl{\natInst{\own}{\fam{\Union\maf{\web{\cA}}}}_i}{\ag a}$ 
  because $\own_i$ preserves intersections.
  Since $\ag a$ is finite and $\own_i$ is quasi-functional, we conclude that
  $\relAppl{\natInst{\own}{\fam{\Union\maf{\web{\cA}}}}_i}{\ag a'}\finPolar a_i$. 
  Since this holds for all
  $a_i\in\Union\maf{\fin{\cA_i}}$, we obtain 
  $\relAppl{\natInst{\own}{\fam{\Union\maf{\web{\cA}}}}_i}{\ag a'}\in
	\dual{\fin{\FinSup\maf{\cA}_i}}$.

	Then let $\fam a$ be such that $a_i\in\fin{\FinSup\maf\cA_i}$ for all $i\in I$ and write 
  $\ag a''=\relDiv{\natInst{\fam\own}{\fam{\Union\maf{\web{\cA}}}}}{\fam a}$:
	we must show that $\ag a'''=\ag a'\inter\ag a''$
	is finite. For all $i\in I$, $a'''_i
  =\relAppl{\natInst{\own}{\fam{\Union\maf{\web{\cA}}}}_i}{\ag a'''}\subseteq 
	\pars{\relAppl{\natInst{\own}{\fam{\Union\maf{\web{\cA}}}}_i}{\ag a'}}\inter a_i$
  is finite: hence 
	$a'''_i\in\fin{\cA_{i,j}}$ for any $j\in J_i$ such that 
	$a'''_i\subseteq\web{\cA_{i,j}}$. Fix $j_i$ to be one such $j$ for all $i\in I$.
	We obtain $\ag a'''\in\fin{\cT{\fam{\cA}_{\fam j}}}$. 
	We conclude since $\ag a'\finPolar\ag a'''$ and thus 
	$\ag a'\inter \ag a'''=\ag a'''$ is finite.
\end{proof}

\begin{example}
  Since $\fam{\rest}$, $\indx$ and $\supp$ are local, 
  $\With$, $\Oplus$ and $\oc$ are directly continuous.
\end{example}

The conditions under which we proved direct continuity of transport 
functors are not minimal. For instance $\fam{\proj}$ is not local
even for $I=\set{1,2}$: since $A\times\emptyset=\emptyset$, 
$\relDiv{(\natInst\proj{A,\emptyset}_1,\natInst\proj{A,\emptyset}_2)}(a,\emptyset)=\emptyset$ 
for all $a\subseteq A$ and then
$\relAppl{\natInst\proj{A,\emptyset}_1}{\emptyset}=\emptyset\not=a$ in general.
However:
\begin{lemma}
  \label{lemma:continuity:transport:tensor}
	Finite tensor products are directly continuous.
\end{lemma}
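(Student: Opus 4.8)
The plan is to handle the binary tensor $\tensor$ first and then get the finite $n$-ary case by composition. I would fix componentwise directed families $\fam{\cA}=\genFam{\cA}jJ$, $\fam{\cB}=\genFam{\cB}kK$, put $\cX=\FinSup\fam{\cA}$, $\cY=\FinSup\fam{\cB}$ (so $\fin{\cX}=\parBidual{\Union\fam{\fin{\cA}}}$ and likewise for $\cY$), and must prove $\cX\tensor\cY=\FinSup_{(j,k)}\pars{\cA_j\tensor\cB_k}$. The webs agree because the web functor of $\tensor$ is a continuous cartesian product, and the inclusion $\fin{\FinSup_{(j,k)}\pars{\cA_j\tensor\cB_k}}\subseteq\fin{\cX\tensor\cY}$ is immediate from $\finInc$-monotonicity of $\tensor$; so only the reverse finiteness inclusion needs work. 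Given $\ag c\in\fin{\cX\tensor\cY}$ with $a=\relAppl{\proj_1}{\ag c}\in\fin{\cX}$ and $b=\relAppl{\proj_2}{\ag c}\in\fin{\cY}$, I have $\ag c\subseteq a\times b$, and since the target structure is downwards closed it suffices to place $a\times b$ in it. Using the description of $\fin{\cA_j\tensor\cB_k}$ from the example following Corollary~\ref{coro:transport:family} plus routine bidual bookkeeping, $\fin{\FinSup_{(j,k)}\pars{\cA_j\tensor\cB_k}}=\bidual{\cG}$ with $\cG=\set{a_j\times b_k\st j\in J,\ k\in K,\ a_j\in\fin{\cA_j},\ b_k\in\fin{\cB_k}}$, so the goal becomes: for every $\ag c'\in\dual{\cG}$, the set $\ag d=\pars{a\times b}\inter\ag c'$ is finite.

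To prove $\ag d$ finite I would decompose $\ag d\subseteq\pars{\relAppl{\proj_1}{\ag d}}\times\pars{\relAppl{\proj_2}{\ag d}}$ and, the index set being finite, reduce to showing both projections of $\ag d$ are finite; by symmetry take $\relAppl{\proj_1}{\ag d}$. Since it is a subset of $a\in\parBidual{\Union\fam{\fin{\cA}}}$, it will be finite once I check it lies in $\parDual{\Union\fam{\fin{\cA}}}$ (a set that is both finitary and dual‑finitary being finite), i.e. once $\pars{\relAppl{\proj_1}{\ag d}}\inter a_j$ is finite for every $j$ and $a_j\in\fin{\cA_j}$. Shrinking $a_j$ to $a\inter a_j$, this set is the $\proj_1$-image of $\pars{a_j\times b}\inter\ag c'$, so the whole argument hinges on the estimate that $\pars{a_j\times b}\inter\ag c'$ is finite; this is the step I expect to be the main obstacle, since (because $\fam{\proj}$ is not local and $\cX$ need not be exact) neither $a$ nor $b$ is finitary in any single $\cA_j$ or $\cB_k$, so one cannot simply land inside a single $\cA_j\tensor\cB_k$ as in the local case. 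The way around is a two-step polarity argument: first, $\ag c'\in\dual{\cG}$ forces, for all finitary $a_j\in\fin{\cA_j}$ and $b_k\in\fin{\cB_k}$, that $\pars{\relAppl{\ag c'}{a_j}}\inter b_k$ is a $\proj_2$-image of the finite set $\pars{a_j\times b_k}\inter\ag c'$ hence finite, so $\relAppl{\ag c'}{a_j}\in\parDual{\Union\fam{\fin{\cB}}}$ and symmetrically $\relAppl{\relRev{\ag c'}}{b_k}\in\parDual{\Union\fam{\fin{\cA}}}$; second, $b'=b\inter\relAppl{\ag c'}{a_j}$ is then finite (as $b\in\parBidual{\Union\fam{\fin{\cB}}}$) and $\pars{a_j\times b}\inter\ag c'\subseteq\Union_{\beta\in b'}\pars{a_j\inter\relAppl{\relRev{\ag c'}}{\beta}}\times\set\beta$ exhibits it as a finite union of finite sets, each $a_j\inter\relAppl{\relRev{\ag c'}}{\beta}$ being finite because $a_j$ is finitary in $\cA_j$ and, taking $b_k=\set\beta$ for $\beta\in\web{\cB_k}$, $\relAppl{\relRev{\ag c'}}{\beta}\in\parDual{\Union\fam{\fin{\cA}}}$. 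So $\relAppl{\proj_1}{\ag d}$ is finite, by symmetry $\relAppl{\proj_2}{\ag d}$ is finite, and $\ag d$ is finite, settling the binary case.

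For a finite index set $I$ I would then argue by induction on $\card I$: up to the evident web isomorphism, the tensor over $I\uplus\set{i_0}$ is the composite of binary $\tensor$ with the tensor over $I$ and an identity functor on $\Fin$, each directly continuous (the identity trivially, the $I$-ary tensor by the induction hypothesis), so Lemma~\ref{lemma:composition:continuity} gives the result; alternatively, the estimate above carries over verbatim to $I$-fold products, the finiteness of $I$ being used only to reassemble the finitely many finite projections of $\ag d$ at the end. As noted, the delicate point is entirely the estimate on $\pars{a_j\times b}\inter\ag c'$; everything else is formal. (One may observe in passing that directedness of $\fam{\cA}$ and $\fam{\cB}$ is nowhere used, so finite tensor products in fact commute with arbitrary componentwise suprema.)
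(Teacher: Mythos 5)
Your proof is correct, but it takes a genuinely different route from the paper's. The paper first reduces to continuity in one parameter at a time (it fixes $\cA$ and lets only the directed family $\fam\cB$ vary) and then works entirely on the dual side: it shows $\dual{\fin{\cA\tensor\FinSup\fam\cB}}=\dual{\fin{\FinSup\family{\cA\tensor\cB_j}{j\in J}}}$ by identifying $\dual{\pars{\cA\tensor\cC}}$ with $\fin{\cA\limpl\dual{\cC}}$ and invoking the characterization of finitary relations (Lemma~\ref{lemma:finitary:relation}): an element $c'$ of the dual of $\FinSup\family{\cA\tensor\cB_j}{j\in J}$ is shown to send each $a\in\fin\cA$ to a set polar to every $b\in\Union\fam{\fin{\cB}}$ and each point $\beta$ to a set in $\fin{\dual\cA}$, which is exactly membership in $\fin{\cA\limpl\dual{\FinSup\fam\cB}}$. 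You instead keep both parameters varying over the product index set and argue on the primal side, reducing a finitary $\ag c$ to the rectangle $a\times b$ of its two projections and proving the finiteness of $\pars{a\times b}\inter\ag c'$ by hand; your ``two-step polarity argument'' is essentially a re-derivation, in this particular instance, of the half of the identity $\dual{\fin{\cA\tensor\cC}}=\fin{\cA\limpl\dual\cC}$ that the paper gets for free from the linear-logic structure of $\Fin$ (your first step is the paper's verification that $c'$ is a finitary relation; your second step re-proves that such a relation is polar to every finitary rectangle). The paper's route buys brevity, since the combinatorics is hidden in Lemma~\ref{lemma:finitary:relation}; yours buys self-containedness, an explicit treatment of the $n$-ary case via Lemma~\ref{lemma:composition:continuity} (which the paper leaves implicit behind ``it is sufficient to consider binary tensor products''), and the correct side observation that directedness is never actually needed --- which is equally true of the paper's argument. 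All the individual steps of your proposal check out: in particular the identification of $\fin{\FinSup_{(j,k)}\pars{\cA_j\tensor\cB_k}}$ with the bidual of the set of rectangles $a_j\times b_k$ is legitimate because the bidual of a union of biduals equals the bidual of the union, and the finiteness of $\pars{a_j\times b}\inter\ag c'$ is exactly the non-trivial estimate one must supply when, as you note, neither $a$ nor $b$ lands in a single component of the families.
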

\begin{proof}
	It is sufficient to consider binary tensor products and prove continuity 
  w.r.t. one of the parameters.
  Let $\maf{\cB}=\genFam{\cB}jJ$ be a directed supremum of finiteness spaces:
  we prove $\cA\tensor\FinSup\maf\cB=\FinSup\family{\cA\tensor\cB_j}{j\in J}$
  or, equivalently, 
  $\parDual{\cA\tensor\FinSup\maf\cB}=\parDual{\FinSup\family{\cA\tensor\cB_j}{j\in J}}$.
  That $\dual{\fin{\cA\tensor\FinSup\maf\cB}}\subseteq
  \dual{\fin{\FinSup\family{\cA\tensor\cB_j}{j\in J}}}$ goes by the same argument 
  as in Lemma~\ref{lemma:continuity:transport:exact}.

  Assume that $c'\in\dual{\fin{\FinSup\family{\cA\tensor\cB_j}{j\in J}}}$:
  we prove $c'\in\dual{\fin{\cA\tensor\FinSup\maf\cB}}
  =\fin{\cA\limpl\dual{\pars{\FinSup\maf\cB}}}$.
  If $a\in \fin{\cA}$ then $\relAppl{c'}{a}\in\dual{\fin{\FinSup\maf\cB}}$.
  Indeed, for all $j\in J$ and $b\in\fin{\cB_j}$, 
  we have $a\times b\in\fin{\cA\tensor\cB_j}\subseteq\fin{\FinSup\family{\cA\tensor\cB_j}{j\in J}}$, 
  hence $c'\finPolar a\times b$: then $\relAppl{c'}{a}\finPolar b$.
  In the other direction, let $\beta\in\web{\FinSup{\maf{\cB}}}=\Union\maf{\web{\cB}}$:
  let $j\in J$ such that $\beta\in\web{\cB_j}$. Then, for all $a\in\fin{\cA}$,
  $a\times\set\beta\in\fin{\cA\tensor\cB_j}$, hence $c'\finPolar a\times\set\beta$
  and we obtain $\relAppl{\relRev{c'}}\beta\finPolar a$.
  We have thus proved that $\relAppl{\relRev{c'}}\beta\in\fin{\dual{\cA}}$, 
  which concludes the proof.
\end{proof}
It is still unclear to us if this argument can be adapted to lift the condition 
on the locality of $\fam\own$ in Lemma~\ref{lemma:continuity:transport:direct},
and thus generalize direct continuity to all type blind transport functors with
continuous web functors.

To sum up, remind from Section~\ref{section:exact:direct} that direct continuity
and exact continuity imply weak continuity, from
Lemma~\ref{lemma:continuity:fixpoint} that weakly continuous functors admit
fixpoints and finally from Lemma~\ref{lemma:composition:continuity}
that direct continuity and exact continuity are stable under
composition.
Then, we can infer that the functors (in one variable) resulting of the
composition of $\With$, $\Oplus$, $\oc$ and finite $\tensor$ admit fixpoints. 

The status of the linear arrow functor $\limpl$ is still unclear in that regard,
if only because its properties with relation to the orders on finiteness spaces
are not straightforward. First notice that
$\cA\limpl\cB=\parDual{\cA\tensor\dual\cB}$ is functorial in $\cB$ but cofunctorial
in $\cA$. By contrast, $\cA\limpl\cB$ is $\finExt$-increasing in both $\cA$
and $\cB$, whereas it is only $\finInc$-increasing in $\cB$ and is not
monotonic in $\cA$.

\section{The finitary relational model of the \texorpdfstring{$\lambda$}{lambda}-calculus}

\label{section:lambda}

It is a well known fact that $\Rel$ is a model of classical linear logic 
\citep[Appendix]{ehrhard:fs}, and even of differential linear logic where:
\begin{itemize}
  \item  linear negation is the transpose cofunctor;
  \item  multiplicatives are interpreted by cartesian products;
  \item  additives are interpreted by disjoint unions;
  \item  exponentials are interpreted by finite multisets.
\end{itemize}
The category of finiteness spaces and finitary relations $\Fin$ is also a model
of classical linear logic, which is the subject of the first part of Ehrhard's
paper \citeyearpar{ehrhard:fs}. This result could actually be
stated as follows: for any finiteness structure we impose on the relational
interpretation of atomic formulas, the relational semantics of a proof is
always finitary in the finiteness space denoted by its conclusion. In other
words, that $\Fin$ is a model of linear logic can be stated as a
\emph{property} of the interpretation of linear logic in $\Rel$. This viewpoint
fits very well with the previous developments of our paper, in which we explore
how distinctive constructions and properties of $\Rel$ can be transported to $\Fin$.
In the present section, we extend this stand to the study of the $\lambda$-calculus, 
which will allow us to discuss datatypes in the next section.

From  the  relational  model  of   linear  logic,  we  can  derive  an
extensional  model  of  the  simply typed  $\lambda$-calculus  by  the
co-Kleisli  construction:  this  gives  rise  to  a  cartesian  closed
category $\RelCC$. Objects in $\RelCC$ are sets and morphisms from $A$
to $B$ are \definitive{multirelations}, that is subsets of $A \CCarrow
B=\MFin A\times B$. Notice that this definition is an instance of Girard's 
translation of the intuitionistic arrow: $A\CCarrow B=\oc  A\limpl B$.
Composition of multirelations is given by
\[g\CCcomp f=\set{\pars{\sum_{i=1}^n
\ms\alpha_i,\gamma}\st n\in\nat\land\exists\ms\beta=\fullMulSet\beta  n\in \oc B,\,
(\ms\beta,\gamma)\in g\land \forall i\,(\ms\alpha_i,\beta_i)\in f}\]
as soon as $f\in\RelCC(A,B)$ and $g\in\RelCC(B,C)$.
The identity multirelation on $A$ is the \definitive{dereliction}: 
$\natInst\dere A=\set{(\mulSet \alpha,\alpha)\st \alpha\in A}$.
The cartesian product is given by the disjoint union of sets 
$\With\fam A$, with projections $\fam{\CCproj}=\fam{\relComp{\rest}{\dere}}$.
If, for all $i\in I$, $f_i\in\RelCC(A,B_i)$, then the unique morphism $\tuple{\fam f}$ 
from $A$ to $\With\fam B$ such that $\proj_i\CCcomp \tuple{\fam f}=f_i$ for all 
$i$ is $\set{\pars{\ms\alpha,(i,\beta)}\st(\ms\alpha,\beta)\in f_i,\,i\in I}$.
The terminal object denoted $\top$ is the empty set $\emptyset$, 
the unique multirelation from $A$ to $\emptyset$ being empty.
The adjunction for closedness is $\RelCC(A\with B,C)\cong\RelCC(A,\oc B\limpl C)$ 
which boils down to the natural bijection $\oc(A\with B)\cong\oc A\tensor\oc B$.

\subsection{Relational interpretation and finiteness property}
In this section, we give an explicit description of the interpretation in
$\RelCC$ of the basic constructions of simply typed $\lambda$-calculi with products.  
Type and term
expressions are given by: \[ A,B \eqdef X \mid A\CCarrow B \mid A\with B \mid
\top \quad\quad\textrm{and}\quad\quad s,t \eqdef
x \mid a \mid \labs xs \mid \appl st \mid \pair st \mid \lpi s \mid \rpi s \mid
\unit\] where $X$ ranges over a fixed set 
$\atoms$ of atomic types, $x$ ranges over term variables and $a$ ranges over term constants.
Of course, the variable $x$ is bound by the abstraction in $\labs xs$, we
consider terms up to $\alpha$-equivalence, and we denote by $\subst sxt$ the
capture-avoiding substitution of term $t$ for variable $x$ in $s$.

To each variable or constant, we associate a type, so that each type admits
infinitely many variables.\footnote{The type system we present is
thus in the style of Church rather than in the style of Curry: typing is syntax
directed. This is only a technical convenience and the remaining of
the paper could very well be recast in a Curry-style setting.}
We write $\constants_A$ for the collection of constants of type $A$.
A typing judgement is an expression $\Gamma\jug s:A$ derived from the rules in
Figure \ref{fig:lc} where contexts $\Gamma$ and $\Delta$ range over finite
lists $(x_1:A_1,\dotsc,x_n:A_n)$ of typed variables.
Since we do not impose Barendregt's convention on λ-terms, variables occuring
in a context need not be pairwise distinct, hence the shape of rule (Var).
If a term $s$ is typable, then its type is uniquely determined, say $A$,
and then $\Gamma\jug s:A$ iff $\Gamma$ contains the free variables of $s$.
The operational semantics of a typed $\lambda$-calculus is given by a
contextual equivalence relation $\converts$ on typed terms:
if $s\converts t$, then $s$ and $t$ have the same type, say $A$;
we then write $\Gamma\jug s\converts t:A$ for any suitable $\Gamma$.
We write $\converts_0$ for the least one such that 
$\lpi{\pair st} \converts_0 s$, $\rpi{\pair st} \converts_0 t$ and $\appl{(\labs xs)}t
\converts_0 \subst sxt$ (provided $t$ and $x$ have the same type).

\begin{figure}[t]
	\begin{center}
		\AxiomC{$x\not\in\Delta$}
		\LBL{Var}
		\UnaryInfC{$\Gamma,x:A,\Delta\fCenter x:A$}
		\DisplayProof
 		\quad\quad
 		\AxiomC{}
  		\LBL{Unit}
  		\UnaryInf$\Gamma\fCenter \unit:\top$
  		\DisplayProof
 		\quad\quad
 		\AxiomC{$a \in \constants_A$}
		\LBL{Const}
		\UnaryInfC{$\Gamma\fCenter a:A$}
		\DisplayProof
		\smallskip
		\\
		\AxiomC{$\Gamma,x:A\fCenter s:B$}
		\LBL{Abs}
		\UnaryInfC{$\Gamma \fCenter \labs x{s}:A\CCarrow B$}
		\DisplayProof
		\quad
		\quad
		\AxiomC{$\Gamma\fCenter {s}:A\CCarrow B$}
		\AxiomC{$\Gamma\fCenter {t}:A$}
		\LBL{App}
		\BinaryInfC{$\Gamma\fCenter \appl {s}{t}:B$}
		\DisplayProof
		\smallskip
		\\
		\AxiomC{$\Gamma\fCenter {s}:A$}
		\AxiomC{$\Gamma\fCenter {t}:B$}
		\LBL{Pair}
		\BinaryInfC{$\Gamma\fCenter \pair {s}{t}:A \with B$}
		\DisplayProof
		\quad
		\AxiomC{$\Gamma\fCenter {s}:A\with B$}
		\LBL{Left}
		\UnaryInfC{$\Gamma \fCenter \lpi s:A$}
		\DisplayProof
		\quad
		\AxiomC{$\Gamma\fCenter {s}:A\with B$}
		\LBL{Right}
		\UnaryInfC{$\Gamma \fCenter \rpi s:B$}
		\DisplayProof
	\end{center}
	\caption{Rules of typed $\lambda$-calculi with products
	\label{fig:lc}}
\end{figure}

\begin{figure}[t]
	\begin{center}
		\AxiomC{$x\not\in\Delta$}
		\SLBL{Var}
		\UnaryInfC{
		$\Gamma^{\emptyMulSet},
		x^{\mulSet\alpha}:A,
		\Delta^{\emptyMulSet}
		\fCenter x^{\alpha}:A$}
		\DisplayProof
   \quad
    \quad
    \quad
		\AxiomC{$a\in\constants_A$}
		\AxiomC{$\alpha \in \sem a$}
		\SLBL{Const}
		\insertBetweenHyps{\ }
		\BinaryInfC{
		$\Gamma^{\emptyMulSet}
		\fCenter a^\alpha:A$}
		\DisplayProof
		\smallskip

		\AxiomC{$\Gamma,x^{\ms \alpha}:A \fCenter s^\beta:B$}
		\SLBL{Abs}
		\UnaryInfC{$\Gamma \fCenter \labs x{s}^{(\ms\alpha,\beta)}:A
		\CCarrow B$}
		\DisplayProof
		\hfill
		\AxiomC{$\Gamma_0\fCenter {s}^{(\fullMulSet\alpha
		k,\beta)}:A\CCarrow B
		\quad
		\quad
		\Gamma_1\fCenter {t}^{\alpha_1}:A
		\quad\cdots\quad
		\Gamma_k\fCenter {t}^{\alpha_k}:A
		$}
		\SLBL{App}
		\UnaryInfC{$\sum_{j=0}^k\Gamma_j\fCenter\appl st^\beta:B$}
		\DisplayProof
		\smallskip

		\AxiomC{$\Gamma\fCenter s_i^\alpha:A_i$}
		\SLBL{Pair$_i$}
		\UnaryInfC{$\Gamma\fCenter\pair {s_1}{s_2}^{(i,\alpha)}:A_1\with A_2$}
		\DisplayProof
		\quad
		\quad
		\AxiomC{$\Gamma\fCenter s^{(1,\alpha)}:A\with B$}
		\SLBL{Left}
		\UnaryInfC{$\Gamma \fCenter \lpi s^\alpha:A$}
		\DisplayProof
		\quad
		\quad
		\AxiomC{$\Gamma\fCenter s^{(2,\beta)}:A\with B$}
		\SLBL{Right}
		\UnaryInfC{$\Gamma \fCenter \rpi s^\beta:B$}
		\DisplayProof
	\end{center}
	\caption{Computing points in the relational semantics
	\label{fig:sem}}
\end{figure}

Assume a set $\sem X$ is given for each atomic type $X$; then
we interpret type constructions by $\sem{A\CCarrow B} = \sem A\CCarrow\sem B$,
$\sem{A\with B} = \sem A \with \sem B$ and $\sem \top = \emptyset$. Further assume
that with every constant $a\in\constants_A$ is associated a subset
$\sem a\subseteq \sem A$. The relational semantics of a derivable typing
judgement $x_1:A_1,\dotsc,x_n:A_n\jug s:A$ will be an $n$-ary multirelation
$\sem{s}_{x_1:A_1,\dotsc,x_n:A_n}
\subseteq \sem{A_1\CCarrow\cdots\CCarrow A_n\CCarrow A}$.
We first introduce the deductive system of Figure \ref{fig:sem}, which is 
a straightforward adaptation of \citeauthor{carvalho:exec}'s system $R$
\citeyearpar{carvalho:exec} to the simply typed case. In this system,
derivable judgements are semantic annotations of typing judgements: 
$x_1^{\ms\alpha_1}:A_1,\dotsc,x_n^{\ms\alpha_n}:A_n\jug s^\alpha:A$
stands for
$(\ms\alpha_1,\dotsc,\ms\alpha_n,\alpha)\in\sem{s}_{x_1:A_1,\dotsc,x_n:A_n}$
where each $\ms\alpha_i\in\MFin{\sem{A_i}}$
and $\alpha\in\sem A$.
In rules $\sem{\text{Var}}$ and $\sem{\text{Const}}$,
$\Gamma^{\emptyMulSet}$ denotes an annotated context of the form
$x_1^{\emptyMulSet}:A_1,\dotsc,x_n^{\emptyMulSet}:A_n$.
In rule $\sem{\text{App}}$, the sum of annotated contexts is defined 
pointwise:
$\pars{x_1^{\ms\alpha_1}:A_1,\dotsc,x_n^{\ms\alpha_n}:A_n}+
\pars{x_1^{\ms\alpha'_1}:A_1,\dotsc,x_n^{\ms\alpha'_n}:A_n}=
\pars{x_1^{\ms\alpha_1+\ms\alpha'_1}:A_1,\dotsc,x_n^{\ms\alpha_n+\ms\alpha'_n}:A_n}$.
The semantics of a term is given by:
$\sem{s}_{x_1:A_1,\dotsc,x_n:A_n}=\set{
(\ms\alpha_1,\dotsc,\ms\alpha_n,\alpha)\st
x_1^{\ms\alpha_1}:A_1,\dotsc,x_n^{\ms\alpha_n}:A_n\jug s^\alpha:A }$.
Notice that there is no rule for $\unit$ in Figure \ref{fig:sem}, because
$\sem{\unit}_\Gamma=\emptyset$.
Since we follow the standard
interpretation of typed $\lambda$-calculi in cartesian closed
categories (see ~\citet{lambek-scott:hocl}), in the particular case of $\RelCC$, we obtain:
\begin{lemma}[Invariance]
	\label{lemma:invariance}
	If $\Gamma\jug s\converts_0 t :A$ then $\sem s_\Gamma = \sem t_\Gamma$.
\end{lemma}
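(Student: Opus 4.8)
The plan is to reduce the statement to the soundness of the standard interpretation of the simply typed $\lambda$-calculus in a cartesian closed category, exploiting the fact recalled above that $\RelCC$ is one. Two things must be checked: first, that the point-wise deductive system of Figure~\ref{fig:sem} actually computes the categorical interpretation $\sem{s}_\Gamma$ of $s$ in $\RelCC$; and second, that this categorical interpretation is invariant under $\converts_0$. Given the first point, the second is an instance of general categorical soundness, so the only genuine content is the first, which is de Carvalho's observation for system~$R$ transported to the typed case.

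For the first point I would argue by induction on the typing derivation of $\Gamma\jug s:A$, in each case unfolding the definition of composition of multirelations together with the explicit descriptions of the dereliction $\dere$, of the projections $\CCproj_i=\relComp{\rest_i}{\dere}$, of the tupling $\tuple{\fam f}$, of the terminal map, and of the evaluation morphism $\mathsf{ev}$ arising from the adjunction $\RelCC(A+B,C)\cong\RelCC(A,\MFin B\times C)$. The rules $\sem{\text{Var}}$, $\sem{\text{Const}}$, $\sem{\text{Pair}_i}$, $\sem{\text{Left}}$ and $\sem{\text{Right}}$ match the corresponding categorical operations essentially by inspection, and $\sem{\text{Abs}}$ matches currying, which in $\RelCC$ is carried by the identity on the underlying set. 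The one case with any content is $\sem{\text{App}}$: the way the context annotations $\Gamma_0,\dotsc,\Gamma_k$ are summed and the way the multiset $\fullMulSet\alpha k$ is threaded through the $k$ premises for $t$ is precisely what the formula for $g\CCcomp f$, instantiated at $\mathsf{ev}$ and read through the iso $\oc(A\with B)\cong\oc A\tensor\oc B$, produces. This is purely bookkeeping with finite multisets.

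Granting this, $\sem{-}_\Gamma$ is by construction \emph{compositional}: the interpretation of a term is obtained by plugging the interpretations of its immediate subterms into a fixed morphism assembled from identities, composition, tupling, projections, currying, evaluation and the terminal map. Hence to see that $\sem{-}_\Gamma$ respects $\converts_0$ it suffices, by induction on the derivation of $s\converts_0 t$ (reflexivity, symmetry and transitivity being trivial for an equality, and the congruence cases following from compositionality), to verify the three generating equations at top level. The equations $\sem{\lpi{\pair st}}_\Gamma=\sem s_\Gamma$ and $\sem{\rpi{\pair st}}_\Gamma=\sem t_\Gamma$ are the universal property of the product $\With$, that is $\CCproj_i\CCcomp\tuple{f_1,f_2}=f_i$. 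The equation $\sem{\appl{(\labs xs)}t}_\Gamma=\sem{\subst sxt}_\Gamma$ is the $\beta$-law of the exponential adjunction, $\mathsf{ev}\CCcomp\tuple{\CCabs f,g}=f\CCcomp\tuple{\id,g}$, once one has the categorical substitution lemma $\sem{\subst sxt}_\Gamma=\sem s_{\Gamma,x:A}\CCcomp\tuple{\id,\sem t_\Gamma}$, which is proved by a routine induction on $s$.

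The step that deserves the most care — although it remains routine — is the combinatorics underlying both the $\sem{\text{App}}$ case and the substitution lemma. In the relational reading, replacing an occurrence of $x$ carrying the multiset annotation $\ms\gamma=\fullMulSet\gamma k$ by $t$ forces a splitting of the ambient context annotation into a sum of $k+1$ annotations, one feeding a derivation of $s$ and the other $k$ feeding $k$ separate derivations of $t$, one per element of $\ms\gamma$; matching up derivations of the redex with derivations of the reduct, in both directions, is where the actual work lies. Everything else is a direct unfolding of the definitions, and one could equally well bypass the categorical vocabulary and establish the substitution lemma and the three equations directly on the annotated derivations of Figure~\ref{fig:sem}, at the same combinatorial cost.
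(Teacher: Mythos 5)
Your proposal is correct and follows exactly the route the paper takes: the paper gives no explicit proof, simply noting that the interpretation is the standard one of typed $\lambda$-calculi in cartesian closed categories (of which $\RelCC$ is one), so invariance under $\converts_0$ is an instance of general categorical soundness. Your write-up merely makes explicit the two points the paper leaves implicit — that the annotated system of Figure~\ref{fig:sem} computes the categorical interpretation, and the substitution lemma underlying the $\beta$-case — which is a faithful elaboration rather than a different argument.
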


For all finiteness spaces $\cA$ and $\cB$, write $\cA\CCarrow
\cB=\oc\cA\limpl\cB$. 
\begin{lemma}
  \label{lemma:ccarrow}
  Let $f$ be a multirelation from $\web{\cA}$ to $\web{\cB}$. Then 
  $f\in\fin{\cA\CCarrow\cB}$ iff, for all $a\in\fin{\cA}$, 
  $\relAppl f{\prom a}\in\fin\cB$ and for all $\beta\in\web{\cB}$, 
  $\relAppl{\relRev f}\beta\finPolar\prom a$.
\end{lemma}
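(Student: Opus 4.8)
The plan is to unfold $\cA\CCarrow\cB=\oc\cA\limpl\cB$ and invoke the characterisation of finitary relations from Lemma~\ref{lemma:finitary:relation}. First I would note that $f\in\fin{\oc\cA\limpl\cB}$ holds iff $f$ is finitary from $\oc\cA$ to $\cB$, and that by the pointwise criterion in that lemma this means precisely: (i) $\relAppl f{\hat a}\in\fin\cB$ for every $\hat a\in\fin{\oc\cA}$; and (ii) $\relAppl{\relRev f}\beta\in\fin{\dual{\oc\cA}}$ for every $\beta\in\web\cB$. It then remains to rephrase (i) and (ii) in terms of the sets $\prom a$ with $a\in\fin\cA$, using the two available descriptions of $\oc\cA$: the bidual form $\fin{\oc\cA}=\bidual{\set{\prom a\st a\in\fin\cA}}$, and the transport form $\fin{\oc\cA}=\transport{\cA,\supp_A}=\set{\hat a\subseteq\oc A\st \relAppl{\supp_A}{\hat a}\in\fin\cA}$, where $\relAppl{\supp_A}{\hat a}=\Union_{\ms\alpha\in\hat a}\support{\ms\alpha}$.

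For clause (ii): write $\fS=\set{\prom a\st a\in\fin\cA}$. Since $\fin{\oc\cA}$ is the bidual closure of $\fS$ and biduals are idempotent, the predual of $\fin{\oc\cA}$ — that is $\fin{\dual{\oc\cA}}$ — coincides with the predual of $\fS$ itself ($\dual{\bidual{\fS}}=\tridual{\fS}=\dual{\fS}$). Hence $\relAppl{\relRev f}\beta\in\fin{\dual{\oc\cA}}$ is, by the very definition of the predual, the assertion that $\relAppl{\relRev f}\beta\finPolar\prom a$ for every $a\in\fin\cA$; this is exactly the second clause of the equivalence to be proved.

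For clause (i): one implication is immediate, as $\prom a\in\fin{\oc\cA}$ for every $a\in\fin\cA$ (it is one of the generators of the bidual), so (i) specialises to $\relAppl f{\prom a}\in\fin\cB$. Conversely, assume $\relAppl f{\prom a}\in\fin\cB$ for all $a\in\fin\cA$ and take an arbitrary $\hat a\in\fin{\oc\cA}$. Put $a=\relAppl{\supp_A}{\hat a}$: the transport description of $\fin{\oc\cA}$ gives $a\in\fin\cA$, and every $\ms\alpha\in\hat a$ satisfies $\support{\ms\alpha}\subseteq a$, i.e. $\hat a\subseteq\MFin a=\prom a$. Therefore $\relAppl f{\hat a}\subseteq\relAppl f{\prom a}\in\fin\cB$, and $\relAppl f{\hat a}\in\fin\cB$ since finiteness structures are downwards closed for inclusion. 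Combining the two reformulations yields the stated equivalence.

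I do not anticipate any genuine difficulty here. The only step that uses the sharp form of the transport construction, rather than merely the bidual expression for $\fin{\oc\cA}$, is the converse of (i), where an arbitrary finitary $\hat a$ must be squeezed inside some $\prom a$ — which is precisely what $\fin{\oc\cA}=\transport{\cA,\supp_A}$ supplies, by taking $a$ to be the support of $\hat a$. Everything else is a routine unwinding of the definitions of predual, direct image and dual finiteness space.
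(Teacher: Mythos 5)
Your proof is correct and follows exactly the route the paper intends: it presents the lemma as a direct consequence of Lemma~\ref{lemma:finitary:relation} (the pointwise criterion, applied with $\oc\cA$ in place of $\cA$) together with the two descriptions of $\fin{\oc\cA}$ — the bidual of $\set{\prom a\st a\in\fin\cA}$ for the predual clause, and the transport form for squeezing an arbitrary finitary $\hat a$ inside $\prom{\relAppl{\supp}{\hat a}}$. The paper gives no further detail, and your write-up supplies precisely the missing unwinding.
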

\begin{proof}
By Lemma~\ref{lemma:finitary:relation}, $f\in\fin{\cA\CCarrow\cB}$ iff 
for all $\ms a\in\fin{\oc\cA}$, 
  $\relAppl f{\ms a}\in\fin\cB$ and for all $\beta\in\web{\cB}$,
  $\relAppl{\relRev f}\beta\finPolar \ms a$.
By the characterization of  $\oc\cA$  given in Example~\ref{ex:oc}: 
\[\fin{\oc\cA}
=\set{\ms a\subseteq\oc{\web\cA}\st\relAppl{\natInst{\supp}{\web{\cA}}}{\ms a}\in\fin\cA}
=\bidual[\web\cA]{\set{\prom{a}\st{}a\in\fin{\cA}}}.\] 
Then the result follows from the inclusions $\set{\prom
a\st a\in\fin\cA}\subseteq\fin{\oc\cA}$ and 
$a\subseteq \support{\prom a}$ for all $a\subseteq\web\cA$.
\end{proof}

We call  \definitive{finitary multirelations} from $\cA$  to $\cB$ the
elements  of  $\fin{\cA\CCarrow\cB}$. Then  the  category $\FinCC$  of
finiteness  spaces and finitary  multirelations is  no other  than the
co-Kleisli   category    derived   from   $\Fin$.     The   relational
interpretation  $\sem{\cdot}$  of simply typed  $\lambda$-calculi thus
defines  a semantics  in  $\FinCC$ as  follows.   Assume a  finiteness
structure $\fin X$  on $\sem X$ is given for all  atomic type $X$, and
write $\fsem  X$ for the  finiteness space $\pars{\sem X,\fin  X}$. We
set $\parFsem{A\CCarrow B}=\fsem  A\CCarrow \fsem B$, $\parFsem{A\with
  B}=\fsem  A\bipr \fsem B$  and $\fsem\top=\fsEmpty$.   Then, further
assuming that,  for all $a\in\constants_A$,  $\sem a\in\fin{\fsem A}$,
we obtain:
\begin{lemma}[Finiteness]
  \label{lemma:finiteness}
  If $x_1:A_1,\dotsc,x_n:A_n\jug s:A$ then
	\[\sem{s}_{x_1:A_1,\dotsc,x_n:A_n}
	\in\fin{\fsem{A_1}\impl\cdots\impl\fsem{A_n}\impl\fsem A}.\]
\end{lemma}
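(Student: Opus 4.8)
The plan is to argue by induction on the derivation of $x_1:A_1,\dotsc,x_n:A_n\jug s:A$, exploiting the fact -- as recalled just before Lemma~\ref{lemma:invariance} -- that the relational semantics of Figure~\ref{fig:sem} is nothing but the canonical interpretation of the simply typed $\lambda$-calculus with products in the cartesian closed category $\RelCC$. Since $\Fin$ is a model of linear logic, its co-Kleisli category $\FinCC$ is cartesian closed; and by the definitions of $\oc$, $\tensor$, $\With$ and $\limpl$ in $\Fin$ (Section~\ref{section:transport}) the forgetful functor $\FinCC\to\RelCC$ preserves this structure strictly: the web of $\fsem A\CCarrow\fsem B$, of $\fsem A\with\fsem B$ and of $\oc\fsem A$ is respectively the relational construction on $\web{\fsem A}$ and $\web{\fsem B}$, and every structural morphism (dereliction, digging, contraction, weakening, evaluation, projection, pairing) has the very same underlying relation whether read in $\Fin$ or in $\Rel$. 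It therefore suffices to verify that, under the hypothesis that $\sem a\in\fin{\fsem A}$ for every constant, each clause of Figure~\ref{fig:sem} can be replayed inside $\FinCC$; the underlying relation of the resulting finitary multirelation is then $\sem s_\Gamma$, so the latter is finitary in $\fsem{A_1}\impl\dotsb\impl\fsem{A_n}\impl\fsem A$.

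Concretely, I would run through the clauses as follows. For $\sem{\text{Var}}$ and $\sem{\text{Const}}$ the semantics is assembled from a dereliction $\dere_{\fsem{A_i}}$, the projection $\CCproj_i=\relComp{\rest_i}{\dere}$ -- finitary since $\rest_i$ is a finitary relation from $\With\fam\cA$ to $\cA_i$, as the transport construction of $\With$ shows -- the canonical weakening $\fsem{A_1}\with\dotsb\with\fsem{A_n}\to\fsEmpty$, and, in the constant case, the morphism $\fsEmpty\to\fsem A$ picked out by $\sem a\in\fin{\fsem A}$. For $\sem{\text{Abs}}$ the semantics is the currying of the inductive hypothesis, a bijective reindexing of the multiset annotations implementing the isomorphism $\oc(\fsem A\with\fsem B)\cong\oc\fsem A\tensor\oc\fsem B$, which is finitary because $\oc$ and $\tensor$ are functors in $\Fin$ (Lemma~\ref{lemma:transport:situation} and the ensuing examples). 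For $\sem{\text{Pair}_i}$ the semantics is a tupling, finitary because $\With$ is a functor in $\Fin$, while $\sem{\text{Left}}$ and $\sem{\text{Right}}$ are post-compositions with the finitary projections $\CCproj_1$ and $\CCproj_2$. Finally, for $\sem{\text{App}}$ the semantics is obtained by pairing the two inductive hypotheses, duplicating the argument through the comonad structure of $\oc$, and composing with the evaluation morphism $\mathrm{ev}\colon(\fsem A\CCarrow\fsem B)\with\fsem A\to\fsem B$ of the cartesian closed category $\FinCC$.

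The one step deserving real attention is this duplication of the argument in $\sem{\text{App}}$: the pointwise sum $\sum_{j=0}^k\Gamma_j$ of annotated contexts together with the occurrences $t^{\alpha_1},\dotsc,t^{\alpha_k}$ of the argument encode the contraction, \ie{} the co-multiplication of the exponential, so one must be sure that the relational maps realising it -- the digging $\oc\cA\to\oc\oc\cA$, the $n$-ary contractions $\oc\cA\to\oc\cA\tensor\dotsb\tensor\oc\cA$ and the weakening $\oc\cA\to\fsSgl$ -- are finitary from the relevant finiteness spaces. This is exactly where the development of Section~\ref{section:transport} is used: $\oc$ is a transport functor in $\Fin$, built with the support ownership relation and the cardinality shape relation, so it sends finitary relations to finitary relations uniformly in the underlying finiteness structure; the comonad structure maps, being instances of such uniform relational constructions, are therefore finitary, which one checks directly by Lemma~\ref{lemma:finitary:relation}. (Alternatively, one may simply appeal to the fact that $\Fin$ is a model of linear logic, where these maps are finitary by construction.) Once this bookkeeping is settled, I expect the induction to go through routinely, the only remaining effort being to match each clause of Figure~\ref{fig:sem} with its categorical counterpart.
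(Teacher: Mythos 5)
Your proof is correct and follows exactly the route the paper relies on: the paper in fact omits the proof entirely, presenting the lemma as an immediate consequence of the fact that $\FinCC$ is the cartesian closed co-Kleisli category of the linear logic model $\Fin$ and that its interpretation has the same underlying relations as the one in $\RelCC$. Your induction on typing derivations, including the careful check that the exponential structure maps realising contraction in the $\sem{\text{App}}$ case are finitary, is a faithful and correct elaboration of that argument.
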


\subsection{On the relations denoted by \texorpdfstring{$\lambda$}{lambda}-terms}
Pure typed $\lambda$-calculi are
those with no additional constant or conversion rule: fix a set $\atoms$ of
atomic types, and write $\Lambda^{\atoms}_0$ for the calculus where
$\constants_{A}=\emptyset$ for every type $A$, and $s \converts t$ iff $s \converts_0 t$.
This is the most basic case and we have just shown that $\RelCC$ and $\FinCC$ model 
$\converts_0$. Be aware that if we introduce no atomic type, then the semantics
is actually trivial: in $\Lambda^{\emptyset}_0$, all types and terms are interpreted by
$\emptyset$.

By contrast, we can consider the internal language $\LambdaRel$ of $\RelCC$ in
which all relations can be described as terms: fix the atomic types $\atoms$ as the collection
of all sets and the constants $\constants_{A}=\powerset{\sem{A}}$. Then set
$s\converts_{\Rel} t$ iff $\sem s_\Gamma = \sem t_\Gamma$, for any suitable $\Gamma$.
The point in defining such a monstrous language is to enable very natural
notations for relations: in general, we will identify closed terms in
$\LambdaRel$ with the relations they denote in the empty context. For instance,
we write $\natInst\dere A=\labs xx$ with $x$ of type $A$; and 
if $f\in\RelCC(A,B)$ and $g\in\RelCC(B,C)$, we have 
$g\CCcomp f = \parLabs x{\parAppl g {\appl fx}}$. 
More generally, if $s$ and $t$ are terms in $\LambdaRel$ of type $A$ in context 
$\Gamma$, we may simply write $\Gamma\jug s=t:A$ for
$\sem{s}_{\Gamma}=\sem{t}_{\Gamma}\in\sem A$. 
Similarly, 
the internal language $\LambdaFin$ of $\FinCC$,
where $\atoms$ is the collection of all finiteness spaces and 
$\constants_{A}=\fin{\fsem{A}}$, allows to denote conveniently all finitary relations
and equations between them.

Before we address the problem of algebraic types, we review some basic
properties of the semantics. First, $\RelCC$ and $\FinCC$ being cartesian
closed categories, they actually model typed
$\lambda$-calculi with extensionality:
$s:A\CCarrow A\jug\labs x{\pars{sx}}=s:A\CCarrow A$
as soon as $x$ is not free in $s$.
Moreover, they admit all products, and they are models of 
$\lambda$-calculi with surjective tuples of arbitrary arity, that is
$t:\With\fam
A\jug\tuple{\CCproj_i t}_{i\in I}=t$.  In accordance with this last remark, we
may identify any variable of type $\With\fam A$ with a tuple
and write, e.g., $\CCproj_i=\labs{\fam x}{x_i}$.

Being cpo-enriched, $\RelCC$ admits fixpoints at all types and the least fix
point operator on $A$ is the least multirelation $\fp\subseteq (A\CCarrow
A)\CCarrow A$ such that $\fp=\parLabs f{\appl f{\pars{\appl{\fp} f}}}$, \ie{}
$\fp=\Union_{n\in\nat}\fp_n$ where $\fp_0=\emptyset$ and $\fp_{n+1}=\parLabs
f{\appl f{\pars{\appl{\fp_n} f}}}$ (see for
instance~\citet[Chapter 6]{amadio-curien:domains}). More explicitly, we get:
\[\fp_{n+1}=\set{
	\bigg(\mulSet{\pars{\fullMulSet\alpha p,\alpha}}+\sum_{k=1}^p\ms\phi_k,\alpha\bigg)
\st p\in\nat\land\forall k\in\oneTo p,\,\pars{\ms\phi_k,\alpha_k}\in\fp_n}.\]
Notice that, for all $n\in\nat$ and all finiteness space $\cA$,
$\natInst{\fp}{\web{\cA}}_n\in\fin{(\cA\CCarrow\cA)\CCarrow\cA}$.  But in
general, $\fp$ is not finitary: \citet[Section 3]{ehrhard:fs} details a
counter-example, but we can actually show that the least fixpoint operator is
never finitary on non-empty webs (and thus no fixpoint operator is, since
finiteness structures are downward closed for inclusion).
\begin{lemma}
  If $\web\cA\not=\emptyset$, then
  $\natInst{\fp}{\web{\cA}}\not\in\fin{(\cA\CCarrow\cA)\CCarrow\cA}$.
\end{lemma}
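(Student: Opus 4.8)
Throughout, set $A=\web{\cA}$ and abbreviate $\natInst{\fp}{A}$ to $\fp$, its approximants to $\fp_n$. The plan is to assume $\fp\in\fin{(\cA\CCarrow\cA)\CCarrow\cA}$ and contradict the characterization of finitary multirelations recalled above, instantiated with domain $\cA\CCarrow\cA$ and codomain $\cA$: under that assumption, for every $\psi\in\fin{\cA\CCarrow\cA}$ and every $\beta\in A$ we would have $\relAppl{\relRev{\fp}}{\beta}\finPolar\prom{\psi}$, \ie{} only finitely many $\Phi\in\prom{\psi}$ satisfy $(\Phi,\beta)\in\fp$. I shall contradict this for a single well-chosen $\psi$ and $\beta$.

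Since $A\neq\emptyset$, fix $\alpha\in A$ and put $\psi=\set{(\mulSet{\alpha},\alpha),(\emptyMulSet,\alpha)}$. First I would check $\psi\in\fin{\cA\CCarrow\cA}$: for every $a\in\fin{\cA}$, $\relAppl{\psi}{\prom{a}}\subseteq\set{\alpha}$ and, for every $\beta\in A$, $\relAppl{\relRev{\psi}}{\beta}\subseteq\set{\mulSet{\alpha},\emptyMulSet}$; both are finite, hence finitary, so $\psi$ is finitary from $\cA$ to $\cA$ by the same characterization. Intuitively $\psi$ denotes a map that returns the constant $\alpha$ but may still read its argument: the summand $(\emptyMulSet,\alpha)$ seeds the recursion with a base value, while $(\mulSet{\alpha},\alpha)$ lets that value be propagated through one recursive unfolding.

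The heart of the argument is the claim that, writing $\Phi_0=\mulSet{(\emptyMulSet,\alpha)}$ and $\Phi_{n+1}=\mulSet{(\mulSet{\alpha},\alpha)}+\Phi_n$, one has $\Phi_n\in\prom{\psi}$ and $(\Phi_n,\alpha)\in\fp$ for all $n\in\nat$. The first point is immediate, each $\Phi_n$ being a finite sum of singleton multisets of elements of $\psi$. For the second I would induct on $n$, using $\fp=\Union_{m\in\nat}\fp_m$: taking $p=0$ in the defining clause of $\fp_1$ gives $(\Phi_0,\alpha)\in\fp_1$; and if $(\Phi_n,\alpha)\in\fp_m$, then taking $p=1$, $\alpha_1=\alpha$ and $\ms\phi_1=\Phi_n$ in the defining clause of $\fp_{m+1}$ — legitimate since $(\ms\phi_1,\alpha_1)=(\Phi_n,\alpha)\in\fp_m$ — yields the element $(\mulSet{(\mulSet{\alpha},\alpha)}+\Phi_n,\alpha)=(\Phi_{n+1},\alpha)\in\fp_{m+1}$. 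The $\Phi_n$ have pairwise distinct cardinalities, hence are pairwise distinct; so $\set{\Phi_n\st n\in\nat}$ is an infinite subset of $\relAppl{\relRev{\fp}}{\alpha}\inter\prom{\psi}$, contradicting $\relAppl{\relRev{\fp}}{\alpha}\finPolar\prom{\psi}$. Therefore $\fp\notin\fin{(\cA\CCarrow\cA)\CCarrow\cA}$.

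The only delicate step is the choice of the witness $\psi$; the obvious candidates do not work. With $\set{(\mulSet{\alpha},\alpha)}$ alone (dereliction restricted to $\set{\alpha}$) the induction never starts, as every element of $\fp$ has non-empty first component and so $(\emptyMulSet,\alpha)\notin\fp$. With $\set{(\emptyMulSet,\alpha)}$ alone the iteration cannot bootstrap: producing a nontrivial recursive call forces a non-empty argument multiset in the clause defining $\fp_{m+1}$, which this $\psi$ cannot match. Combining the two summands is exactly what makes the unfolding inside $\fp$ iterate, and is the crux of the proof.
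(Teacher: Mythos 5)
Your proof is correct and follows essentially the same route as the paper: the same witness $f=\set{(\emptyMulSet,\alpha),(\mulSet\alpha,\alpha)}$ and the same infinite family $\mulSet{(\emptyMulSet,\alpha)}+n\mulSet{(\mulSet\alpha,\alpha)}$ in $\prom f\inter\relAppl{\relRev{\fp}}{\alpha}$, with the induction on the approximants $\fp_n$ merely spelled out in more detail than the paper's ``observe''.
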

\begin{proof}
  Let $\alpha\in\web\cA$ and
  $f=\set{(\emptyMulSet,\alpha),(\mulSet\alpha,\alpha)}
  \in\PFin{\cA\CCarrow\cA}\subseteq\fin{\cA\CCarrow\cA}$.  Observe that
  $(\mulSet{(\emptyMulSet,\alpha)},\alpha)\in\fp_1$,
  $(\mulSet{(\emptyMulSet,\alpha),(\mulSet\alpha,\alpha)},\alpha)\in\fp_2$, and
  more generally
  $(\mulSet{(\emptyMulSet,\alpha)}+n\mulSet{(\mulSet\alpha,\alpha)},\alpha)\in\fp_{n+1}$.
  Hence $\prom f\not\finPolar\relAppl{\relRev{\fp}}\alpha$ and we conclude 
  by Lemma \ref{lemma:ccarrow}.
\end{proof} 

This result indicates that the finitary semantics refuses infinite computations
and will not accomodate general recursion, \eg{} in the sense of PCF. It is
thus very natural to investigate the nature of the
algorithms that can  be studied in a finitary  setting. It was already
known from Ehrhard's original paper \citeyearpar{ehrhard:fs} that one can
model a  restricted form of  tail-recursive iteration. In  recent work
\citep{vaux:relT},  the  second  author   showed  that   the  finitary
relational model of the $\lambda$-calculus can actually be extended to
Gödel's system  $T$, \ie{} typed recursion on  integers. The remaining
of the  paper provides  a generalization of  this result  to recursive
algebraic datatypes.

\section{Lazy recursive algebraic datatypes}

\label{section:rec}

An algebraic datatype is a composite of products, sums and base
types: products are equipped with projections and a tupling operation (\ie{}
pairing, in the binary case), while sums are equipped with injections and a
case definition operator (which is essentially 
a pattern matching operator). Of course, datatype constructors are meant to be
polymorphic: in other words they are particular functors. In a cartesian closed
category, it is only natural to interpret products as categorical products. On
the other hand, coproducts are not always available, hence the interpretation
of sums might not be as canonical.

In this concluding section of our paper, we first discuss the status of sums
in $\RelCC$ and $\FinCC$. We are then led to investigate the semantics of
recursive algebraic datatypes we obtain by taking the fixpoints of algebraic
functors. In particular, we remark that the relational
interpretation gives rise to a \emph{lazy} semantics. For instance the web of
the datatype of trees is not a set of trees but a set of paths in trees: this
generalizes a similar feature of the coherence semantics of system $T$
\citep{girard:prot} and its relational variant \citep{vaux:relT}.  We finish the
paper by providing an explicit description of the relational interpretation of
the constructors and destructors of recursive algebraic datatypes, which
enables us to prove them finitary.

\subsection{Sums}

By contrast with the cartesian structure, the cocartesian structure is ruled
out by the co-Kleisli construction from $\Rel$ to $\RelCC$ (as by the one from
$\Fin$ to $\FinCC$): $\RelCC$ does not have coproducts. 

\begin{counter}
  There is no coproduct for the pair of sets $(\emptyset,\emptyset)$ in
  $\RelCC$. Indeed, assume that there exists a set $A$ and multirelations $i_0$ and
  $i_1$ from $\emptyset$ to $A$, such that for all set $B$ and all
  multirelations $f_0$ and $f_1$ from $\emptyset$ to $B$ there exists a unique
  $h\in\RelCC(A,B)$ such that $h\CCcomp i_k=f_k$  for $k=0,1$.  Necessarily,
  there exist $\alpha_0$ and $\alpha_1$, such that $(\emptyMulSet,\alpha_k)\in
  i_k$ but $(\emptyMulSet,\alpha_k)\not\in i_{1-k}$ for $k=0,1$: otherwise, \eg,
  $h\CCcomp i_0\subseteq h\CCcomp i_1$ for all $h$. Now consider relations from
  $A$ to $\{0\}$,
  $h'=\set{(\mulSet{\alpha_0},0),(\mulSet{\alpha_1},0)}$ and
  $h''=\set{(\emptyMulSet,0)}$: we have $h'\CCcomp i_k=h''\CCcomp i_k$ for
  $k=0,1$ but $h'\not=h''$, which contradicts the unicity property of the coproduct.
\end{counter}

We can however provide an adequate interpretation of sum types, adapting
Girard's interpretation of intuitionistic logic in coherence spaces
\citep{girard:prot}.  We write $A\loplus B$ for the lifted sum $\set{1,2}\union A\oplus B$ 
of $A$ and $B$, and more generally: $\Loplus\fam A=I\union\Oplus\fam
A$.\footnote{
Another possibility for interpreting sums is to consider 
$A\ocplus B=\oc A\oplus\oc B$ which is preferred by Girard to interpret
intuitionistic disjunction because it enjoys an extensionality property.
There is no doubt we could adapt the following sections of our paper 
to this notion of sum.} The idea is that indices stand for
tokens without associated value: where $(i,\alpha)$ can be read as ``the
element $\alpha$ in $A_i$'', $i$ represents some undetermined element of
which we only know it is in $A_i$. Then, for all $i\in I$, we set $\inj^{\fam A}_i=
\set{(\emptyMulSet,i)}\union\set{(\mulSet\alpha,(i,\alpha))\st\alpha\in A_i}$.
Moreover, if $\fam f$ is a relation from $\fam A$ to $\fam B$, 
we set $\Loplus\fam f=\natInst{\id}I\union\Oplus\fam f$ so that  $\Loplus$ is a
continuous $I$-ary functor from $\Rel^I$ to $\Rel$. 
For all $i\in I$ and $\ms \alpha=\fullMulSet\alpha n\in\oc A_i$, we write
$i\ms\alpha$ for $\mulSet{(i,\alpha_1),\dotsc,(i,\alpha_n)}\in\oc\Loplus\fam
A$. Then, for all family $\fam f$ 
of multirelations such that $f_i\in\RelCC(A_i,B)$ for all $i\in I$, we define
$\cotuple{\fam f}=\set{(\mulSet i+i\ms\alpha,\beta)\st 
i\in I\land (\ms\alpha,\beta)\in f_i}$
and obtain $\cotuple{\fam f}\CCcomp{\inj_i}=f_i$ for all $i\in I$.

Notice however that $\cotuple{\fam f}$ is not
characterized by this property, since we have already remarked that $\Loplus$ is
not a coproduct in $\RelCC$. For instance, 
$\set{(\mulSet{i,i}+i\ms\alpha,\beta)\st
(\ms\alpha,\beta)\in f_i}$ behaves similarly (we just added a copy
of the token $i\in I$).
This \definitive{case definition} construction can be internalized
as a multirelation, by setting $\natInst\lcase{\fam A,B}=\labs{\fam
f}{\cotuple{\fam f}}\subseteq\With_{i\in I}\pars{A_i\CCarrow
B}\CCarrow\Loplus\fam A\CCarrow B$. More explicitly:
\[\natInst\lcase{\fam A,B}=\set{\pars{
  \mulSet{(i,(\ms\alpha,\beta))},
  \mulSet{i}+i\ms\alpha,
  \beta}
  \st i\in I\land\ms\alpha\in\oc{A_i}\land\beta\in B}\]
and we obtain:
\begin{lemma}\label{lemma:case:inj}
  For all $j\in I$, 
  $\fam f:\With_{i\in I}\pars{A_i\CCarrow B},s:A_j
  \jug \appl{\appl\lcase{\fam f}}{\pars{\appl{\inj_j}s}}
  = \appl{f_j}s:B$.
\end{lemma}
This provides a lazy implementation of sum types. For instance, we have
\[\appl{\appl{\natInst\lcase{\fam A,B}}{\tuple{\labs x b_i}_{i\in I}}}{\pars{\appl{\inj_j}a}}
=b_j\]
for all $\fam b\in\powerset B^I$, $j\in I$ and $a\subseteq A_j$, even
when $a$ is undefined, \ie{} $a=\emptyset$.

For all $i\in I$, the restriction $\natInst{\rest_i}{\fam A}$ is a quasi-functional
lax natural transformation from $\Loplus$ to $\Pi_i$. The same holds 
for the index relation $\indx\union\natInst{\id}I$ from $\Loplus$ to $E_I$.
We thus have a transport situation, which moreover defines 
a functor $\Loplus$ from $\Fin^I$ to $\Fin$, because it admits a shape relation:
$\indx$ itself (see Lemma~\ref{lemma:transport:situation}). We obtain $\web{\Loplus \fam{\cA}}=\Loplus{\fam{\web\cA}}$ 
and 
$\fin{\Loplus\fam\cA}=\set{J\union \sum_{i\in K}a_i\st J\union K\finSubseteq I\land
\forall i\in K,\ a_i\in\fin{\cA_i}}$.
This defines a functor suitable to interpret sum types in $\FinCC$ (although not a coproduct)
because injections and the case definition operator are finitary: 
\begin{lemma}
  \label{lemma:case:inj:finitary}
  For all finiteness spaces $\fam\cA$ and $\cB$,
  $\inj_i\in\fin{\cA_i\CCarrow\Loplus\fam \cA}$
  and 
	\[\lcase\in\fin{
	\Loplus\fam\cA\CCarrow\With_{i\in I}\pars{\cA_i\CCarrow \cB}\CCarrow \cB}.\]
\end{lemma}
\begin{proof}
  This is a direct application of the definitions.
\end{proof}

We call \definitive{algebraic datatype} any functor built from projection functors,
$\fsEmpty$, $\With$ and $\Loplus$. The most basic example of composite datatype
is that of booleans, $\fsBool=\fsEmpty\loplus\fsEmpty$: assuming this lifted
sum is indexed by the two point set $\set{\webTrue,\webFalse}$, $\fsBool$ is
the only finiteness space with $\web{\fsBool}=\set{\webTrue,\webFalse}$. The
injections $\natInst{\inj_{\webTrue}}{(\emptyset,\emptyset)}=\set{(\emptyMulSet,\webTrue)}$ and
$\natInst{\inj_{\webFalse}}{(\emptyset,\emptyset)}=\set{(\emptyMulSet,\webFalse)}$
are constant multirelations: up to the isomorphism
$\emptyset\CCarrow\web{\fsBool}\cong\web{\fsBool}$,
we thus consider their respective images $\true=\set{\webTrue}\in\fin{\fsBool}$
and $\false=\set{\webFalse}\in\fin{\fsBool}$ as the constructors of $\fsBool$.
Similarly, the case definition
$\natInst{\lcase}{(\emptyset,\emptyset),A}=\set{\pars{
  \mulSet{\webVar},
  \mulSet{(\webVar,(\emptyMulSet,\alpha))},
  \alpha}
  \st \webVar\in\set{\webTrue,\webFalse}\land\alpha\in A}$
corresponds with the conditional
$\natInst{\bcase}{A}=
\set{(\mulSet\webTrue,\mulSet\alpha,\emptyMulSet,\alpha)\st\alpha\in A}
\union\set{(\mulSet\webFalse,\emptyMulSet,\mulSet\alpha,\alpha)\st\alpha\in A}$
up to the isomorphisms $\emptyset\CCarrow A\cong A$ and 
$A\with A\CCarrow A\cong A\CCarrow A\CCarrow A$, so that
$\natInst{\bcase}{\web\cA}\in\fin{\fsBool\CCarrow\cA\CCarrow\cA\CCarrow\cA}$
for all finiteness space $\cA$.
Of course, we obtain $\bcase\true=\labs x{\labs y{x}}$ and 
$\bcase\false=\labs x{\labs y{y}}$.

\subsection{Tree types as fixpoints}
\label{section:datatypes:fixpoints}

Algebraic datatypes are $\finExt$-increasing functors,
and both exactly continuous and directly continuous. Hence they
admit least fixpoints, which are obtained as exact suprema by Lemma
\ref{lemma:continuity:fixpoint}.
We investigate how this construction could provide an interpretation of
recursive algebraic datatypes. We may first consider a finiteness space of
trees:
\begin{counter}
  Let $\cA$ and $\cB$ be finiteness spaces and
  $\cT:\cX\mapsto \cA\oplus(\cX\tensor\cB\tensor\cX)$.
  We can describe the least fixpoint $\fix\cT$ as follows:
  \begin{itemize}
    \item $\web{\fix\cT}$ is the set of all finite binary trees, 
      with leaves labelled by elements of $\web\cA$
      and nodes labelled by elements of $\web\cB$;
    \item a set $t$ of trees is finitary in $\fix\cT$ 
      when the set of all the labels of nodes (resp. leaves) of trees in $t$
      is finitary in $\cB$ (resp. $\cA$) and moreover the height of trees in $t$ is bounded.
  \end{itemize}
  Moreover, $\fix\cT$ is functorial in variables $\cA$ and $\cB$ because, by the above
  description, it can be defined directly as a transport functor. It should not
  however be considered as the datatype of binary trees with nodes of type $\cB$ and
  leaves of type $\cA$. Indeed, since this type relies on
  $\oplus$ which does not define a sum, we would also fail to define a
  suitable relational interpretation of pattern matching for this type
  of trees. Notice that this is not related with a finiteness argument:
  the same would hold for the relational model (or the coherence model for that matter).
\end{counter}

In light of this example, of the discussion on sums and of previous work on the
semantics of system $T$ \citep{vaux:relT}, we are led to study the finiteness properties of the
datatypes of trees obtained as fixpoints of power series functors:
\begin{definition}\label{def:trees}
Let $I$ be a set of indices, $\fam \cA=\genFam\cA iI$ a family of finiteness
spaces and $\fam J=\genFam JiI$ a family of sets of indices. We write
$\natInst\fsLazy{I,\fam J}\fam\cA$ for the least fixpoint of the algebraic 
functor $\cX\mapsto \Loplus_{i\in I}\cA_i\with \cX^{\with J_i}$, 
where $\cX^{\with J}$ denotes $\With_{j\in J}\cX$.
\end{definition}
We will in general simply write $\fsLazy\fam\cA$ for $\natInst\fsLazy{I,\fam
J}\fam\cA$. Intuitively $\fsLazy\fam\cA$ is the datatype of trees, in which
nodes of sort $i\in I$ are of arity $J_i$ and bear labels in $\cA_i$.
More precisely, we will show that $\web{\fsLazy\fam\cA}$ is the set of paths
starting from the root in such trees. Recall that terms are in general
interpreted by subsets of the web of their type: the subsets interpreting terms
of type $\fsLazy\fam\cA$ will be the sets of paths in the corresponding trees.
We will moreover obtain that these interpretations are all finitary.

In order to describe $\web{\natInst{\fsLazy}{I,\fam J}\fam\cA}$, we introduce
the associated construction $\natInst{\setLazy}{I,\fam J}$ in $\Rel$:
$\natInst\setLazy{I,\fam J}\fam A$ is defined as the least fixpoint of the
continuous functor $T:X\mapsto\Loplus_{i\in I}A_i\with X^{\with J_i}$, 
\ie{} $\setLazy_{I,\fam J}\fam A=\Union_{n\in\nat} T^n\emptyset$, 
which we simply write $\setLazy\fam A$ in general. Since $T$ is actually 
an $(I+1)$-ary continuous functor, $\setLazy$ is itself an $I$-ary continuous
functor in $\Rel$. Before we inspect the general form of $\setLazy\fam A$, 
let us first give an intuitive account of the binary case:
\begin{example}\label{example:trees}
  Consider the finiteness space $\fsBT=\natInst\fsLazy{I,\fam J}$
  obtained by Definition \ref{def:trees},
  where we set $I=\set{\lf,\nd}$, $J_\lf=\emptyset$ and $J_\nd=\set{\lnode,\rnode}$:
  this is the least fixpoint of the functor
  $\cX\mapsto\cA\loplus\pars{\cB\with(\cX\with\cX)}$.
  This is meant to represent the datatype of binary trees with leaves of type
  $\cA$ and nodes of type $\cB$. 
  The various indices can be interpreted as follows: 
  $\lf$ denotes a leaf whereas $\nd$ denotes an internal node;
  $\lnode$ denotes the left child of a node,
  whereas $\rnode$ denotes its right child.
  Then the elements of $\web{\fsBT}$ are sequences  
  of the following four shapes:
  \begin{itemize}
    \item $(\nd,(2,(j_1,(\nd,(2,(j_2,\dotsc,(\nd,(2,(j_n,\nd)))\cdots))))))$ where
      $j_k\in\set{\lnode,\rnode}$ for all $k$, which denotes a path to an
      internal node;
    \item $(\nd,(2,(j_1,(\nd,(2,(j_2,\dotsc,(\nd,(2,(j_n,(\nd,(1,\beta)))))\cdots))))))$ where
      $\beta\in\web\cB$, which denotes a path to an internal node,
      with a value in the interpretation of this node;
    \item $(\nd,(2,(j_1,(\nd,(2,(j_2,\dotsc,(\nd,(2,(j_n,\lf)))\cdots))))))$,
      which denotes a path to a leaf;
    \item $(\nd,(2,(j_1,(\nd,(2,(j_2,\dotsc,(\nd,(2,(j_n,(\lf,\alpha))))\cdots))))))$ 
      where $\alpha\in\web\cA$,
      which denotes a path to a leaf, with a value
      in the interpretation of the label of the leaf.
  \end{itemize}
  It makes only sense to adopt a more compact notation and write, \eg, 
  $\nd j_1\nd j_2\cdots\nd j_n\nd$ for
	\[(\nd,(2,(j_1,(\nd,(2,(j_2,\dotsc,(\nd,(2,(j_n,\nd)))\cdots)))))),\] and 
  similarly $\nd j_1\nd j_2\cdots\nd j_n\nd\beta$ for
	\[(\nd,(2,(j_1,(\nd,(2,(j_2,\dotsc,(\nd,(2,(j_n,(\nd,(1,\beta)))))\cdots)))))).\]

  Then let $a,a',a''$ be constants of 
  type $\cA$ and $b,b'$ be constants of type $\cB$, \ie{} $a,a',a''\in\fin\cA$
  and $b,b'\in\fin\cB$. The tree:
  \begin{center}
   \synttree[b[a][b'[a'][a'']]]
  \end{center}
  will be interpreted by the subset:
  \begin{multline*}
  \set{\nd}\union\set{\nd\beta\st\beta\in b}
  \union
  \set{\nd\lnode\lf}
  \union
  \set{\nd\lnode\lf\alpha\st\alpha\in a}
  \union
  \set{\nd\rnode\nd}
  \union
  \set{\nd\rnode\nd\beta\st\beta\in b'}\\
  \union
  \set{\nd\rnode\nd\lnode\lf}
  \union
  \set{\nd\rnode\nd\lnode\lf\alpha\st\alpha\in a'}
  \union
  \set{\nd\rnode\nd\rnode\lf}
  \union
  \set{\nd\rnode\nd\rnode\lf\alpha\st\alpha\in a''}.
  \end{multline*}
\end{example}

Let us turn to the general case.
By its definition, $\setLazy\fam A$
is the least set such that: $I\subseteq\setLazy\fam A$;
$(i,(1,\alpha))\in\setLazy\fam A$ for all $i\in I$ and $\alpha\in A_i$;
and $\pars{i,(2,(j,\tau))}\in\setLazy\fam A$ for all
$i\in I$, $j\in J_i$ and $\tau$ in $\setLazy\fam A$. Hence the general form of
an element $\tau\in \setLazy\fam A$ is:
$\tau=(i_1,(2,(j_1,\dotsc(j_n,i_{n+1})\cdots)))))$ or 
$\tau=(i_1,(2,(j_1,\dotsc(j_n,(i_{n+1},(1,\alpha)))\cdots)))))$ where
$j_k\in J_{i_k}$ for all $k\le n$ and, in the second case, 
$\alpha\in A_{i_{n+1}}$.
As in the above example, we introduce the following conventions for the sole
purpose of making this description of the elements $\setLazy \fam A$ more
reasonable.
We call \definitive{addresses}
all finite sequences $i_1j_1i_2j_2\cdots i_nj_n$ such that 
$j_{k}\in J_{i_k}$ for all $k\le n$ and 
write $\addresses$ for the set of all addresses.
We call \definitive{value} any element $\nu$ of $\Loplus\fam A$.
We say $\nu$ is of type $i\in I$ if $\nu=i$ or $\nu=(i,\alpha)$ 
with $\alpha\in A_i$: we then write $\type\nu=i$.
A \definitive{path} is the data $\pi\nu$ of an
address and a value.
We may factor prefixes out of multisets of paths or addresses: for instance, 
if $\ms\tau=\fullMulSet\tau n$ is a multiset of paths, 
we may write $ij\ms\tau=\mulSet{ij\tau_1,\dotsc,ij\tau_n}$.
Then $\setLazy \fam A$ is in bijection with the set of
all paths: from now on we consider 
$\setLazy\fam A$, and thus $\fsLazy\fam\cA$, up to this bijection.

Notice that the relation
$\natInst{\val_i}{\fam A}=\set{(\pi i\alpha,\alpha)\st
\pi\in\addresses \land \alpha\in A_i}$ is a quasi-functional lax
natural transformation from $\setLazy$ to the projection functor $\Pi_i$, for all $i\in I$.
Moreover, the relation $\natInst{\len}{\fam A}=\set{(i_1j_1\cdots i_nj_n\nu,n)\st
n\in\nat\land i_1j_1\cdots i_nj_n\nu\in \setLazy\fam A}$
is a quasi-functional
lax natural transformation from $\setLazy$ to $E_\nat$ where $\nat$ is the functor of shapes 
defined by natural numbers (see Section~\ref{subsec:transport:functor}). 
\begin{example}
In the setting of Example~\ref{example:trees}, 
we obtain: $\relAppl{\val_\nd}t=b\union b'$, 
$\relAppl{\val_\lf}t=a\union a'\union a''$ and 
$\relAppl{\len}t=\set{0,1,2}$.
\end{example}

We have thus given a precise account of the web $\web{\fsLazy\fam\cA}=\setLazy\fam{\web\cA}$.
Moreover, since $\fsLazy\fam\cA$ is defined as the least fixpoint of the algebraic
functor $\cT$ given in Definition \ref{def:trees}, and this fixpoint is 
an exact supremum, we obtain: $\fin{\fsLazy\fam\cA}=\Union_{n\in\nat}\fin{\cT^n\fsZero}$.
We can thus characterize this finiteness structure as follows:
\begin{lemma}
  \label{lemma:finitary:trees}
  Let $t$ be a set of paths. Then $t\in\fin{\fsLazy\fam\cA}$ iff
  $\relAppl{\len}t\in\PFin{\nat}$ and $\relAppl{\val_i}t\in\fin{A_i}$ for all $i\in I$.
  Moreover, if $\relAppl\len t$ is finite and, for all $i\in I$, 
  $\relAppl{\val_i}$ is finite, then $t$ is itself finite. 
\end{lemma}
We could thus have presented $\fsLazy$ equivalently as the functor
of paths, with web functor $\setLazy$, finiteness structure being transported by
$\fam{\val}$ and $\len$ (see Lemma~\ref{lemma:transport:situation}). It is
important to notice that only the above careful explicitation of the structure
of $\fsLazy$ allowed us to deduce this functoriality. At the time of writing, it
is unclear to us whether this technique generalizes to a larger class of
transport functors.
\begin{example}
  Lemma \ref{lemma:finitary:trees} implies that the interpretation 
  of the binary tree of Example~\ref{example:trees} is finitary in $\fsBT$.
\end{example}

\subsection{The finitary datatype of trees}
\label{section:datatypes:trees}

We are now ready to describe the interpretation of the datatype of trees:
\begin{itemize}
  \item $\fsLazy$ provides a \emph{lazy} implementation of the datatype of trees
    where nodes of type $i$ bear labels in $A_i$ and have arity $J_i$;
  \item this implementation is \emph{finitary} in the sense that constructors,
    destructors and iterators on trees are finitary relations.
\end{itemize}

The \definitive{lazy tree constructor} $\node_i\subseteq A_i\CCarrow \pars{\setLazy\fam
A}^{\with J_i}\CCarrow \setLazy\fam A$ is given by:
\[\node_i=\set{(\emptyMulSet,\emptyMulSet,i)}
\union\set{(\mulSet\alpha,\emptyMulSet,i\alpha)\st\alpha\in\web{\cA_i}}
\union\set{(\emptyMulSet,\mulSet{(j,\tau)},ij\tau)\st j\in J_i\land\tau\in
\setLazy\fam A}\] which is actually an instance of \[\inj_i\subseteq
\pars{A_i\with \pars{\setLazy\fam A}^{\with J_i}}\CCarrow \setLazy\fam A\] up to our
notations of addresses and the cartesian adjunction in $\RelCC$.
Since $\inj_i$ is finitary, we moreover obtain:
\[\node_i\in\fin{\cA_i\CCarrow\pars{\fsLazy\fam\cA}^{\with J_i}\CCarrow 
\fsLazy\fam\cA}\] for all family $\fam\cA$ of finiteness spaces.

\begin{example}
  Recall the finiteness space $\fsBT$ of Example~\ref{example:trees}.
  Notice that $\node_\lf\subseteq \cA\to\emptyset\to\fsBT$ 
 and $\node_\nd\subseteq\cB\to(\fsBT\with\fsBT)\to\fsBT$:
 up to standard isomorphisms, we consider the binary tree constructors
 $\leaf=\labs[\cA]x{\pars{\node_\lf x\tuple{}}}\subseteq \cA\to\fsBT$ 
 and $\node=\labs[\cB]y{\labs[\fsBT]t{\labs[\fsBT]u{\pars{\node_\nd
 y\tuple{t,u}}}}}\subseteq\cB\to\fsBT\to\fsBT\to\fsBT$. 
  Then the tree $t$ of Example~\ref{example:trees} is obtained as
	\[t=\node\, b \,(\leaf\, a)\,(\node\, b'\,(\leaf\, a')\,(\leaf\, a''))\]
  and we can check that the interpretation given there agrees with this
  identity.
\end{example}

Similarly, the \definitive{pattern matching operator}
is given by:
\begin{eqnarray*}
	\match&=&\bigg\{\bigg(
\mulSet{(\ms\alpha,\mulSet{(j_1,\tau_1),\dotsc,(j_n,\tau_n)},\beta)},
\mulSet{i}+i\ms\alpha+\sum_{k=1}^n\mulSet{ij_k\tau_k},
\beta\bigg)\st \\
&&\qquad\qquad\qquad i\in I\land\beta\in B\land\ms\alpha\in\oc{A_i}\land\forall k,\ j_k\in
J_i\land\tau_k\in\setLazy\fam A\bigg\}
\\&\subseteq&\With_{i\in
I}\pars{A_i\CCarrow\pars{\setLazy\fam
A}^{\with J_i}\CCarrow B}\CCarrow\setLazy\fam A\CCarrow B
\end{eqnarray*}
which is an instance of $\lcase\subseteq 
\With_{i\in I}\pars{\pars{A_i\with \pars{\setLazy\fam A}^{\with J_i}}\CCarrow B}\CCarrow\setLazy\fam A\CCarrow B$ up to our
notations of addresses and the cartesian adjunction in $\RelCC$.
As such, it is finitary: for all finiteness spaces 
$\fam\cA$ and $\cB$, 
$\match\in\fin{\fsLazy\fam \cA\CCarrow\With_{i\in I}
\pars{\cA_i\CCarrow\pars{\fsLazy\fam \cA}^{\with J_i}\CCarrow
\cB}\CCarrow  \cB}$.

As an application of Lemma~\ref{lemma:case:inj}, we moreover obtain
that pattern matching is correct:
\[\fam f:\With_{i\in I}\pars{A_i\CCarrow\pars{\setLazy\fam
A}^{\with J_i}\CCarrow B}, a:A_i, \fam t: \pars{\setLazy\fam
A}^{\with J_i}\jug \match{\fam f}\pars{\node_i a \fam t}=f_i a \fam
t:B\]
for all $i\in I$.
Similarly to that of sums, this encoding of trees is lazy in the sense that,
for all $\fam b\in\powerset{B}^I$ and $i\in I$,
$\match{\tuple{\labs x{\labs y{b_i}}}_{i\in I}}\pars{\node_i \emptyset
\tuple{\emptyset}_{i\in I}}=b_i$.

We can then construct the \definitive{iterator on trees}:
\[\begin{array}{rcl}
\iter&=&\fp\ \labs{F}{\labs{t}{\parLabs{\fam f}{\match\tuple{\labs a{\parLabs{\fam
t}{f_i\,a\tuple{F\,t_j \fam f}_{j\in
J_i}}}}_{i\in I}t}}}\\
&\subseteq&\setLazy\fam A\CCarrow \With_{i\in I} \pars{A_i\CCarrow
B^{\with J_i}\CCarrow B}\CCarrow B
\end{array}\]
which automatically satisfies 
\[
\fam f:\With_{i\in I} \pars{A_i\CCarrow B^{\with J_i}\CCarrow B},
a:A_i, \fam t:\pars{\setLazy\fam A}^{\with J_i}
\jug\iter \pars{\node_i a \fam t}\fam f=f_i a\tuple{\iter\,t_j\fam f}_{j\in J_i}.\]
The following lemma makes the structure of $\iter$ explicit: 
\begin{lemma}
  \label{lemma:iter}
  Let $\iter_0=\emptyset$ and, for all $n\in\nat$, let 
  \[\begin{array}{rcr}
    \iter_{n+1}&=&\set{\pars{
    \mulSet{i}+i\ms\alpha+\sum_{k=1}^p ij_k\ms\tau_k,
    \mulSet{\pars{i,\ms\alpha,\sum_{k=1}^p\mulSet{(j_k,\beta_k)},\beta}}
    +\sum_{k=1}^p\ms\phi_k, 
    \beta}
    \st\right.\quad\\ &&\quad\left.i\in I\land p\in\nat\land \forall k,\
    j_k\in J_k\land (\ms\phi_k,\ms\tau_k,\beta_k)\in
    \iter_n}.
  \end{array}\]
  Then $\family{\iter_n}{n\in\nat}$ is increasing for inclusion and
  $\iter=\Union_{n\in\nat} \iter_n$.
  Moreover, if
  $\iota=(\ms\phi,\ms\tau,\beta)\in\iter$, then
  $\iota\in\iter_{\max(\relAppl{\len}{\support{\ms\tau}})+1}$.
\end{lemma}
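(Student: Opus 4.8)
The plan is to recognise the explicitly defined sequence $\family{\iter_n}{n\in\nat}$ as the sequence of finite approximants of the fixpoint defining $\iter$. Write $\iter=\appl{\fp}{\Phi}$, where $\Phi$ is the closed term of type $\cX\CCarrow\cX$ appearing inside the definition of $\iter$, with $\cX=\setLazy\fam A\CCarrow\With_{i\in I}\pars{A_i\CCarrow B^{\with J_i}\CCarrow B}\CCarrow B$, and recall that $\fp=\Union_{n\in\nat}\fp_n$ with $\fp_0=\emptyset$ and $\fp_{n+1}=\parLabs f{\appl f{\pars{\appl{\fp_n} f}}}$. A point of $\appl{\pars{\Union_{n}\fp_n}}{\Phi}$ is produced from a single point of $\Union_n\fp_n$, hence from some $\fp_n$, so $\iter=\Union_{n\in\nat}\appl{\fp_n}{\Phi}$; and by Lemma~\ref{lemma:invariance} applied to the conversion $\appl{\fp_{n+1}}{\Phi}\converts_0\appl{\Phi}{\pars{\appl{\fp_n}{\Phi}}}$, together with $\appl{\fp_0}{\Phi}=\emptyset$, the approximants satisfy $\appl{\fp_{n+1}}{\Phi}=\appl{\Phi}{\pars{\appl{\fp_n}{\Phi}}}$. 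It therefore suffices to prove, by induction on $n$, that $\appl{\fp_n}{\Phi}$ coincides with the set $\iter_n$ of the statement: monotonicity of $\Phi$ for inclusion (it is assembled only from abstraction, application, tupling and the fixed relation $\match$) together with $\emptyset\subseteq\appl{\Phi}{\emptyset}$ then gives that $\family{\iter_n}{n\in\nat}$ is increasing, and $\iter=\Union_{n\in\nat}\iter_n$ is immediate.

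The computational core is the inductive step $\appl{\Phi}{\iter_n}=\iter_{n+1}$. For a fixed relation $F$, I would compute $\appl{\Phi}{F}$ by unfolding, using the explicit description of $\match$ as an instance of $\lcase$ (Lemma~\ref{lemma:lsum}), the formula for tupling into $\With_{i\in I}$, the rules of Figure~\ref{fig:sem} for abstraction and application, and the recursive occurrences of $F$. Reading off the semantic annotations, a single application of $\match$ consumes exactly one layer of the tree argument --- the root token $i*$, the root-label multiset $\ms\alpha$, and, for each recursive call indexed by some $j_k\in J_i$, a subtree $\tau_k$ whose paths acquire the prefix $ij_k$ --- while each recursive call of $F$ contributes a witness $\ms\phi_k$ on the case-function side and an output $\beta_k$, the outputs being gathered into the multiset $\sum_{k}\mulSet{(j_k,\beta_k)}$ feeding the $B^{\with J_i}$ argument of $f_i$. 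Specialising $F=\iter_n$, the triples $(\ms\phi_k,\ms\tau_k,\beta_k)$ range precisely over $\iter_n$, and the resulting set of annotations is exactly the displayed $\iter_{n+1}$. I expect this step --- keeping the curry/uncurry conventions, the nested multiset sums and the address prefixing all consistent --- to be the main obstacle; everything else is formal.

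Finally, for the last assertion I would argue by induction on $N=\natMax{\relAppl{\len}{\support{\ms\tau}}}$, using the description of the $\iter_n$ just obtained. Note first that every element of $\iter$ has a nonempty tree-side multiset $\ms\tau$ (the token $i*$ is always present), so $\relAppl{\len}{\support{\ms\tau}}$ is nonempty and $N$ is well defined. If $N=0$, let $m$ be least with $\iota\in\iter_m$; then $m\ge 1$, and in the decomposition of $\iota$ given by the formula for $\iter_m$ any nonempty $\ms\tau_k$ would put a path of length $\ge 1$ into $\ms\tau$, so (since each $(\ms\phi_k,\ms\tau_k,\beta_k)\in\iter$ has $\ms\tau_k\neq\emptyMulSet$, and $\iter_0=\emptyset$) we must have $p=0$, whence $\iota$ has the generic shape of an element of $\iter_1=\iter_{N+1}$. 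If $N\ge 1$, again take $m$ least with $\iota\in\iter_m$ and write $\iota$ as in the formula for $\iter_m$, with $p\ge 1$ and $(\ms\phi_k,\ms\tau_k,\beta_k)\in\iter_{m-1}\subseteq\iter$ for $1\le k\le p$; since $\ms\tau=\mulSet{i*}+i\ms\alpha+\sum_{k=1}^{p}ij_k\ms\tau_k$ and the length of a prefixed path $ij_k\pi$ is one more than that of $\pi$, each $\ms\tau_k$ satisfies $\natMax{\relAppl{\len}{\support{\ms\tau_k}}}\le N-1$, so by the induction hypothesis $(\ms\phi_k,\ms\tau_k,\beta_k)\in\iter_{N}$. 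Feeding this same decomposition into the formula for $\iter_{N+1}$ yields $\iota\in\iter_{N+1}$, as claimed.
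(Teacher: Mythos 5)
Your proposal is correct and follows essentially the same route as the paper: both identify $\iter_n$ with the $n$-th approximant $\appl{\Phi^n}{\emptyset}$ of the fixpoint defining $\iter$ and verify the identification by induction on $n$, leaving the semantic unfolding of $\match$ at the same level of detail. Your explicit induction on $\max\pars{\relAppl{\len}{\support{\ms\tau}}}$ for the final assertion merely fills in what the paper dismisses as ``straightforwardly deduced''.
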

\begin{proof}
  The equation $\iter=\Union_{n\in\nat} \iter_n$ is just an unfolding of the
  definitions: if we write $f=\labs{F}{\labs{t}{\parLabs{\fam
  f}{\match\tuple{\labs a{\parLabs{\fam t}{f_i\,a\tuple{F\,t_j\fam f}_{j\in
  J_i}}}}_{i\in I}\,t}}}$ then $\iter=\fp f=\Union_{n\in\nat}f^n\emptyset$ 
  and we just have to check that $f^n\emptyset=\iter_n$ by induction on $n$.
  The additional result is straightforwardly deduced from this explicitation.
\end{proof}

We now relate precisely the indices and values in the input paths of 
$\iter$ with those used in the associated instance of iterated functions.
First, if $\tau=i_1j_1\cdots i_nj_n\nu\in\setLazy\fam A$, we write 
$\indices(\tau)=\set{i_1,j_1,\dotsc,i_n,j_n,\type{\nu}}$ and 
$\values(\tau)=\Union_{i\in I}\relAppl{\val_i}{\tau}$ which are both finite.
Moreover, if
$\phi=(i,\ms\alpha,\sum_{k=1}^n\mulSet{(j_k,\beta_k)},\beta)\in
\With_{i\in I}\pars{A_i\CCarrow B^{\with J_i}\CCarrow B}$,
we set $\indices(\phi)=\set i\union\set{j_k\st 1\le k\le n}$
and $\values(\phi)=\support{\ms\alpha}$.
We extend these to multisets by taking the union of images as in
$\indices(\ms\tau)=\Union_{\tau\in\support{\ms\tau}}\indices(\tau)$.
Recall that if $\ms\alpha\in\oc A$, 
$\card\ms\alpha$ denotes the multiset cardinality of $\ms\alpha$.
When $\ms\phi=\sum_{k=1}^n\mulSet{(i_k,\ms\alpha_k,\ms\tau_k,\beta_k)}
\in\oc{\With_{i\in I}\pars{A_i\CCarrow B^{\with J_i}\CCarrow B}}$, 
we write $\subsize(\ms\phi)=\sum_{k=1}^n\card\ms\alpha_k$.
\begin{lemma}
  For all $\iota=(\ms\tau,\ms\phi,\beta)\in\iter$,
  we have:
  \begin{itemize}
    \item $\indices(\ms\tau)=\indices(\ms\phi)$;
    \item $\values(\ms\tau)=\values(\ms\phi)$;
    \item $\card\ms\tau=\card\ms\phi+\subsize\ms\phi$.
  \end{itemize}
\end{lemma}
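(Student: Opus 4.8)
The plan is to prove all four items simultaneously by induction on $n$, for every $\iota=(\ms\tau,\ms\phi,\beta)\in\iter_n$; since $\iter=\Union_{n\in\nat}\iter_n$ by the preceding lemma, this yields the statement for all $\iota\in\iter$. The case $n=0$ is vacuous because $\iter_0=\emptyset$ (the first item was in fact already recorded in that lemma, but it is convenient to re-derive it here). For the inductive step, let $\iota\in\iter_{n+1}$; unwinding the explicit description of $\iter_{n+1}$, there are a sort $i\in I$, a label multiset $\ms\alpha\in\oc{A_i}$, an integer $p$, edges $j_1,\dots,j_p\in J_i$ and triples $(\ms\phi_k,\ms\tau_k,\beta_k)\in\iter_n$ ($1\le k\le p$) with $\ms\tau=\mulSet{i*}+i\ms\alpha+\sum_{k=1}^p ij_k\ms\tau_k$ and $\ms\phi=\mulSet{(i,\ms\alpha,\sum_{k=1}^p\mulSet{(j_k,\beta_k)},\beta)}+\sum_{k=1}^p\ms\phi_k$ — the case $p=0$, a leaf, being the base of the recursion, with $\iota\in\iter_1$ directly. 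Observe also that when $n\ge1$ every element of $\iter_n$ has nonempty $\ms\tau$-component (it contains a path of the form $i'*$), so whenever $p\ge1$ above we have $n\ge1$ and each $\ms\tau_k$ is nonempty.

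The heart of the matter is that $\indices$, $\values$, $\card$ and $\subsize$ are all additive over multiset union and behave transparently under prefixing: $\indices(i*)=\indices(i\alpha)=\set i$, $\indices(ij_k\tau)=\set{i,j_k}\union\indices(\tau)$, $\values(i*)=\emptyset$, $\values(i\alpha)=\set\alpha$, $\values(ij_k\tau)=\values(\tau)$. Feeding these into the decomposition of $\ms\tau$ and $\ms\phi$ yields, for $p\ge1$,
\[\indices(\ms\tau)=\set i\union\set{j_1,\dots,j_p}\union\Union_{k=1}^p\indices(\ms\tau_k),\]
the very same expression computing $\indices(\ms\phi)$ but with each $\ms\phi_k$ for $\ms\tau_k$, and analogously $\values(\ms\tau)=\support{\ms\alpha}\union\Union_{k=1}^p\values(\ms\tau_k)$, $\values(\ms\phi)=\support{\ms\alpha}\union\Union_{k=1}^p\values(\ms\phi_k)$, $\card\ms\tau=1+\card\ms\alpha+\sum_{k=1}^p\card\ms\tau_k$ and $\card\ms\phi+\subsize\ms\phi=1+\card\ms\alpha+\sum_{k=1}^p(\card\ms\phi_k+\subsize\ms\phi_k)$ (the singleton in $\ms\phi$ contributing $1$ to $\card$ and $\card\ms\alpha$ to $\subsize$). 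The induction hypothesis applied to each $(\ms\phi_k,\ms\tau_k,\beta_k)\in\iter_n$ gives $\indices(\ms\tau_k)=\indices(\ms\phi_k)$, $\values(\ms\tau_k)=\values(\ms\phi_k)$ and $\card\ms\tau_k=\card\ms\phi_k+\subsize\ms\phi_k$, and substituting makes the two sides of the last three items coincide. For the first item, the induction hypothesis also gives $(\ms\phi_k,\ms\tau_k,\beta_k)\in\iter_{m_k+1}$ with $m_k=\max(\relAppl{\len}{\support{\ms\tau_k}})$; since the $\ms\tau_k$ are nonempty for $p\ge1$, one has $\max(\relAppl{\len}{\support{\ms\tau}})=1+\max_k m_k$, and as the $\iter_n$ are increasing all these triples lie in $\iter_{(\max_k m_k)+1}$; one further application of the recursive step puts $\iota\in\iter_{(\max_k m_k)+2}=\iter_{\max(\relAppl{\len}{\support{\ms\tau}})+1}$.

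There is no genuine obstacle: the argument is a mechanical induction in which each of the four quantities on the left is matched term by term with its counterpart on the right. The only care needed is bookkeeping — recognising that the leaf case $p=0$ is subsumed by the inductive step with an empty induction hypothesis, and checking that $\relAppl{\len}{\support{\ms\tau}}$ is a nonempty set of natural numbers so that its maximum, and the identity $\max(\relAppl{\len}{\support{\ms\tau}})=1+\max_k m_k$ for $p\ge1$, are meaningful.
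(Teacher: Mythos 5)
Your proof is correct, and it is exactly the induction on $n$ that the paper leaves implicit (the paper states this lemma without proof, having only remarked in the preceding lemma that the first item is ``straightforwardly deduced'' from the explicit description of the $\iter_n$). The one genuinely delicate point --- that each $\ms\tau_k$ is nonempty because every element of $\iter_n$ with $n\ge 1$ contains a path $i'*$, which is what guarantees both that $j_k$ actually occurs in $\indices(\ms\tau)$ and that $\max(\relAppl{\len}{\support{\ms\tau}})=1+\max_k m_k$ --- is precisely the point you identify and handle, so nothing is missing.
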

\begin{proof}
  The result is easily established for all $\iota\in\iter_n$, 
  by induction on $n$.
\end{proof}

\begin{lemma}
  Iteration is finitary: 
  $\iter\in\fin{\fsLazy\fam\cA\CCarrow
  \With_{i\in I}\pars{\cA_i\CCarrow\cB^{\with J_i}\CCarrow \cB}
  \CCarrow \cB}$.
\end{lemma}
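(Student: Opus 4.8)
The plan is to check directly that $\iter$ meets the defining conditions of a finitary multirelation, reducing everything to statements that can be read off the explicit description of $\iter$ given above. Write $\cD=\With_{i\in I}\pars{\cA_i\CCarrow\cB^{\with J_i}\CCarrow\cB}$, so the claim is $\iter\in\fin{\fsLazy\fam\cA\CCarrow\cD\CCarrow\cB}$. Using the standard isomorphisms of the model, $\oc\pars{\cP\with\cQ}\cong\oc\cP\tensor\oc\cQ$ and currying $\cP\limpl\pars{\cQ\limpl\cR}\cong\pars{\cP\tensor\cQ}\limpl\cR$, which act as bijections on webs and leave relations essentially unchanged while preserving finiteness structures, this is equivalent to $\iter\in\fin{\pars{\fsLazy\fam\cA\with\cD}\CCarrow\cB}$; here $\iter$ is read as a set of triples $\pars{\ms\tau,\ms\phi,\beta}$ with $\ms\tau\in\oc{\setLazy\fam A}$, $\ms\phi\in\oc{\web\cD}$, $\beta\in B$, and a finitary subset of $\fsLazy\fam\cA\with\cD$ corresponds to a pair $(t,g)$ with $t\in\fin{\fsLazy\fam\cA}$, $g\in\fin\cD$. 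By the characterization of finitary multirelations stated just after Lemma~\ref{lemma:finiteness}, it then suffices to prove: \textbf{(a)} for all such $t,g$, the set $\relAppl{\iter}{\prom t\tensor\prom g}=\set{\beta\st\exists\ms\tau,\ms\phi,\ \pars{\ms\tau,\ms\phi,\beta}\in\iter,\ \support{\ms\tau}\subseteq t,\ \support{\ms\phi}\subseteq g}$ is finitary in $\cB$; and \textbf{(b)} for all $\beta\in B$ and all such $t,g$, the set $\set{\pars{\ms\tau,\ms\phi}\st\pars{\ms\tau,\ms\phi,\beta}\in\iter,\ \support{\ms\tau}\subseteq t,\ \support{\ms\phi}\subseteq g}$ is finite.

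For (a) I would put $N=\natMax{\relAppl{\len}t}$ (with $N=0$ when $t=\emptyset$). If $\pars{\ms\tau,\ms\phi,\beta}\in\iter$ with $\support{\ms\tau}\subseteq t$ then $\natMax{\relAppl{\len}{\support{\ms\tau}}}\le N$, so by the last clause of the explicit description of $\iter$ we get $\pars{\ms\tau,\ms\phi,\beta}\in\iter_{N+1}$; hence $\relAppl{\iter}{\prom t\tensor\prom g}=\relAppl{\iter_{N+1}}{\prom t\tensor\prom g}$. Now $\iter=\fp f$ where $f$ is the multirelation denoting the functional used in the definition of $\iter$; being built from variables, tupling, application and the operator $\match$, it is finitary by Lemma~\ref{lemma:finiteness} and Lemma~\ref{lemma:lsum}. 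Since $\iter_{N+1}=f^{N+1}\emptyset$, finitary multirelations compose and $\emptyset$ is finitary, $\iter_{N+1}$ is finitary, and (a) follows at once.

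For (b), which is the heart of the argument, I would prove by induction on $N$ the following strengthening: \emph{for every $t\in\fin{\fsLazy\fam\cA}$ with $\relAppl{\len}t\subseteq\set{0,\dots,N}$, every $g\in\fin\cD$ and every $\beta\in B$, there are a finite $t_0\subseteq t$, a finite $g_0\subseteq g$ and an $M\in\nat$ such that every $\pars{\ms\tau,\ms\phi,\beta}\in\iter$ with $\support{\ms\tau}\subseteq t$ and $\support{\ms\phi}\subseteq g$ satisfies $\support{\ms\tau}\subseteq t_0$, $\support{\ms\phi}\subseteq g_0$ and $\card{\ms\tau}\le M$} — from which (b) is immediate, there being only finitely many multisets of bounded cardinality supported in a fixed finite set. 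Given such a triple, the explicit description puts it in $\iter_{N+1}$, with $\ms\tau=\mulSet{i*}+i\ms\alpha+\sum_{k=1}^p ij_k\ms\tau_k$ and $\ms\phi=\mulSet{\pars{i,\ms\alpha,\sum_k\mulSet{(j_k,\beta_k)},\beta}}+\sum_{k=1}^p\ms\phi_k$, where $i\in I$, $j_k\in J_i$ and $\pars{\ms\phi_k,\ms\tau_k,\beta_k}\in\iter_N\subseteq\iter$. Since $I$ and the $J_i$ are finite, $i$ and the $j_k$ range over finite sets; from $\support{\ms\tau}\subseteq t$ we get $\support{\ms\alpha}\subseteq\relAppl{\val_i}t\in\fin{\cA_i}$, and $\support{\ms\tau_k}\subseteq t_{i,j_k}$ where $t_{i,j}:=\set{\sigma\st ij\sigma\in t}$ is finitary with length bounded by $N-1$. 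The crucial point is that part (a), applied to $t_{i,j}$ and $g$, confines all the $\beta_k$ with $j_k=j$ to the finitary set $\relAppl{\iter}{\prom{t_{i,j}}\tensor\prom g}$, so the support of $\sum_k\mulSet{(j_k,\beta_k)}$ lies in a finitary subset of $\cB^{\with J_i}$; then, since the top function $\pars{i,\ms\alpha,\sum_k\mulSet{(j_k,\beta_k)},\beta}$ belongs to $\relAppl{\rest_i}g\in\fin{\cA_i\CCarrow\cB^{\with J_i}\CCarrow\cB}$, finitariness of this multirelation together with the confinement of $\ms\alpha$ and of $\sum_k\mulSet{(j_k,\beta_k)}$ bounds the pair $\pars{\ms\alpha,\sum_k\mulSet{(j_k,\beta_k)}}$ to finitely many values. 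In particular $p$, the $j_k$ and the $\beta_k$ are determined up to finitely many choices, so the induction hypothesis (with $t_{i,j_k}$, $g$ and the now fixed $\beta_k$) bounds each $\pars{\ms\tau_k,\ms\phi_k}$; collecting these finitely many bounds yields $t_0$, $g_0$ and $M$. The base case $N=0$ is the same computation, with $p=0$ forced since $\iter_0=\emptyset$.

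The step I expect to be the main obstacle is (b), and inside it precisely the fact that the recursive sub-calls produce outputs $\beta_k$ that a priori range over all of $\cB$: the device for handling this is to feed the already-established direct-image statement (a), applied to the finitary sets of subtrees $t_{i,j}$, back into the induction so as to pin those outputs into a finitary set, after which the finitariness of the step functions $\relAppl{\rest_i}g$ can be brought to bear. This is the finiteness-space transcription of the argument used by the second author for system $T$ in \cite{vaux:relT}.
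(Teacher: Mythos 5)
Your reduction to conditions (a) and (b) via uncurrying is legitimate, and your treatment of (a) coincides with the paper's first step. But your proof of (b) takes a genuinely different route from the paper's, and it contains a real gap: the step ``Since $I$ and the $J_i$ are finite, $i$ and the $j_k$ range over finite sets.'' Nothing in the statement makes $I$ or the arities $J_i$ finite --- the proposition is asserted for an arbitrary set of sorts and arbitrary arity sets, so infinitely branching trees are a legitimate instance. As written, your induction proves the theorem only under an extra hypothesis. Nor can you recover the needed finiteness from $t\in\fin{\fsLazy\fam\cA}$ via the $\val$/$\len$ characterization alone, which says nothing about how many sorts or positions occur in $t$. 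A repair is possible but requires reorganizing the argument: the root sorts occurring in a finitary $t$ are finitely many because $\fin{\Loplus\fam\cC}$ only admits finitely many summands, and the positions $j_k$ actually occurring in a given triple must be extracted as a \emph{consequence} of your confinement of the top pattern $(i,\ms\alpha,\sum_k\mulSet{(j_k,\beta_k)},\beta)$ to a finite set (via the finitariness of $\relAppl{\rest_i}g$ against the finitary sets $\relAppl{\val_i}t$ and $\sum_j\relAppl{\iter}{\prom{t_{i,j}}\tensor\prom g}$), not as a prerequisite for it. A second, easily fixed, omission: your strengthened induction hypothesis bounds $\card{\ms\tau}$ but not $\card{\ms\phi}$, whereas ``(b) is immediate'' needs both; the missing bound $\card{\ms\phi}\le\card{\ms\tau}$ follows from $\card{\ms\tau}=\card{\ms\phi}+\subsize{\ms\phi}$.

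For comparison, the paper avoids all of this by staying with the curried formulation and fixing the \emph{entire} codomain point $(\ms\phi,\beta)$, where $\ms\phi$ is a single finite multiset. The preceding lemma then gives $\indices(\ms\tau)=\indices(\ms\phi)$, $\values(\ms\tau)=\values(\ms\phi)$ and $\card{\ms\tau}=\card{\ms\phi}+\subsize{\ms\phi}$, so every candidate $\ms\tau$ has its sorts, positions, values and cardinality pinned down by the fixed $\ms\phi$ in one stroke; intersecting with $\prom t$ (bounded length) then yields finiteness with no induction and no hypothesis whatsoever on the sizes of $I$ or the $J_i$. Your device of feeding (a) back into a depth induction to confine the intermediate outputs $\beta_k$ is a workable alternative once the index issue is handled, but it is substantially heavier than the paper's argument precisely because it fixes only $\beta$ rather than the whole of $(\ms\phi,\beta)$.
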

\begin{proof}
  If $t\in\fin{\fsLazy\fam\cA}$,
  then $\relAppl{\len}{t}$ is finite:
  we write $n=\max(\relAppl{\len}{t})$.
  By Lemma \ref{lemma:iter},
  $\relAppl{\iter}{\prom t}=\relAppl{\iter_{n+1}}{\prom t}\in\fin{
  \With_{i\in I}\pars{\cA_i\CCarrow\cB^{\with J_i}\CCarrow \cB}
  \CCarrow \cB}$ because $\iter_{n+1}$ is finitary.
  Now fix $(\ms\phi,\beta)\in\web{\With_{i\in I}\pars{\cA_i\CCarrow\cB^{\with
  J_i}\CCarrow \cB} \CCarrow \cB}$ and 
  $\ms t'=\relAppl{\relRev{\iter}}{(\ms\phi,\beta)}$:
  we prove that $\ms t'\finPolar\prom t$.
  By the previous lemma, for all $\ms\tau\in \ms t'$,
  $\values(\ms\tau)=\values(\ms\phi)$,
  $\indices(\ms\tau)=\indices(\ms\phi)$,
  $\card\ms\tau=\card\ms\phi+\subsize\ms\phi$.
  Paths in $\support{\ms t'}\inter t$ have addresses of length at most $n$ with
  indices taken in a fixed finite set; moreover they hold values taken 
  in a fixed finite set. We deduce $\support{t'}\inter t$ is finite.
  Moreover, multisets in $\ms{t'}\inter \prom{t}$ are of fixed size:
  hence $\ms{t'}\inter \prom{t}$  is finite.
\end{proof}

Summing up the results in section \ref{section:datatypes:fixpoints} and
the current section, we obtain:
\begin{theorem}
  \label{theo:iteration}
  For all choice of sets of indices $I$ and $\fam J$, $\fsLazy\fam\cA$
  is the finiteness space of paths whose finiteness structure is transported by
  $\fam\val$ and $\len$. Moreover, there are multirelations $\node_i$ and
  $\iter$ such that:
  \begin{itemize}
    \item $\node_i\in\fin{\cA_i\CCarrow\pars{\fsLazy\fam\cA}^{\with
      J_i}\CCarrow\fsLazy\fam\cA}$;
    \item $\iter\in\fin{\fsLazy\fam\cA\CCarrow
      \With_{i\in I}\pars{\cA_i\CCarrow\cB^{\with J_i}\CCarrow \cB}
      \CCarrow \cB}$;
    \item $\labs a{\labs {\fam t}{\parLabs {\fam f}{\iter \pars{\node_i\,a
      \fam t}\fam f}}}=\labs a{\labs {\fam t}{\parLabs {\fam f}{f_i\,
      a\tuple{\iter\,t_j\fam f}_{j\in J_i}}}}$.
  \end{itemize}
  Hence $\fsLazy\fam\cA$ is the datatype of trees whose nodes of sort $i\in I$
  are labelled with values in $\cA_i$ and of arity $J_i$.
\end{theorem}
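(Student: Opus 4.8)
The plan is to assemble the statement from results already established in Sections~\ref{section:datatypes:fixpoints} and~\ref{section:datatypes:trees}, since each clause has essentially been proved along the way; the theorem is really a packaging of those facts.

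First I would recall that $\cT:\cX\mapsto\Loplus_{i\in I}\cA_i\with\cX^{\with J_i}$ is a composite of transport functors that are exactly continuous and $\finExt$-increasing (Lemma~\ref{lemma:composition:continuity} together with the continuity results of Section~\ref{section:continuity:transport}), so by Lemma~\ref{lemma:continuity:fixpoint} its least fixpoint $\fsLazy\fam\cA=\fix\cT$ exists as a finiteness space. I would then use the bijection between $\web{\fsLazy\fam\cA}=\setLazy\fam A$ and the set of paths $\set{\pi\alpha^*\st\pi\in\addresses\land\alpha^*\in A_{\lastType\pi}\union\set *}$, together with the lemma characterising its finitary subsets, namely $t\in\fin{\fsLazy\fam\cA}$ iff $\relAppl{\val_i}t\in\fin{A_i}$ for every $i\in I$ and $\relAppl{\len}t\in\PFin\nat$. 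This gives the first sentence.

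For the remaining clauses I would take $\node_i$ as defined explicitly in Section~\ref{section:datatypes:trees} and observe that, up to the address notation and the cartesian adjunction in $\RelCC$, it is an instance of $\inj_i\subseteq\pars{A_i\with\pars{\setLazy\fam A}^{\with J_i}}\CCarrow\setLazy\fam A$; finitariness then follows from Lemma~\ref{lemma:lsum}. Likewise $\match$ is an instance of $\lcase$, hence finitary, and Lemma~\ref{lemma:lsum} yields $\sem{\match\fam f\pars{\node_i a\fam t}}=\sem{f_i a\fam t}$. Taking $\iter=\fp\,g$ for the evident one-step functional $g$ which pattern matches on the root and recurses on the subtrees, the fixpoint equation for $\fp$ combined with the $\match$ equation gives the recursion $\sem{\iter\pars{\node_i a\fam t}\fam f}=\sem{f_i a\tuple{\iter\fam f t_j}_{j\in J_i}}$; and finitariness of $\iter$ is precisely the preceding ``Iteration is finitary'' lemma. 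The concluding sentence is then just the reading of these data as the datatype of trees whose nodes of sort $i$ are labelled in $\cA_i$ and have arity $J_i$.

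Since every ingredient is already in place, there is no genuine obstacle left at this point. The hard work — already carried out — is the finitariness of $\iter$: one must bound, uniformly over $\ms\tau\in\relAppl{\relRev\iter}{(\ms\phi,\beta)}$, both the lengths of the addresses occurring in $\ms\tau$ and the set of values they carry, using the accounting lemma stating that iteration invents no new indices or values ($\indices(\ms\tau)=\indices(\ms\phi)$, $\values(\ms\tau)=\values(\ms\phi)$) and rigidly fixes the multiset sizes ($\card\ms\tau=\card\ms\phi+\subsize\ms\phi$), together with the stage bound $\iota\in\iter_{\max(\relAppl{\len}{\support{\ms\tau}})+1}$.
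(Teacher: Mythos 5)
Your proposal is correct and follows essentially the same route as the paper, which gives no separate proof for this theorem precisely because it is explicitly introduced as a summary of the results of sections \ref{section:datatypes:fixpoints} and \ref{section:datatypes:trees}; you assemble exactly the same ingredients (the continuity and fixpoint lemmas, the path characterization, the identification of $\node_i$ and $\match$ as instances of $\inj_i$ and $\lcase$, and the finitariness of $\iter$) in the same way.
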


As an example of application of this theorem, consider the functor 
$\cT:\cX\mapsto\fsEmpty\loplus \cX$, that is obtained for
$I=\set{\lzero,\lsucc}$, $J_\lzero=\emptyset$, $J_\lsucc$ any singleton set,
and $\cA_\lzero=\cA_\lsucc=\fsEmpty$. Then
$\fix\cT\cong\fsLazyNat$ where $\web\fsLazyNat=\nat\union\mt\nat$,
$\fin{\fsLazyNat}=\PFin{\web{\fsLazyNat}}$ and 
$\mt\nat=\set{\mt n\st n\in\nat}$ is just a disjoint copy of $\nat$:
$n\in\nat$ (resp. $\mt n\in\mt\nat$) corresponds with the only path $\tau=\pi\nu$ such that 
$\relAppl{\len}\tau=\set n$ and $\type\nu=\lzero$ (resp. $\type\nu=\lsucc$).
The finiteness space $\fsLazyNat$ is intuitively that of \definitive{lazy
natural numbers}: $n$ stands for ``exactly $n$'' whereas $\mt n$ stands for
``strictly more that $n$''. From $\inj_0$ and $\inj_1$, we derive 
$\zero=\set 0\in\fin{\fsLazyNat}$ and $\succ=\set{(\emptyMulSet,\mt 0)}\union 
\set{(\mulSet\nu,\nu\plusOne)\st\nu\in\web{\fsLazyNat}}\in\fin{\fsLazyNat\CCarrow\fsLazyNat}$
where $n\plusOne=n+1$ and ${\mt n}\plusOne=\mt{(n+1)}$. Up to some standard
isomorphisms, we derive a variant $\natiter$ of $\iter$ such that:
\begin{itemize}
  \item $\natiter\in\fin{\fsLazyNat\CCarrow(\cA\CCarrow
    \cA)\CCarrow\cA\CCarrow\cA}$; 
  \item $\natiter\,\zero=\labs f{\labs x x}$;
  \item $\labs n{\pars{\natiter\pars{\succ\,n}}}=\labs n{\labs f{\labs
    x{\pars{f\pars{\natiter\,nfx}}}}}$.
\end{itemize}
This provides a finitary relational semantics of Gödel's system $T$, 
which shows that $\FinCC$ can accomodate the standard notion of computational
iteration. This was the subject of a previous article by the second author
\citep{vaux:relT} which moreover shows that the same can be done for the
recursor variant of system $T$.

The same applies here, actually: we could very well reproduce the results of
this section, replacing $\iter$ with 
\[\rec=\fp\pars{\labs{F}{\labs{t}{\labs{\fam f}{\pars{\match\tuple{\labs
a{\labs{\fam t}{\pars{f_i\, a\fam t\tuple{F\,t_j\fam f}_{j\in J_i}}}}_{i\in I}\,t}}}}}}\]
which automatically satisfies 
\[\labs a{\labs{\fam t}{\parLabs{\fam f}{\rec\pars{\node_i\, a\fam t}\fam f}}}
=\labs a{\labs{\fam t}{\parLabs{\fam f}{f_i\, a\fam t\tuple{\rec\,t_j\fam f}_{j\in J_i}}}}.\]
We would then verify that $\rec\in\fin{\fsLazy\fam\cA\CCarrow \With_{i\in
I}\pars{\cA_i\CCarrow{\fsLazy\fam\cA}^{\with J_i}\CCarrow\cB^{\with
J_i}\CCarrow \cB} \CCarrow \cB}$ for all finiteness spaces $\fam\cA$ and
$\cB$.

\bibliographystyle{dcu}
\bibliography{bib}

\end{document}